%% file: main.tex
\PassOptionsToPackage{dvipsnames,svgnames}{xcolor}

\documentclass[11pt]{article}

\usepackage{amsmath,amsthm}

  \newtheorem{theorem}{Theorem}[section]

  \newtheorem{lemma}[theorem]{Lemma}
  \newtheorem{corollary}[theorem]{Corollary}
  
  \newtheorem{definition}[theorem]{Definition}
  
\usepackage{natbib}

\usepackage{fullpage}
\usepackage{etoolbox}
\usepackage{hyperref}

    \newcommand{\revise}[1]{#1}
\input{conf/header}

    \newcommand{\lukasnote}[1]{}
    
    \newcommand{\jacobnote}[1]{{}}
    
    \newcommand{\rasmusnote}[1]{{}}

    \newcommand{\remove}[1]{{}}

\input{conf/abbreviations}

\begin{document}

%%
%% The "title" command has an optional parameter,
%% allowing the author to define a "short title" to be used in page headers.
\title{Lower Bounds for Private Graph Optimization Problems using Reconstruction Attacks}

\author{
  Jacob Imola\\
  University of Waterloo\\
  \texttt{jimola@uwaterloo.ca}
  \and
  Rasmus Pagh\\
  BARC, University of Copenhagen\\
  \texttt{pagh@di.ku.dk}
  \and
  Lukas Retschmeier\\
  BARC, University of Copenhagen\\
  \texttt{retschmeier.lukas@gmail.com}
}

\maketitle
\input{content/abstract}
\input{content/introduction}
\input{content/preliminaries}
\input{content/mst}\label{sec:results}
\input{content/worst-case-bounds/minimum-weight-perfect-matching}
\input{content/hc}
\input{content/algorithms}
\input{content/conclusion}
\section{Acknowledgments}
Pagh and Retschmeier were supported by a Data Science Distinguished Investigator grant from Novo Nordisk Fonden, and are part of BARC, supported by the VILLUM Foundation grant 54451. Imola partially completed this work while working at BARC, University of Copenhagen, and was supported by the same grants.

%\input{content/merged/content/results/linfty} % As discussed with R. Move into seperate note 
%\input{content/clustering-lb/introduction}
%\input{content/clustering-lb/lb}
%\input{content/clustering-lb/fw}

%%
%% The next two lines define the bibliography style to be used, and
%% the bibliography file.

\newpage
\bibliographystyle{plainnat}
\bibliography{main}

\newpage
\appendix

\input{content/appendix-full}
\end{document}

%% file: conf/header.tex
\usepackage{algorithm}
\usepackage{algorithmic}
\usepackage{url}
\usepackage{thmtools,thm-restate}
\usepackage[noabbrev,capitalise]{cleveref}
\usepackage{mathtools}
\usepackage{xspace}
\usepackage{verbatim}
\usepackage{mathrsfs}
\usepackage{tabularx}
\usepackage{enumitem}
\usepackage{derivative}
\usepackage{bm}
\usepackage{multirow}
\usepackage{diagbox}
\usepackage{nicematrix}
\usepackage{parskip}
\usepackage{adjustbox}
\usepackage{transparent}
\usepackage[all]{nowidow}

\usepackage{marginnote}
\reversemarginpar    % opposite side

\usepackage{subcaption}

\usepackage{tikz}
\usepackage{tikz-3dplot}
\usepackage{amsfonts}

\usepackage{tikz-3dplot}
\usepackage{conf/tikzit}
\usetikzlibrary{arrows}
\usepackage{pgf}
 \input{conf/tikz.tikzstyles}
\usetikzlibrary{positioning,chains,fit,shapes,calc,er,automata,graphs,graphs.standard,shapes.symbols}
\usepackage[format=plain,
            labelfont={bf,it},
            textfont=it]{caption}
            
\usepackage{graphicx} % Required for inserting images
\usepackage{appendix}

\usepackage{dsfont}

\newcommand*\ie{i.\kern.1em e., }
\newcommand*\eg{e.\kern.1em g., }
\newcommand*\cf{c.\kern.1em f.\ }
\newcommand*\almev{a.\kern.1em e.\ }

\theoremstyle{plain}
\crefname{claim}{Claim}{Claims}
\crefname{fact}{Fact}{Facts}

\theoremstyle{definition}

\newtheorem{observation}{Observation}

\theoremstyle{plain}
  \newtheorem*{theorem*}{Theorem}

\newcommand{\ignore}[1]{}

\newcommand{\Ex}[1]{\bE \left[ #1 \right]}

\renewcommand{\Pr}[1]{\bP \left[ #1 \right]} % Probability
\newcommand{\Pru}[2]{\underset{ #1 }\bP \left[ #2 \right]}

\DeclarePairedDelimiter\abs{\lvert}{\rvert}%
\DeclarePairedDelimiter\norm{\lVert}{\rVert}%

\makeatletter
\let\oldabs\abs
\def\abs{\@ifstar{\oldabs}{\oldabs*}}
\let\oldnorm\norm
\def\norm{\@ifstar{\oldnorm}{\oldnorm*}}
\makeatother

\newcommand{\x}{\ensuremath{\vec{x}}}
\newcommand{\X}{\ensuremath{\vec{x}}}
\newcommand{\Xleqi}{\ensuremath{\vec{x}_{<i}}}
\newcommand{\Xgeqi}{\ensuremath{\vec{x}_{>i}}}

\newcommand{\W}{\ensuremath{\vec{w}}}

\newcommand{\R}{\ensuremath{\mathbb{R}}}

\newcommand{\cA}{\ensuremath{\mathcal{A}}}
\newcommand{\cB}{\ensuremath{\mathcal{B}}}

\newcommand{\cD}{\ensuremath{\mathcal{D}}}
\newcommand{\cE}{\ensuremath{\mathcal{E}}}
\newcommand{\cF}{\ensuremath{\mathcal{F}}}
\newcommand{\cG}{\ensuremath{\mathcal{G}}}

\newcommand{\cO}{\ensuremath{\mathcal{O}}}
\newcommand{\cP}{\ensuremath{\mathcal{P}}}

\newcommand{\cS}{\ensuremath{\mathcal{S}}}
\newcommand{\cT}{\ensuremath{\mathcal{T}}}

\newcommand{\cY}{\ensuremath{\mathcal{Y}}}
\newcommand{\cX}{\ensuremath{\mathcal{X}}}

\newcommand{\bE}{\ensuremath{\mathbb{E}}}

\newcommand{\bP}{\ensuremath{\mathbb{P}}}

\numberwithin{equation}{section}
            
\renewcommand{\vec}[1]{\mathbf{#1}}

\crefname{theorem}{Theorem}{Theorems}
\Crefname{theorem}{Theorem}{Theorems}

\crefname{lemma}{Lemma}{Lemmas}
\Crefname{lemma}{Lemma}{Lemmas}

\crefname{definition}{Definition}{Definitions}
\Crefname{definition}{Defintion}{Definitions}

\colorlet{linkcolour}{DarkBlue}

\hypersetup{colorlinks=true, linkcolor=linkcolour, citecolor=linkcolour, urlcolor=linkcolour,}

\renewcommand{\epsilon}{\varepsilon}

\DeclareMathOperator{\EX}{\mathbb{E}}% expected value
\renewcommand\vec{\mathbf}

\usepackage{bbold}

\newcommand{\PAREN}[1]{{\left( {#1} \right)}}

\newcommand{\mnom}{\ensuremath{\operatorname{MultiNom}}}
\newcommand{\lap}{\ensuremath{\operatorname{Lap}}}

\newcommand{\Bin}{\ensuremath{\operatorname{Bin}}}

\newcommand{\hccost}[1]{\textnormal{cost}_{#1}}

\def\ra{\mathsf{Rec}} % Synthesized with first part.
\def\hc{\mathsf{HC}}
\newcommand{\suc}{\ensuremath{\operatorname{Succ}}}
\def\mstalg{\ensuremath{\cA_{\operatorname{MST}}}}

\newcommand{\mwpmLong}{\textnormal{\textsc{Minimum-Weight Perfect Matching}}\xspace}
\newcommand{\mstLong}{\textnormal{\textsc{Minimum Spanning Tree}}\xspace}
\newcommand{\hcLong}{\textnormal{\textsc{Hierarchical Clustering}}\xspace}
\newcommand{\ssptLong}{\textnormal{\textsc{Single-Source Shortest-Path Trees}}\xspace}

\newcommand{\enc}{\ensuremath{\operatorname{Enc}}}

\newcommand{\Rec}{\ensuremath{\operatorname{Rec}}}
\newcommand{\Aenc}{\ensuremath{\operatorname{Enc}}}
\newcommand{\Adec}{\ensuremath{\operatorname{Dec}}}

    \definecolor{nicepurple}{HTML}{802c7f}
    \definecolor{orangeish}{HTML}{f1a340} %orange
    \definecolor{grayish}{HTML}{f7f7f7} %white-gray
    \definecolor{purpleish}{HTML}{998ec3} %purple
    \definecolor{blueish}{HTML}{004488}
    \definecolor{darkgreen}{RGB}{2,100,64} % Even darker
    \definecolor{lightgreen}{HTML}{b2f2bb}
    \definecolor{lightblueish}{RGB}{230,244,255}

\let\oldvec\vec
\renewcommand{\vec}[1]{\oldvec{\MakeLowercase{#1}}}

%% file: conf/tikz.tikzstyles
\tikzstyle{BLACKLARGE}=[draw=black, shape=circle, fill=black, inner sep=3pt]
\tikzstyle{BLACK}=[draw=black, shape=circle, fill=black, inner sep=1.5pt]
\tikzstyle{DOM}=[fill={rgb,255: red,122; green,0; blue,42}, draw=black, shape=circle, inner sep=3pt, line width=0.5px]
\tikzstyle{NONE}=[fill={rgb,255: red,239; green,66; blue,109}, draw=black, shape=circle, inner sep=3pt, line width=0.5px]
\tikzstyle{NTWO}=[fill={rgb,255: red,111; green,176; blue,118}, draw=black, shape=circle, inner sep=3pt, line width=0.5px]
\tikzstyle{NTHR}=[fill={rgb,255: red,252; green,199; blue,18}, draw=black, shape=circle, inner sep=3pt, line width=1pt]
\tikzstyle{GRAYN}=[fill={rgb,255: red,213; green,213; blue,213}, draw=black, shape=circle, inner sep=1.5pt, line width=1pt]
\tikzstyle{BRGRAY}=[fill={rgb,255: red,236; green,236; blue,236}, draw=none, shape=circle, inner sep=3pt, line width=1pt]
\tikzstyle{LIGHTRED}=[fill={rgb,255: red,199; green,83; blue,48}, draw=black, shape=circle, inner sep=3pt, line width=1pt]

\tikzstyle{TEXTSTD}=[fill=white, draw=black, shape=rectangle, tikzit shape=rectangle, text width=3cm, rounded corners]
\tikzstyle{TEXTHIGH}=[fill={rgb,255: red,231; green,245; blue,255}, draw=black, shape=rectangle, tikzit shape=rectangle, text width=3cm, rounded corners]
\tikzstyle{TEXTHIGHGRAY}=[fill={rgb,255: red,220; green,220; blue,220}, draw=black, shape=rectangle, tikzit shape=rectangle, text width=3cm, rounded corners]
\tikzstyle{TEXTHIGHGREEN}=[fill={rgb,255: red,149; green,242; blue,145}, draw=black, shape=rectangle, tikzit shape=rectangle, text width=3.4cm, rounded corners]
\tikzstyle{TEXTHIGHYELLOW}=[fill={rgb,255: red,250; green,250; blue,160}, draw=black, shape=rectangle, tikzit shape=rectangle, text width=3cm, rounded corners]

\tikzstyle{EDGE}=[-, fill=none, line width=1.5pt]
\tikzstyle{BLUE}=[-, draw={rgb,255: red,0; green,101; blue,189}, line width=1.5px]
\tikzstyle{GRAY}=[-, fill={rgb,255: red,128; green,128; blue,128}]
\tikzstyle{DARKGREEN}=[-, fill=none, draw={rgb,255: red,111; green,176; blue,118}, line width=1.25pt]
\tikzstyle{big dash}=[-, dashed, dash pattern=on 4mm off 2mm, fill={rgb,255: red,178; green,255; blue,253}]
\tikzstyle{big dash thick}=[-, thick, dashed, dash pattern=on 4mm off 2mm, fill={rgb,255: red,178; green,255; blue,253}]
\tikzstyle{new edge style 0}=[-, fill={rgb,255: red,230; green,255; blue,252}]
\tikzstyle{FILLGREEN}=[-, fill={rgb,255: red,111; green,176; blue,118}]
\tikzstyle{FILLBLUE}=[-, fill={rgb,255: red,247; green,249; blue,255}, line width=0.9px]
\tikzstyle{FILLPURPLE}=[-, fill={rgb,255: red,255; green,233; blue,239}]
\tikzstyle{FILLDARKBLUE}=[-, fill={rgb,255: red,238; green,230; blue,255}]
\tikzstyle{FILLDARKBLUEINVISIBLE}=[-, draw=none, fill={rgb,255: red,238; green,230; blue,255}]
\tikzstyle{FILLGREENINVISBLE}=[-, draw=none, fill={rgb,255: red,232; green,255; blue,207}]
\tikzstyle{ULTRAGRAY}=[-, draw={rgb,255: red,213; green,213; blue,213}, line width=0.9px]

\tikzstyle{EDGEDASHED}=[-, dashed, fill=none, line width=1.3pt]
\tikzstyle{simple directed edge}=[->, draw=black, thick, line width=1.5pt]
\tikzstyle{simple dashed directed edge}=[->, draw=black, dashed]
\tikzstyle{SLIGHTLYDASHED}=[-, dotted, fill=none, draw={rgb,255: red,128; green,128; blue,128}]
\tikzstyle{simple dashed directed edge not thick}=[->, draw=black, line width=0.45mm, dashed]
\tikzstyle{simple}=[->, draw=black, thick]
\tikzstyle{simple}=[draw=black, fill=none, tikzit draw=black, ->]
\tikzstyle{invis}=[-, draw=none]
\tikzstyle{g5edge}=[-, line width=1pt]
\tikzstyle{DashedBlue}=[-]
\tikzstyle{new edge style 1}=[-, fill=none, draw={rgb,255: red,0; green,99; blue,185}, line width=1pt, dashed]
\tikzstyle{new edge style 2}=[-, draw={rgb,255: red,252; green,199; blue,18}, line width=1.5pt]
\tikzstyle{NiceRed}=[-, fill=none, draw={rgb,255: red,199; green,83; blue,48}, line width=0.75pt]
\tikzstyle{NiceRedThick}=[-, fill=none, draw={rgb,255: red,199; green,83; blue,48}, line width=1.2pt]

\tikzstyle{thick}=[-, fill=none, line width=1.25pt]
\tikzstyle{noise arrow stopper}=[-, line width=0.35mm, draw={rgb,255: red,116; green,66; blue,128}]
\tikzstyle{black dotted}=[-, dotted, line width=1pt]
\tikzstyle{new edge style 1}=[line width=0.5mm, draw=black, ->, fill=none]
\tikzstyle{gray edge}=[-, fill=none, draw={rgb,255: red,191; green,191; blue,191}, line width=0.75pt]
\tikzstyle{striped-ultragray}=[-, draw={rgb,255: red,64; green,64; blue,64}, dotted, fill=white]
\tikzstyle{FILLLIGHTRED}=[-,draw={rgb,255: red,191; green,0; blue,64},line width=0.75pt, fill={rgb,255: red,255; green,239; blue,234}]
\tikzstyle{FILLLIGHTREDB}=[-,draw={rgb,255: red,191; green,0; blue,64},line width=1.5pt, fill={rgb,255: red,255; green,239; blue,234}] % BIGGER VERSION

%% file: conf/abbreviations.tex
\newcommand{\probp}{\ensuremath{ \frac{0.5}{d_{\max}}}}

%% file: content/abstract.tex
\begin{abstract}
    This paper studies fundamental graph optimization problems under differential privacy (DP) and shows new, reconstruction-based lower bounds.
    We consider a graph $G = (V, E, \vec{w})$ where the vertex set $V$ and edges $E$ are public and the weights $\vec{w}:E\rightarrow \R$ must be kept differentially private under an $\ell_1$ neighboring relation.
    
    For the problems of releasing a minimum-weight spanning tree and a minimum-weight perfect matching, we show new, tight error bounds of $\Omega(n\cdot\log(m/n)/\epsilon)$ on worst-case graphs with $n$ vertices and $m>2n$ edges.
    The upper bounds are known pure DP algorithms while the new lower bound holds even under approximate $(\varepsilon,\delta)$-DP as long as $\delta \leq (n/m)^{\Omega(1)}$.
    Our lower bounds improve the $\Omega(n/\epsilon)$ lower bounds of Sealfon (PODS~'16).
    The fact that approximate DP does not reduce error for MST under the $\ell_1$ neighboring relation contrasts with the recent upper bound of Pagh et al. (PODS~'25) which shows that approximate DP allows much better error under the $\ell_\infty$ neighboring relation.

    Going beyond worst-case graphs, we give lower bounds for large families of sparse graphs with expansion properties. We show a lower bound of $\Omega(n / \epsilon)$ for the minimum spanning tree for any graph where the minimum cut is at least $\Omega(\log(n))$. 
    Finally, we consider the problem of private hierarchical clustering under Dasgupta's cost function (STOC~'16) and show the first approximate DP lower bound parameterized by the minimum weight of a balanced cut. This extends lower bounds of Deng et al. (ICLR~'25) to general graphs and to approximate DP.

    %, i.e. two graphs $G = (V, E, \vec{w})$ and $G' = (V, E, \vec{w'})$ are neighboring if $\norm{\vec{w}-\vec{w'}}_1 \leq \Delta_1$ for some sensitivity parameter $\Delta_1$. 
    
%Our lower bound is based on encoding a private binary vector into the edges of a perfect matching.
%Assuming this parameter grows with $\Omega(n)$, we can reconstruct a large fraction of the original data if the given hierarchical clustering tree with Dasgupta cost less then $o(n^2)$, thus contradicting privacy.
\end{abstract}

%% file: content/introduction.tex
\section{Introduction}

Graph-structured data is a fundamental abstraction in modern data management.
Weighted graphs naturally arise, for example, from relational data describing transportation networks, communication infrastructure, recommendation systems, similarity relations, and interactions between entities.
Many important analysis tasks on such data reduce to classical graph optimization problems, including computing sparse connectivity structures such as minimum spanning trees, assignment structures such as minimum-weight perfect matchings, and multiscale summaries such as hierarchical clusterings.
These structures serve as compact summaries that support downstream tasks such as routing, indexing, similarity search, visualization, and unsupervised learning.

In many applications, however, the edge weights encode sensitive aggregate information derived from individuals.
For example, edge weights may represent traffic intensities collected from users, communication frequencies between populations, similarity scores derived from behavioral data, or compatibility measures computed from medical records.
In such settings, a single individual may affect multiple edge weights simultaneously, but only by a small amount in each coordinate.
%Compared to edge-level privacy models that treat edges as discrete records, the $\ell_1$ model naturally captures continuous-valued and aggregate data contributions, making it particularly well suited for weighted graph optimization problems.
This motivates studying differential privacy under an $\ell_1$ neighboring relation on \emph{edge weights}, where neighboring weighted graphs differ by bounded total change in their edge-weight vectors\revise{~\citep{Sealfon_2016}}.
Other researchers have applied this appealing privacy model to other graph problems, including hierarchical clustering~\citep{deng2025pricedifferentialprivacyhierarchical} and all pairs shortest \revise{distances}~\citep{Chen2023,greg_2024}.
%More recently, private graph optimization under alternative neighboring relations has received renewed attention, including nearly optimal algorithms for MST under $\ell_\infty$ sensitivity~\citep{pagh_2025} and algorithms and lower bounds for private hierarchical clustering~\citep{deng2025pricedifferentialprivacyhierarchical}.

{\it Examples.}
In a transportation network, an MST can represent a sparse backbone connecting cities or distribution centers using low-cost routes inferred from aggregate traffic data.
Publishing such a structure enables routing, infrastructure planning, and resilience analysis while dramatically compressing the underlying weighted network.
Similarly, in a recommendation or similarity graph, a hierarchical clustering, also known as a dendrogram, provides a navigable taxonomy of users, products, or documents derived from sensitive behavioral correlations.
These structures often involve sensitive data, such as road traffic data or user browsing activity, and thus are good candidates for private data release.
%The resulting hierarchy is itself a data product: it can support search, exploration, indexing, and coarse-to-fine recommendation without exposing the full matrix of pairwise similarities.
%More generally, graph optimization outputs often serve as compact, interpretable summaries of weighted relational data, making them particularly attractive targets for private data release.
%\subsection{Background}
%\paragraph{Setting} In this work, we consider \emph{edge-weight differential privacy} proposed by \cite{Sealfon_2016}, where the graph topology with $n$ vertices and $m$ edges itself $G = (V, E)$ is assumed to be publicly known, but the weight vector $\vec{W}$ should be kept private.
%We consider both the $\ell_1$ neighboring relationship, where neighboring graphs $G = (V, E, \W)$ and $G' = (V,E, \W')$ can differ by either $\|\W-\W'\|_1 \leq 1$.
%In this work we prove new tight bounds for these problems where it turns out that the bounds can be achieved by creating a private synthetic graph.
%Our work is thus guided by the following question:%, considering both the $\ell_1$ and $\ell_\infty$ neighboring relationships:

\revise{
\paragraph{Problem Setting}
In this paper, we study the \revise{problems of} \mstLong~({\bf MST}), \mwpmLong~({\bf MWPM}), and \hcLong~({\bf HC}) in the edge-weight model of differential privacy (DP), where the graph topology $G = (V, E)$ with $n$ vertices and $m$ edges is publicly known, but the edge weights $\W \in \R^m$ must be kept private.
A randomized algorithm $\cA$ must output a solution (e.g., a spanning tree, perfect matching, or dendrogram) of the known topology $G$ that approximately minimizes a problem-specific cost function with respect to the private weights. 
The privacy constraint requires that the output distribution of  $\cA$ is similar for neighboring weight $\W, \W'$ vectors that are close in $\ell_1$ distance.
We measure the expected error additively as the difference between the cost of a released solution and the optimum.
We defer a formal introduction to \cref{sec:preliminaries}.
}

\subsection{Our Contributions}
We derive \emph{two} types of \revise{reconstruction-based} lower bounds for \revise{approximate DP and the $\ell_1$ neighboring relationship.}
\revise{We first consider lower bounds on explicit \emph{worst-case} topologies.}
For MST and MWPM, we provide lower bounds showing that the expected additive error must grow with a factor of $\Omega\big(n \cdot  \log (\frac{m}{n})\big)$.
\revise{These lower bounds work via a new reconstruction-attack setup using a dataset from a large domain where more information can be reconstructed compared to a binary dataset.}

This lifts the prior lower bound of $\Omega(n)$ by a logarithmic factor and tightly \revise{matches the upper bound obtained by the simple input perturbation algorithm that adds noise to each weight independently~\citep{aamand-personal-communication,Sealfon_2016}.
}
The fact that input perturbation is an optimal approach for these problems is additional evidence that input perturbation is often a good approach to graph optimization problems, as recently seen in \citet{pagh_2025}.
Furthermore, the fact that this upper bound under pure differential privacy ($\delta = 0$) can be matched by lower bounds shown under approximate differential privacy (with polynomially small $\delta$) is somewhat surprising given the polynomial separation \revise{by a factor of $\cO(\sqrt{n})$} between pure and approximate differential privacy for \mstLong under the $\ell_\infty$ neighboring relation~\citep{pagh_2025}.

\revise{The lower bounds in the previous paragraph, and indeed many other lower bounds in edge-weight DP, use worst-case topologies that are not always realistic in a real-world setting. 
For example, the lower bounds for MST (with $\delta > 0$) and HC use a doubly-connected star graph or the complete graph as their worst-case topology, respectively (see \cref{tab:results}), leaving it open whether better performance is possible on sparser or more realistic topologies.}

\revise{Our second type of lower bounds show that it is not possible to achieve much better performance for large classes of sparse topologies.} 
For MST, we show that \revise{\emph{any}} topology with \emph{minimum cut} at least $\Omega\big(\log(n)\big)$ must incur error $\Omega(n)$, differing from the lower bound of the worst-case topology by a $\log(\frac{m}{n})$ factor. \revise{This complements a lower bound of~\citet{hladik_2024} which establishes that the error must grow linearly with the diameter of the topology but holds for pure DP only.} 
For HC under Dasgupta's cost function \citep{dasgupta2016cost}, we obtain a fully-parameterized lower bound  that grows with the expansion (in the sense of expander graphs), and decreases with the maximum degree of the graph. 
For constant-degree regular expander graphs, our lower bound is $\Omega(n^2)$, which matches the state-of-the-art lower bound of~\citet{deng2025pricedifferentialprivacyhierarchical, pmlr-v202-imola23a} but generalizes it to approximate DP and beyond the complete graph.

%We revisit two graph optimization problems considered in~\citet{Sealfon_2016}, 
%\mstLong~({\bf MST}) and \mwpmLong~({\bf MWPM}) and obtain new lower bounds that match Sealfon's upper bound for sufficiently dense, worst-case graphs.
%In fact, the lower bound matches a recent improvement of Sealfon's upper bounds~\citep{aamand-personal-communication} for all graph densities.
%In addition, we consider the \hcLong({\bf HC}) problem introduced in a seminal paper by~\citet{dasgupta2016cost}, and recently studied under privacy constraints by~\citet{deng2025pricedifferentialprivacyhierarchical}, extending their lower bound to hold for approximate differential privacy.
A summary of all our new results together with known bounds appears in \cref{tab:results}.
Note that the error bounds depend only on graph parameters, so the error is independent of the weight (or cost) of a solution.
Thus these problems can be solved privately with good error bounds whenever the weight (or cost) of an optimal solution is sufficiently large compared to the error bound.

\input{tables/results-l1}

%In addition to considering worst-case graphs, we consider lower bounds that hold for any graph in broad classes such as graphs that do not have a small min-cut and graphs with expansion properties.
%In particular, for \hcLong we show that the lower bound of~\citet{deng2025pricedifferentialprivacyhierarchical} can holds even for sparse expander graphs under approximate differential privacy.

\paragraph{Techniques} 

%Similar to the approach of \citet{Sealfon_2016} our lower bounds work by demonstrating that any algorithm with low error has the property that information carefully encoded into the graph weights can be partially reconstructed from its output.

Our lower bounds work via \emph{reconstruction attacks}, where a private dataset is encoded into the weights of the graph topology, and based on the solution to the graph problem, a decoding process attempts to reconstruct the dataset. If the error of the graph algorithm is low enough, we show that the reconstruction will be successful, contradicting DP. \revise{We introduce several new techniques in designing these reconstruction attacks.}

\revise{For our lower bounds on worst-case topologies, our sharper bounds come from encoding a dataset in $[m/n]^n$ rather than a binary dataset, which allows for more information to be reconstructed. To do this, we need to use a new encoding scheme on a dense topology.
For \mstLong, we use a complete bipartite graph whose left vertices represent coordinates and whose right vertices represent possible values, assigning low weight to the edge connecting each coordinate $i$ to its value $x_i$. 
Thus, a sufficiently accurate spanning tree reveals the value of many coordinates, contradicting the reconstruction guarantees imposed by differential privacy.
Using the same construction again for \mwpmLong, we note that a \emph{perfect} matching can not encode the same value multiple times.
Therefore, we draw a connection to the classic balls-into-bins problem showing that a large fraction of the dataset is collision-free with high probability.
It turns out to be sufficient to run the reconstruction attack on this non-colliding subset of the random dataset.
}

As recently observed by~\citet{aamand-personal-communication} the \revise{known} lower bounds for MST and MWPM are matched by Sealfon's input perturbation algorithms that simply add Laplace noise to each edge weight and run the non-private optimization algorithm as post-processing.
\revise{ For completeness, we include a simple proof of the improved bound of~\citet{aamand-personal-communication}.
\revise{Interestingly, our worst-case lower bound asymptotically matches this upper bound, even when parameterized by the number of vertices \emph{and} edges.}
}

For our second type of lower bound, we focus on reconstructing binary datasets and encode them \emph{randomly} into the edge weights \revise{of a given topology}. 
This requires analyzing the behavior of the \revise{random weights} using ideas from random graph theory. 
\revise{At a high level, we need to establish that a low-cost optimal solution exists in the random weights, and that the decoding procedure will reveal many more $0$-edges than $1$-edges.}
For \mstLong, \revise{it is enough to assume that the topology has min-cut at least $\log(n)$; a union bound over all cuts in the graph will then ensure that a MST of weight $0$ exists. 
Then, a simple decoding procedure where the algorithm guesses all data points in the tree have weight $0$ is an effective reconstruction attack.} 

\revise{For \hcLong, we have to further subsample the edges at a rate of $\frac{1}{d}$ from the topology (assuming the topology is $d$-regular) before encoding the dataset; this ensures that a low-cost clustering exists. 
Given a returned HC, the decoder guesses that all data points in edges crossing the top-most balanced cut (\cref{lem:hc-to-bc}) of the HC are $0$. 
This opens up an interesting challenge where it is possible that returned low-cost HC adversarially has few sampled $0$-weight edges crossing its balanced cut, which would reveal too little information in the dataset. 
To circumvent this, we leverage the fact that the HC algorithm cannot distinguish between sampled $0$-weight edges and non-sampled edges, meaning that any tree it returns cannot adversarially avoid sampled edges, and it will reveal sufficiently many $0$s in the dataset.}

%\lukas{((LR: TODO: Highlight novel insights in HC: (e.g. using a random sparse encoding to enable cost analysis with random graph theory and leveraging conditional independence between zero-weight edges and the returned tree) which to our knowledge haven’t been utilized before in such lower bounds. We will highlight these contributions earlier and more explicitly in the paper. ))}

Our lower bounds also use generalized reductions from differential privacy to reconstruction rates (Section~\ref{sec:preliminaries}), which may find more application in the edge weight setting.

\revise{We note that the assumption on $\delta < (1/n)^{\Omega(1)}$ in our lower bounds is common in the privacy literature because, for example a mechanism that releases a uniformly drawn $\delta$-fraction of the dataset would satisfy $(0, \delta)$-DP.
We refer to the discussion in \citet{Vadhan_2017}.
}
Nevertheless, it remains an interesting open question to show tight bounds for larger values of $\delta$.

\subsection{Related Work}

\paragraph{Graph optimization problems.}
%\mstLong is a classical graph optimization problem with early algorithms due to Jarník, Kruskal, and Prim~\citep{jarnik1930,kruskal_1956,Prim_1957}.
%Its applications range from network design to tree-structured graphical models and synthetic data generation~\citep{chow_1968,McKenna_Miklau_Sheldon_2021}.
%\mwpmLong is a canonical model for pairing and assignment problems, including bipartite settings where two sides of a market or network (e.g., doctors and hospitals, patients and kidneys) must be paired to minimize total cost.
%A foundational polynomial-time approach is Edmonds' blossom algorithm~\citep{Edmonds_1965}; for integer weights bounded by $N$, the scaling algorithm of~\citet{Duan_Pettie_Su_2018} solves the general-graph problem in $O(m\sqrt{n}\log(nN))$ time.
\mstLong~and \mwpmLong~are classic graph optimization problems, with applications including tree-structured graphical models, synthetic data generation, and market or network matching. 
\hcLong is an unsupervised learning technique that - unlike \emph{flat} clustering methods such as $k$-means - organizes data into clusters at multiple levels of granularity.
The problem dates back more than half a century~\citep{Ward1963} and has applications across many domains, including network analysis~\citep{Leskovec2020}, text analysis \citep{steinbach2000}, and biology~% and various subareas in biology 
\revise{\citep{jardine1968model,sneath2005numerical,diez2015novel,eisen1998cluster,sotiriou2003breast}}.
%\lukas{((TODO)) cost function + paper by } % sry, Comes later.

\paragraph{Reconstruction attacks.}
Our lower-bound proofs are part of a long line of reconstruction attacks in differential privacy.
The seminal work of~\citet{dinur2003revealing} showed that answering too many queries too accurately can allow reconstruction of most of a private database.
Subsequent work developed this connection into general lower-bound methods for private data release, including iterative constructions~\citep{Gupta2012}, lower bounds based on reconstruction and fingerprinting ideas~\citep{de2012lower}, and robust traceability arguments~\citep{dwork2015robust}.
%Many other techniques are now known to show lower bounds for differentially private algorithms, including packing lower bounds, fingerprinting techniques, communication complexity reductions, and discrepancy lower bounds.
%A modern exposition of such techniques appears, for example, in~\citet{Vadhan_2017}.
In the graph setting,~\citet{Sealfon_2016} and~\citet{eden2025triangle} achieve lower bounds via reconstruction attacks; we improve on the results of Sealfon, while the latter result applies to a different privacy model.

\paragraph{Private graph algorithms.}
%There is a broad literature on differential privacy for graph-structured data under various notions of neighboring databases, e.g., edge-level, node-level,  weighted, and dynamic (continual observation) notions of neighboring databases.
Early graph-DP work studied edge-level statistics such as degree distributions~\citep{Hay2009}, while node-DP was formalized and studied by~\citet{Kasiviswanathan2013}.
Under edge-level DP, \citet{Eli2020} gave near-optimal algorithms for approximating cuts via synthetic graphs, and \citet{Liu2024} obtained optimal bounds for several other cut-related private graph approximation tasks.
In a \emph{local} version of edge-DP where each node releases its own perturbed view of the graph, \citet{eden2025triangle} studied triangle counting and used reconstruction-style lower-bound techniques.
%Recent work has considered graph streams and dynamic graphs, including continual release lower bounds~\citep{aryanfard2025improved}, sublinear-space algorithms in the continual release model~\citep{epasto2025}, and fully dynamic algorithms for edge-DP graph databases~\citep{raskhodnikova2025}.
%Closest in spirit to our edge-weight lower bounds are works that encode hard instances into graph weights or graph structure and argue that an accurate output reveals too much information about the private input. RP: Need concrete references if this is to be included
Private matching in a different privacy model and objective has been studied in discrete allocation and graph models~\citep{justin,dinitz_2025}. These privacy models are distinct from the edge-weight model, and the techniques do not easily carry over.

\paragraph{Edge-weight private graph optimization.}
The edge-weight model was introduced by~\citet{Sealfon_2016}, modeling the situation where graph structure is public but weights are sensitive.
He studied differentially private release of approximate shortest paths and distances, and also gave algorithms and lower bounds for releasing \mstLong, \ssptLong, and \mwpmLong under the $\ell_1$ neighboring relation on weights.
A recent line of work has studied private release of \emph{all-pairs shortest-path distances} under the $\ell_1$ neighboring relation, with lower bounds shown using discrepancy-based methods~\citep{fan_2022,Chen2023,greg_2024}.
MST was studied under the $\ell_\infty$ neighboring relation in \citet{hladik_2024} and \citet{pagh_2025}, showing tight error bounds in pure and approximate DP settings, respectively. \revise{To our knowledge, \citet{hladik_2024} provide the only other topology-dependent lower bounds for edge-weight DP, showing the error of MST for the $\ell_1$ neighboring relation must grow with $\Omega(D)$, where $D$ is the diameter of the topology. Their argument uses the packing technique, which restricts it to pure DP ($\delta = 0$).}

\paragraph{Private hierarchical clustering.}
The non-private objective framework for hierarchical clustering was introduced by~\citet{dasgupta2016cost} and recently won a STOC Test of Time Award \citep{sigact_stoc_tot}.
It was further developed by~\citet{Cohenaddad2019} and the differentially private version was first studied by~\citet{pmlr-v202-imola23a}, who gave algorithms and proved a packing-based $\Omega(n^2/\epsilon)$ additive-error lower bound for pure edge-DP.
\citet{deng2025pricedifferentialprivacyhierarchical} subsequently studied hierarchical clustering in the edge-weight model and showed that hard instances can be embedded into the weights of the complete graph, implying an $\Omega(n^2/\epsilon)$ lower bound under pure DP.
Our hierarchical-clustering lower bound follows the edge-weight line of~\citet{deng2025pricedifferentialprivacyhierarchical}, but strengthens the picture by applying to approximate DP and to \revise{beyond the complete topology}.
\revise{They also derived an algorithm achieving $\tilde{O}(\frac{1}{\epsilon})$ multiplicative error for graphs in which every edge weight is at least $1$; we do not compare with this algorithm as this assumption is too specialized for our purposes.}

%% file: tables/results-l1.tex
%\begin{table}[]
%    \begin{tabular}{ccccc}
%                  & \textbf{$\ell_1$-Pure}             & \textbf{$\ell_{\infty}$-Pure}        & \textbf{$\ell_1$-Approximate}                & \textbf{$\ell_{\infty}$-Approximate}           \\
%    \textbf{MST}  & $\Theta(\frac{n}{\epsilon}\log n)$ & $\Theta(\frac{n^2}{\epsilon}\log n)$ & $\Theta(n \sqrt{\log n})$                    & $\tilde{\Theta}(n^{3/2})$                      \\
%    \textbf{SSPT} & $\Theta(\frac{n}{\epsilon})$       & $\Theta(\frac{n^3}{\epsilon})$       & $\Theta(\frac{n}{\epsilon})$                 & $\Theta(\frac{n^2}{\epsilon})$                 \\
%    \textbf{MWPM} & $\Theta(\frac{n}{\epsilon}\log n)$ & $\Theta(\frac{n^2}{\epsilon}\log n)$ & $\Omega(n)$ / $\mathcal{O}(n \sqrt{\log n})$ & $\Omega(n)$ / $\mathcal{O}(n^2 \sqrt{\log n})$
%    \end{tabular}
%    \caption{Landscape for the problems of privately releasing \mstLong~({\bf MST}), Single-Source~\ssptLong ({\bf SSPT}) and \mwpmLong~({\bf MWPM}).
%    The bounds for $(\epsilon, \delta)-$DP holds for $\delta \leq 1/n$.}
%\end{table}

% Please add the following required packages to your document preamble:
% \usepackage{multirow}
% \usepackage[table,xcdraw]{xcolor}
% Beamer presentation requires \usepackage{colortbl} instead of \usepackage[table,xcdraw]{xcolor}
%TODO: replace cline with cdashedline
\begin{figure}[t]
\def\arraystretch{1.2}%  1 is the default, change whatever you need
\centering
\begin{tabular}{c|clp{3.6cm}c}
    {\bf Problem} & {\bf Error Bound} & {\bf Reference} & {\bf Graph type} & {\bf Parameters}\\
    \hline
    \multirow{5}{*}{MST} & $\mathcal{O}(n \cdot \log n)$ & \cite{Sealfon_2016} & Any & $\delta = 0$\\
    & $\mathcal{O}(n \cdot \log(m/n))$ & \cite{aamand-personal-communication} & Any & $\delta = 0$\\
    & $\Omega(n)$ & \cite{Sealfon_2016} & Multi-edge star graph & $\delta < c_\varepsilon$\\
    &$\Omega(D)$ & \cite{hladik_2024} & Diameter $D$ & $\delta = 0$ \\
    & $\Omega(n \cdot \log(m/n))$ & \textcolor{blue}{\Cref{th:lb-mst}} & Worst-case & $\delta < (n/m)^{\Omega(1)}$\\
    & $\Omega(n)$  & \textcolor{blue}{\Cref{coro:mst-lb}} & Min-cut $\geq 5 \log(n)$ & $\delta < \frac{c_\epsilon}{m}$\\
    \hline
    \multirow{4}{*}{MWPM} & $\mathcal{O}(n \cdot \log n)$ & \cite{Sealfon_2016} & Any & $\delta = 0$\\
    & $\mathcal{O}(n \cdot \log(m/n))$ & \cite{aamand-personal-communication} & Any & $\delta = 0$\\
    & $\Omega(n)$ & \cite{Sealfon_2016} & Collection of 4-cycles & $\delta < c_\varepsilon$\\
    & $\Omega(n \cdot \log(m/n))$ & \textcolor{blue}{\Cref{th:lb-matching}} & Worst-case & $\delta < (n/m)^{\Omega(1)}$\\
    \hline
    \multirow{3}{*}{HC} & $\mathcal{O}(n^2 \log n)$ & \cite{pmlr-v202-imola23a} & Any & $\delta = 0$\\
    & $\Omega(n^2)$ & \cite{deng2025pricedifferentialprivacyhierarchical} & Complete graph & $\delta = 0$\\
    & $\Omega(n^2)$ & \textcolor{blue}{\Cref{thm:hc-reconstruction}} & Degree-$\cO(1)$ expander & $\delta < \frac{c_\epsilon}{m}$\\
    %& $\Omega(\frac{\phi(G)}{d_{max}(G)}n)$ & Theorem~\ref{thm:hc-reconstruction} & Any topology with $d_{max}(G) \geq 8, \phi(G) \geq 8n$.
    \hline
    \end{tabular}
%\begin{tabular}{l|cll}
%    \multicolumn{1}{c|}{\textbf{Problem}}        & \multicolumn{1}{c}{\textbf{Error Upper Bound}}                           & \multicolumn{1}{c}{\textbf{Error Lower Bound}}                                       \\ \hline
%    \textbf{MST}   & $\mathcal{O}(n \cdot \log(m/n))\,$\scc{A} & \allbold{$\Omega(n \cdot  \log(m/n)\,)$}\scc{\cref{th:lb-mst}}      \\ 
%    \textbf{MWPM} &  $\mathcal{O}(n \cdot \log(m/n)\,)$\scc{A} & \allbold{$\Omega(n \cdot \log(m/n)\,)$}\scc{\cref{th:lb-matching}} \\ 
%    \textbf{HC}   & \allbold{$\mathcal{O}(n^2)$}\scc{\cref{thm:apsp}}                          & $\Omega(n^2\cdot\log n)$\scc{A}                                                                   \\ \hline
%\end{tabular}
\caption{Upper and lower bounds for the error of differentially private graph optimization problems: MST, MWPM, and HC under the $\ell_1$ neighboring relationship \revise{parameterized by the number of vertices $n$ and edges $m$.}
Upper bounds hold for pure differential privacy ($\delta = 0$), while lower bounds are for $(\epsilon, \delta)$-DP.
%\lukas{((NOTE) L) I guess this is only impressive if the upper bound for HC also matches. But I guess there is an exponential time alhorithm due to Jacob that achieves it up to poylog factos?}
All bounds omit a multiplicative factor $1/\varepsilon$ for readability; $c_\varepsilon > 0$ is a constant that depends only on $\varepsilon$.}
\label{tab:results}
\end{figure}
% All upper lower bounds 1/eps
% weights as sum of vvecotrs?
% Shortest paths on mutual infomraiton
% ADD THESE referens in some text.\textbf{B)} \cite{hladik_2024}}

%\lukas{TODO: Add fancy errors showing implication. And add missing boubdn matching. Dependency on delta for C)}

%% file: content/preliminaries.tex
\section{Preliminaries}\label{sec:preliminaries}
For two vectors $\W, \W' \in \R^m$, we define the \emph{Hamming distance} $d_H(\W, \W') := \sum_{i=1}^{m} \mathbb{1}[w_i \neq w_i']$ to be the number of coordinates in which they differ.
For sets $S, S'$ we extend the definition by the symmetric difference $d_H(S, S') := |S \setminus S'| + |S' \setminus S|$.
We write $\x \sim_H \x'$ if $d_H(\x,\x')\leq 1$.

\paragraph{Graph Terminology} We consider a finite, simple and undirected graph $G = (V, E, \W)$, where the set of $n$ vertices $V$ and the set of $m$ edges $E$ are public and the weight vector $\W = (w_1, \cdots, w_m) \in \R^m$ is private.
We use $\vec{w} \in \R^m$ and $w : E \to \R$ interchangeably to denote the edge weights.
%For two vertices, we shorten $w(\{u, v\})$ by writing $w(u, v)$.
We refer to $F = (V, E)$ as the \emph{topology} of $G$.
Throughout the paper, we assume $m > n$ and that $F$ is connected.
We denote by $\cG$ (respectively $\cG_\omega$) the family of all unweighted (weighted) graphs.

\paragraph{Weights of Sets and Cuts} For a subset of edges $S \subseteq E$ of edges, we denote the \emph{weight} as $w(S) = \sum_{e\in S}w_e$. We refer to a partition $A,B$ of $V$ as a \emph{cut} of $G$. The \emph{weight} of the cut is given by $w(A,B) := \sum_{uv\in E \cap (A \times B)} w_{uv}$. 
Similarly, for any edge set $E' \subseteq E$, we denote $E'(A,B) = |\{(u,v) \in E':u \in A,\, v \in B\}|$ to be the size of the cut, or the number of edges crossing between $A,B$. We refer to any cut $A,B$ satisfying $\frac{1}{3}n \leq |A|, |B| \leq \frac{2}{3}n$ as a \emph{balanced cut}.

\paragraph{\revise{Spanning} Trees and Perfect Matchings} A \emph{spanning tree} is an acyclic subset $T\subseteq E$ of size $n-1$ making the graph connected.
A \emph{matching} $M \subseteq E$ is a set of pairwise vertex-disjoint edges, i.e., $e \cap e' = \emptyset$ for any two distinct $e, e' \in M$. 
For a matching $M$ and a subset $S \subseteq V$, we let $S(M)$ denote those vertices in $S$ that are incident to an edge of $M$.
A \emph{perfect matching} is a matching that covers every vertex of $V$.
We denote the set of all spanning trees and perfect matchings of $G$ as $\cS(G)$ and $\cP(G)$, respectively.
A \emph{minimum spanning tree} (MST) is a spanning tree $T^*$ that minimizes $w(T)$ among all spanning trees $T \in \cS(G)$; analogously, a \emph{minimum weight perfect matching} (MWPM) is a perfect matching $M^* \in \cP(G)$ that minimizes $w(M)$ among all $M \in \cP(G)$.

\paragraph{Hierarchical Clustering} 
A \emph{hierarchical clustering} (HC) for a graph $G$ is represented by a rooted tree $T$
whose leaves correspond to $V$, and where each internal node corresponds to the union of the leaves in its subtree.
%and a merge of two subtrees combines the vertices contained in the subtrees until the root contains all of $V$.
Traversing the tree from the root to the leaves corresponds to successively refining the clustering.
%partitioning the vertices of $G$ into refined clusters.
In seminal work~\citep{dasgupta2016cost}, Dasgupta introduced a cost function $\hccost{w}(T)$ evaluating the quality of a hierarchical clustering, setting the stage for optimization. 
Intuitively, it charges the total weight separated at each merge times the size of the subtree at that merge, encouraging less weight to cross higher merges. Formally,

\begin{definition}[\cite{dasgupta2016cost} Dasgupta's cost objective]
    Given an input graph $G = (V, E)$ together with weights $w:E\rightarrow \R_{\geq0}$ and a hierarchical clustering tree $T$, the Dasgupta cost function $\hccost{w}$ is given by 
    \begin{align}
    \hccost{w}(T) := \sum_{\{u,v\}\in E} \abs{T[u \vee v]}\cdot  w(\{u, v\})\,,
    \end{align}
\end{definition}
where $|T[u \vee v]|$ denotes the number of leaves in the subtree induced by the lowest common ancestor of $u$ and $v$ in $T$. 
As noted by Dasgupta, we may assume without loss of generality that $T$ is a rooted \emph{binary} tree, because every non-binary hierarchical clustering tree can be transformed into a binary one without increasing $\hccost{w}(T)$.
\ifdefined\PODS \else
We illustrate this cost function on a small example in Appendix~\ref{app:details}.
\fi
\paragraph{\textbf{Differential Privacy}}
%Differential privacy (DP) \cite{Dwork2006} was developed as a tool to hide the influence of any single individual on the output of an algorithm. In the context of graphs, several different models have been proposed, for example \emph{edge}-level \cite{Hay2009} and \emph{node}-level \cite{Kasiviswanathan2013}. We refer to the survey \cite{Li2023} for a good survey for DP on graph data.
In this work, we consider \emph{edge-weight} differential privacy \cite{Sealfon_2016}, where the private information is encoded in the weights themselves. 
%We say that two vectors and $\vec{w},\vec{w'}$ are $\ell_1$-neighboring (denoted $w \sim_1 w'$), if 
We say that two graphs $G = (V, E, \vec{w})$ and $G' =(V', E', \vec{w}')$ are neighboring (denoted $G \sim G'$) if $V = V'$, $E = E'$ and $\norm{\vec{w}-\vec{w}'}_1\leq 1$, i.e., their $\ell_1$ distance is at most one.
%,\sim G'$ are neighboring, if their weight functions $\vec{w}$ 
%In this work, we define the neighboring relationship in terms of the $\ell_1$ distance if the weight vector:
\begin{definition}[\cite{Dwork2006} $(\epsilon, \delta)$-Private Algorithm]\label[definition]{def:ew-dp}
    Let $\epsilon \geq 0$ and $\delta \in [0,1]$.
    A mechanism $\cA:\cG_\omega \rightarrow \cY$ is $(\epsilon, \delta)$-(edge-weight)-DP, if for every pair of neighboring graphs $G \sim G'$, and all measurable sets of outputs $Y \subseteq \cY$, we have,
    \begin{align}
        \Pr{\mathcal{A}(G) \in Y} \leq e^{\epsilon} \Pr{\mathcal{A}(G') \in  Y} + \delta.
    \end{align}
\end{definition}
Importantly, $F = (V,E)$ is a \emph{public} topology, and privacy is with respect to the weights. 
%This contrasts with other privacy notions include \emph{edge}-DP \cite{Hay2009} and \emph{node}-DP \cite{Kasiviswanathan2013} which focus on the protection of the particular membership of a single edge or vertex in the graph respectively. The best notion to use depends on the application; as mentioned earlier, edge-weight DP is applicable when $F$ is a known network and sensitive behavior affects the weights. %% Lukas: Covered in Private graph alg section

\paragraph{Reconstruction Attacks.} Our DP lower bounds will be shown via reconstruction attacks, which are algorithms that try to reconstruct the private dataset $\x$ based on the algorithm output. If the attack reconstructs coordinates of $\x$, they demonstrate an impossibility for DP, and any reasonable notion of privacy (see \citet{Vadhan_2017} for an overview of reconstruction attacks).

We will use two types of reconstruction attacks in our lower bounds.
In the first type, we consider an algorithm that attempts to reconstruct a fixed coordinate $i$ of $\x$, and show that any DP algorithm cannot be too successful at doing so. 
This is a slight generalization of \citet[Lemma 5.3]{Sealfon_2016} to general data domains $\cX$.

\begin{restatable}%[{\cite{Vadhan_2017}}]
{lemma}{reident}\label[lemma]{lem:reidentification}%\rasmusnote{It is unclear why we credit Vadhan for this and not Sealfon so I removed the cite}
Let $\cB:\cX^d \rightarrow \cX^d$ be any mechanism that is  $(\epsilon, \delta)$-differentially private under the Hamming neighborhood relationship, \revise{i.e. $d_H(\x, \x')\leq 1$ for neighboring datasets $\x \sim \x'$}.
Then a dataset $\X \leftarrow  \cX^d$, drawn uniformly at random, we have  for each $i \in [d]$, we have
$\Pr{\cB(\X)_i = x_i}\leq\tfrac{\exp(\epsilon)}{|\cX|} + \delta$. %&   & \Ex{\cM(\vec{x})_i = x_i} \leq d\cdot \PAREN{\frac{\exp(\epsilon)}{|\cX|} + \delta}\,,
\end{restatable}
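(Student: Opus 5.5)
The plan is a standard resampling (hybrid) argument that isolates coordinate $i$. Fix $i \in [d]$. Draw $\X$ uniformly from $\cX^d$, and independently draw $z$ uniformly from $\cX$. Let $\X'$ be $\X$ with its $i$-th coordinate replaced by $z$. Then $d_H(\X,\X')\le 1$, so $\X \sim \X'$ in the Hamming relation. Moreover, $\X'$ is independent of $x_i$: the pair $(\X', x_i)$ is distributed as a uniform dataset together with an independent uniform element of $\cX$.

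The first step conditions on everything except $x_i$ and $z$. For each fixed $\X$ and $z$, define the measurable event $Y_{x_i} := \{ \vec{y} \in \cX^d : y_i = x_i\}$. Applying \cref{def:ew-dp} to the neighboring pair $\X,\X'$ (with Hamming neighbors in place of $\ell_1$ neighbors) gives
\begin{align}
\Pr{\cB(\X)_i = x_i} \le e^{\epsilon}\,\Pr{\cB(\X')_i = x_i} + \delta,
\end{align}
where the probability is only over the coins of $\cB$. The second step averages this pointwise inequality over the uniform choice of $\X$ and $z$; this preserves the inequality, since $\delta$ is a constant.

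The final step evaluates the right-hand side. Since $\X'$ is independent of $x_i$, which is uniform on $\cX$, condition on $\X'$ and on the coins of $\cB$. The output coordinate $\cB(\X')_i$ is then a fixed element of $\cX$, and it equals the independent uniform $x_i$ with probability exactly $1/|\cX|$. Hence $\Pr{\cB(\X')_i = x_i} = 1/|\cX|$, and the bound $\tfrac{e^{\epsilon}}{|\cX|}+\delta$ follows.

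The only point needing care is the independence claim in the last step. The joint distribution of $(\X', x_i)$ must be exactly a product of uniforms, and this requires $z$ to be uniform and independent of $\X$. If $z$ were taken to be a fixed value instead, the argument would break.
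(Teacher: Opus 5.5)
Your proof is correct and is essentially the paper's argument: you replace the $i$-th coordinate to get a Hamming neighbor, apply the $(\epsilon,\delta)$ inequality pointwise to the event $\{y_i = x_i\}$, and average. You then use that the neighbor's output is independent of the uniform $x_i$, which gives probability exactly $1/|\cX|$. The one difference is that the paper sets the replaced coordinate to the fixed value $1$ rather than an independent uniform $z$. That works equally well, because the modified dataset then depends only on the coordinates other than $i$, which are already independent of $x_i$; so your closing claim that a fixed value would break the argument is mistaken, since only independence from $x_i$ is needed, not uniformity of the modified dataset.
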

As we will see, using a data domain of size $|\cX| = \frac{m}{n}$ will be critical to obtaining tight lower bounds.
We also consider reconstructing an entire set $I \subseteq [d]$ of indices of a binary dataset $\x$. 
An attack is then a mechanism $\cB : \{0,1\}^d \rightarrow 2^{[d]} \times \{0,1\}^d$, and its success is measured by
\begin{equation}\label{eq:succ}
    \suc(\cB) = \frac{1}{2} - \frac{\Ex{d_{H}(\vec{x}\vert_I, \vec{y}\vert_I)}}{\Ex{|I|}},
\end{equation}
where $\x \sim \{0,1\}^d$ and $(I, \vec{y}) := \cB(\x)$, and $\x|_I$ indicate the subvector of $\x$ on indices $I$. 
Intuitively, success is the normalized error that randomly guessing $I$ would yield (namely $\frac{1}{2})$ minus the normalized error made by $\cB$ in the set $I$. 
DP limits the possible success as follows.

\begin{restatable}{theorem}{reconst}\label{lem:reconst}
    Let $\cB:\{0,1\}^d\rightarrow 2^{[d]} \times \{0,1\}^d$ be any $(\epsilon, \delta)$-DP algorithm and let ${\x \sim \{0,1\}^d}$ be drawn uniformly at random. Then, $\suc(\cB) \leq \frac{e^\epsilon-1}{2(e^\epsilon+1)} + \frac{d\delta}{e^\epsilon+1}$.
\end{restatable}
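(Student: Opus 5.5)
The plan is to reduce to a per-coordinate statement and then sum over coordinates. Write $(I,\vec{y}) = \cB(\x)$. Then
\[
\Ex{|I|} = \sum_{i=1}^d \Pr{i \in I}, \qquad \Ex{d_H(\x|_I,\vec{y}|_I)} = \sum_{i=1}^d \Pr{i\in I,\, y_i \neq x_i}.
\]
Set $a_i = \Pr{i \in I,\, y_i = x_i}$ and $b_i = \Pr{i\in I,\, y_i\neq x_i}$. The success condition $\suc(\cB) \le \frac{e^\epsilon-1}{2(e^\epsilon+1)} + \frac{d\delta}{e^\epsilon+1}$ is then equivalent to
\[
\sum_i b_i \;\ge\; \frac{1}{e^\epsilon+1}\sum_i (a_i+b_i) - \frac{d\delta}{e^\epsilon+1}\cdot\frac{\sum_i(a_i+b_i)}{\Ex{|I|}}.
\]
Since $\Ex{|I|}=\sum_i(a_i+b_i)$, this reduces to
\[
\sum_i \bigl(a_i - e^\epsilon b_i\bigr) \le d\delta.
\]
It therefore suffices to show, for each coordinate $i$, the inequality $a_i \le e^\epsilon b_i + \delta$.

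To prove the per-coordinate bound, fix $i$ and condition on $\x_{-i}$. Let $\x^{(0)}$ and $\x^{(1)}$ be the two datasets agreeing with $\x_{-i}$ and having $x_i = 0,1$ respectively. These are Hamming neighbors. Consider the event $S_c = \{(I,\vec{y}) : i \in I,\ y_i = c\}$. Then
\[
a_i = \mathbb{E}_{\x_{-i}}\tfrac12\bigl[\Pr{\cB(\x^{(0)})\in S_0} + \Pr{\cB(\x^{(1)})\in S_1}\bigr].
\]
By the DP inequality applied to the neighbors, $\Pr{\cB(\x^{(0)})\in S_0} \le e^\epsilon \Pr{\cB(\x^{(1)}) \in S_0} + \delta$, and the symmetric statement holds with $0$ and $1$ swapped. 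The right-hand terms $\Pr{\cB(\x^{(1)})\in S_0}$ and $\Pr{\cB(\x^{(0)})\in S_1}$ are exactly the error events whose average gives $b_i$. Averaging yields $a_i \le e^\epsilon b_i + \delta$.

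Summing over $i$ gives $\sum_i a_i \le e^\epsilon \sum_i b_i + d\delta$. Writing $A=\sum_i a_i$, $B=\sum_i b_i$, and $\Ex{|I|}=A+B$, we get
\[
\frac{B}{A+B} \;\ge\; \frac{1}{e^\epsilon+1} - \frac{d\delta}{(e^\epsilon+1)(A+B)}.
\]
Hence
\[
\suc(\cB) = \frac12 - \frac{B}{A+B} \;\le\; \frac{e^\epsilon-1}{2(e^\epsilon+1)} + \frac{d\delta}{(e^\epsilon+1)\Ex{|I|}}.
\]

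The main obstacle is the $\delta$ term. The derivation naturally produces $d\delta/((e^\epsilon+1)\Ex{|I|})$, whereas the statement has $d\delta/(e^\epsilon+1)$. These agree when $\Ex{|I|}\ge 1$, which holds in all intended uses. Otherwise I would interpret or adjust the bound with a normalization, or note that the statement implicitly assumes $\Ex{|I|} \ge 1$; the edge case $\Ex{|I|}=0$ would need to be handled by convention. Everything else is just the bookkeeping of conditioning on $\x_{-i}$ and checking that the two DP applications use neighboring datasets.
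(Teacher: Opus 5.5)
Your argument is correct. It reaches exactly the inequality the paper's proof reduces to, $\Ex{\sum_{i\in I}\mathbb{1}[x_i\neq y_i]}\ge \frac{1}{1+e^\epsilon}\bigl(\Ex{|I|}-d\delta\bigr)$, but by a simpler route.

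The paper conditions on the realized set $I$ and lower-bounds $\Ex{|y_i-x_i| \mid I}$ separately for each fixed $I$. Applying the DP inequality to each of the exponentially many events $\{(I,\vec{y}) : y_i=b\}$ would charge a separate $\delta$ for every $I$. The paper therefore first proves Lemma~\ref{lem:approx-dp}, which splits $\delta$ into a pointwise budget over outputs of total mass at most $\delta$, and then carries these budgets through the sum over $I$.

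You instead apply the DP inequality once per coordinate and bit value to the aggregated events $S_c=\{(I,\vec{y}) : i\in I,\ y_i=c\}$. These already contain the union over all $I$, so only one $\delta$ per coordinate appears and no decomposition lemma is needed. This works because $\suc$ is a ratio of expectations, so only the unconditional quantities $a_i,b_i$ enter. The paper's conditioning produces a finer statement, a bound on the conditional error rate given each realized $I$, which the theorem does not use. Your version is shorter and more transparent.

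Your concern about the $\delta$ term is justified, and it applies equally to the paper's proof. Dividing the paper's target inequality by $\Ex{|I|}$ also yields only $\frac{d\delta}{(e^\epsilon+1)\Ex{|I|}}$, a step the paper passes over. The extra hypothesis is genuinely necessary. Consider the mechanism that outputs $([d],\x)$ with probability $\delta$ and $(\emptyset,\mathbf{0})$ otherwise. It is $(0,\delta)$-DP, hence $(\epsilon,\delta)$-DP, and it never errs, so $\suc=\frac12$. The claimed bound, however, is strictly below $\frac12$ whenever $d\delta<1$.

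So the right repair is either to assume $\Ex{|I|}\ge 1$ or to keep $\Ex{|I|}$ in the denominator. The assumption holds in both applications: $|I|=n-1$ for MST, and $\Ex{|I|}\ge \phi(F)/(8d_{\max}(F))\ge 1$ for HC.

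One inconsistency to clean up. In your first paragraph, the stated bound is not equivalent to $\sum_i(a_i-e^\epsilon b_i)\le d\delta$. It is equivalent to $\sum_i(a_i-e^\epsilon b_i)\le d\delta\,\Ex{|I|}$. Your per-coordinate inequality implies this as soon as $\Ex{|I|}\ge 1$, which agrees with what you correctly derive at the end.
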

The proof appears in \cref{proof:subsetrec}. It differs from prior reconstruction attacks because $I$ is itself a differentially private output. 
When $\delta = o(\frac{1}{d})$, the second term becomes negligible, and when $\epsilon < 1$, we have that $\suc(\cB) =  O(\epsilon)$.

% Another proof for this. In step to any valid $\neq j$ would work.
% \begin{proof}[Proof sketch for B outputting single value]
% \begin{align}
%     \Pr[B(X) = X_i] &= \sum_{j\in [n]} \Pr[B(X) = X_i | X_i = j]\cdot \Pr[X_i = j] \\
%     &\leq\sum_{j\in [n]}\left(\dfrac{e^\epsilon}{n}\Pr[B(X) = j \mid X_i = j] +\delta\right) \cdot \Pr[X_i = j] \\
%     &=\delta+ \dfrac{e^\epsilon}{n}\sum_{j\in [n]}\left(1 - \sum_{j' \in [n]\setminus j} \Pr[B(X) = j' \mid X_i = j] \right) \cdot \Pr[X_i = j] \\
%     &=\delta+ \dfrac{e^\epsilon}{n}\sum_{j\in [n]}\left(1 - \sum_{j' \in [n]} \Pr\left[B(X) = j' \mid X_i = j\right] +\Pr\left[B(X) = j\mid X_i  = j\right]\right) \cdot \Pr[X_i = j] \\
%     &=\delta+ \dfrac{e^\epsilon}{n}\sum_{j\in [n]} \Pr\left[B(X) = j\mid X_i  = j\right] \cdot \Pr[X_i = j] \\
%     &\leq \delta+ \dfrac{e^\epsilon}{n} \cdot 1 \\

%     \end{align}
% \end{proof}

%% file: content/mst.tex
\input{content/worst-case-bounds/mst}

\subsection{Lower Bounds for General Topology Classes}\label{sec:rc-mst}
 We will design a subset reconstruction attack that uses any given approxmate DP MST algorithm \mstalg. 
 Assuming the additive cost of \mstalg{} is sufficiently low, the reconstruction success will be high enough to form a contradiction with \cref{lem:reconst} and establish an error lower bound for any DP algorithm.

Our reconstruction attack works as follows: 
the dataset is drawn from $\x \sim \operatorname{Uni}(\{0,1\}^m)$, and then arbitrarily encoded into $E$ with a labeling function $\ell : E \rightarrow [m]$. 
The weights are set as $w(e) = R x_{\ell(e)}$, where $R$ is a factor used to amplify the weight that we will choose later. 
Then, $\mstalg$ is run on $(V, E, \vec{w})$ to produce a tree $T \subseteq E$.  %(we assume $(V,E)$ is connected). 
The returned tree $T$ contains edges whose weight is more likely to be $0$, and thus the reconstructed set is set to be $I = \{\ell(e) : e \in T\}$, and all guesses $\{y_i : i \in I\}$ are $y_i = 0$. This procedure appears in Algorithm~\ref{alg:ra-mst}.

\begin{algorithm}[t]
\caption{Subset Reconstruction Attack $\ra^{\mathsf{MST}}$ for MST}\label{alg:ra-mst}
    \begin{algorithmic}[1]
    \REQUIRE{Public graph topology $F = (V, E)$, dataset $\x \in \{0,1\}^m$, parameter $R > 0$, algorithm $\mstalg$.}
        \STATE{\ifdefined\PODS\vspace{-1em}\fi Initialize weights $\vec{w} \in \left\{0, R\right\}^m$.}
        \STATE{Let $\ell : E \rightarrow [m]$ be any bijection from $E$ to $[m]$ labels.}
        \STATE Set $w(e) := Rx_{\ell(e)}$ for all $e \in E$.
        \COMMENT{Encode the dataset into the weights}
        \STATE{Let $G = (F, \vec{w})$.}
        \STATE{Compute minimum spanning tree $T := \mstalg(G)$.}
        \STATE{Initialize reconstruction set $I = \{\ell(e) : e \in T\} \subseteq [m]$.}
        \STATE{\textbf{for each} $i \in I$, guess $y_{i} = 0$.}
        \RETURN{Reconstruction set $I$, reconstructed vector $\vec{y} = \{y_i : i \in I\}$.}
    \end{algorithmic}
\end{algorithm}

We will show that if $\mstalg$ has sufficiently low additive error, then this attack has a success rate much higher than $0$. Specifically, we will show $\EX[\sum_{i \in I} x_i] \leq 0.1 \EX[|I|]$. By construction, $w(T) = R \sum_{i \in I} x_i$. Using the additive error guarantee of $\mstalg$, we have $\EX[\sum_{i \in I} x_i] = \frac{1}{R}\EX[w(T)] \leq \frac{1}{R} w(T^*) + \frac{1}{10}n$. Thus, we need to show that the \emph{optimal} MST in $(V,E,\vec{w})$ is likely to have low weight. This is where we assume that the \emph{minimum cut} of $F$ is not too small. Recall that the minimum cut is given by $\min_{A \sqcup B = V} E(A,B)$. We use a concentration inequality, and a union bound over all cuts, to argue that if the mincut of $F$ is at least $5 \log n$, then with high probability all cuts have an edge labeled $0$. This immediately implies that an MST of weight $0$ exists. In order for the union bound to work, we use a result of~\citet{karger1993global} bounding the number of approximate minimum cuts, as these cuts have higher failure probability. Formally,
\begin{lemma}\label[lemma]{lem:mst-add-err}
    Suppose the graph topology $F = (V,E)$ has minimum cut $\lambda \geq 5 \log(n)$, and the random dataset $\vec{x} \sim \operatorname{Uni}(\{0,1\}^m)$ is embedded into $\vec{w}$ via an arbitrary bijection. 
    Then, with probability at least $1-\frac{3}{n^2}$, there exists a zero-cost spanning tree $T \subseteq E$, i.e. $w(T) = 0$.
\end{lemma}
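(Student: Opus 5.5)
A zero-cost spanning tree exists iff the subgraph $F_0 = (V, E_0)$, with $E_0 = \{e : x_{\ell(e)} = 0\}$, is connected. This in turn holds iff every cut $(A,B)$ of $F$ has at least one crossing edge in $E_0$. Since $\ell$ is a bijection and $\x$ is uniform, each edge independently lies in $E_0$ with probability $\frac12$. So a fixed cut with $k$ crossing edges fails (all its crossing edges have weight $R$) with probability exactly $2^{-k}$. The plan is a union bound over all cuts, grouped by cut size, with Karger's bound on the number of approximate minimum cuts \citep{karger1993global} controlling how many terms appear at each size.

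Concretely, Karger's result says that for every real $\alpha \ge 1$ the number of cuts with at most $\alpha\lambda$ crossing edges is at most $n^{2\alpha}$. I would bucket the cuts by $\alpha$: let $\mathcal{C}_j$ be the set of cuts with size in $[j\lambda/2, (j+1)\lambda/2)$ for $j = 2, 3, \dots$. Then $|\mathcal{C}_j| \le n^{j+1}$, and each cut in $\mathcal{C}_j$ fails with probability at most $2^{-j\lambda/2}$.

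The union bound then gives
\[
\Pr{F_0 \text{ disconnected}} \le \sum_{j\ge 2} n^{j+1} 2^{-j\lambda/2}.
\]
Using $\lambda \ge 5\log n$ with $\log = \log_2$, we have $2^{-\lambda/2} \le n^{-5/2}$. The $j$-th term is therefore at most $n \cdot n^{-3j/2}$. The $j=2$ term is $n^{-2}$, and the remaining terms form a geometric series with ratio $n^{-3/2}$. The total is at most $n^{-2}(1 + O(n^{-3/2})) \le 3/n^2$.

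I expect the only delicate step to be the constants, which depend on the exact form of Karger's bound. If only the weaker form $\binom{n}{2\alpha} \le n^{2\alpha}$ for half-integer $2\alpha$ is available, I would instead bucket by $\lfloor 2\alpha \rfloor$, accepting slightly worse constants in the exponent. The slack between $5/2$ and the required decay rate of roughly $1+\frac{1}{j}$ per bucket is large enough that the $3/n^2$ bound should still go through.
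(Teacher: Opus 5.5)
Your proposal is correct and follows essentially the same route as the paper: a union bound over all cuts of $F$, each of which fails with probability $2^{-|E(A,B)|}$, with Karger's bound on the number of approximate minimum cuts controlling the count and a geometric series finishing the calculation. The only difference is cosmetic. The paper sums over each exact cut value $i$ using $N_i \le n^{\lceil 2i/\lambda\rceil} \le n\cdot n^{2i/\lambda}$ and bounds the resulting denominator by $1 - n^{2/\lambda}/2 \ge 1/3$. Your buckets of width $\lambda/2$ invoke Karger only at the half-integers $\alpha = (j+1)/2$, so the fallback in your last paragraph is unnecessary, and they give the slightly better constant $n^{-2}/(1-n^{-3/2})$.
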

The proof appears in \Cref{app:mst-add-err}. Note that $\EX[|I|] = |I| = n-1$ always. Having established bounds on both $\EX[\sum_{i \in I} x_i]$ and $\EX[|I|]$, we can show a reconstruction attack lower bound:
\begin{theorem}\label{thm:mst-lb}
    Let $(V,E)$ be a public graph topology with minimum cut at least $5 \log(n)$. Suppose that $\mstalg$ attains expected additive error $\leq \frac{R}{10} n$ for some $R > 0$. 
    Then $\suc(\ra^{\mathsf{MST}}) \geq 0.4$.
\end{theorem}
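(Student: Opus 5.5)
The plan is to compute the two quantities in the definition \eqref{eq:succ} of $\suc(\ra^{\mathsf{MST}})$ directly. Since every guess is $y_i = 0$, we have $d_H(\x|_I,\vec{y}|_I) = \sum_{i\in I} x_i$. Since $T$ is a spanning tree, $|I| = n-1$ deterministically. It therefore suffices to show $\EX[\sum_{i\in I} x_i] \le 0.1(n-1)$, up to lower-order terms absorbed by the slack; the constants will need a little care.

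By construction, $w(T) = R\sum_{i\in I}x_i$. Writing $T^*$ for the optimal MST of the random instance, the additive error guarantee gives, conditionally on $\x$, $\EX[w(T)\mid \x] \le w(T^*) + \frac{R}{10}n$. Taking expectation over $\x$ yields
\[
\EX\Big[\sum_{i\in I}x_i\Big] \le \frac{1}{R}\EX[w(T^*)] + \frac{n}{10}.
\]
To bound $\EX[w(T^*)]$, I split on the event $\mathcal{E}$ of \cref{lem:mst-add-err}, namely that a zero-cost spanning tree exists. On $\mathcal{E}$ we have $w(T^*) = 0$. Off $\mathcal{E}$ we still have $w(T^*) \le R(n-1)$, since every weight is in $\{0,R\}$. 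Because $\Pr{\neg\mathcal{E}} \le 3/n^2$, this gives $\frac1R\EX[w(T^*)] \le 3(n-1)/n^2 \le 3/n$.

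Combining, $\EX[\sum_{i\in I}x_i] \le \frac{n}{10} + \frac{3}{n}$, so
\[
\suc(\ra^{\mathsf{MST}}) \ge \frac12 - \frac{n/10 + 3/n}{n-1}.
\]
This equals $0.4$ minus an $O(1/n)$ term: for $n$ large the ratio is $\frac{1}{10}\cdot\frac{n}{n-1} + O(n^{-2})$, which slightly exceeds $0.1$.

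The main obstacle is this constant bookkeeping: the error budget $\frac{R}{10}n$ is stated with $n$ rather than $n-1$, so the bound as written gives $0.4 - O(1/n)$. I would handle it in one of two ways. The first is to interpret the error hypothesis with enough slack, e.g. use $n-1$ in place of $n$, or note that the statement is intended asymptotically, so a constant slightly below $0.4$ suffices for the later contradiction with \cref{lem:reconst}. The second is to argue that for $n$ above a fixed constant the deficit is negligible. The probabilistic content itself, namely the existence of a zero-weight MST, is entirely supplied by \cref{lem:mst-add-err}, so no further random-graph argument is needed.
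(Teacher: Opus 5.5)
Your proposal is correct and follows essentially the same route as the paper: both combine \cref{lem:mst-add-err}, the identity $w(T)=R\sum_{i\in I}x_i$, the additive-error guarantee and $|I|=n-1$. The only difference is the order: you apply the error guarantee first and then bound $\EX[w(T^*)]\le \frac{3}{n^2}R(n-1)$, which yields the $R$-independent remainder $3/n$, whereas the paper splits on the event first and uses the crude bound $nR$ on the bad event, ending with $\frac{n}{10}+O(R/n)$. The $0.4-O(1/n)$ slack you flag is present in the paper's argument as well, and as you note it is harmless, because \cref{coro:mst-lb} only needs the success rate to stay above the constant below $0.4$ coming from \cref{lem:reconst}.
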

\begin{proof}
    Let $\cE$ denote the event that the optimal MST has weight that is not $0$. We have that 
    \begin{align*}
        \EX\left[\sum_{i \in I} \mathbb{1}[x_i \neq y_i]\right] = \EX\left[\sum_{i \in I} x_i\right] &\leq \EX\left[\sum_{i \in I} x_i\middle \vert \neg \mathcal{E}\right] + \Pr{\mathcal{E}} \EX\left[\sum_{i \in I} x_i\middle \vert \mathcal{E}\right] \\
        &\leq \EX\left[\sum_{i \in I} x_i\middle \vert \neg \mathcal{E}\right] + \Pr{\mathcal{E}} \cdot n\cdot R.
    \end{align*}
    By Lemma~\ref{lem:mst-add-err}, we know $\Pr{\mathcal{E}} \leq \frac{3}{n^2}$. Furthermore, by the construction of $I$, we know $R\sum_{i \in I} x_i = w(T)$. 
    Given that $\neg \mathcal{E}$ occurs, we have $\EX[w(T)] \leq \frac{R}{10}n$ by the error guarantee of $\mstalg$. Thus, the above bound is $\frac{n}{10} + \cO\big(\frac{R}{n}\big)$.
\end{proof}
Theorem~\ref{thm:mst-lb} and Lemma~\ref{lem:reconst} immediately imply the following corollary: 
\begin{corollary}\label[corollary]{coro:mst-lb}
    For any topology $F = (V,E)$ with minimum cut at least $5 \log(n)$, there is no $(\epsilon, \delta)$-DP algorithm $\mstalg$ attaining $\frac{1}{10\epsilon}n$ additive error for any $\epsilon \leq 1, \delta \leq \frac{0.01\epsilon}{m}$.
\end{corollary}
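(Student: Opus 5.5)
The plan is a proof by contradiction that combines \cref{thm:mst-lb} with \cref{lem:reconst}. Suppose $\mstalg$ is $(\epsilon,\delta)$-DP with $\epsilon \le 1$ and $\delta \le \frac{0.01\epsilon}{m}$, and that it attains expected additive error at most $\frac{1}{10\epsilon}n$. Set $R = 1/\epsilon$. Then the error bound reads exactly $\frac{R}{10}n$, so \cref{thm:mst-lb} gives $\suc(\ra^{\mathsf{MST}}) \ge 0.4$.

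Next I show that $\ra^{\mathsf{MST}}$, viewed as a map $\{0,1\}^m \to 2^{[m]}\times\{0,1\}^m$, is $(\epsilon,\delta)$-DP under the Hamming relation on $\x$. This is the key step, and it is where the scaling $R$ matters.
\begin{itemize}
\item Changing one coordinate $x_i$ changes exactly one weight, $w(\ell^{-1}(i))$, by $R$.
\item So Hamming-neighboring datasets map to weight vectors at $\ell_1$ distance $R = 1/\epsilon$, not $1$. Privacy therefore does not follow directly; I need group privacy.
\item Write $\epsilon'$ for the privacy parameter of $\mstalg$. Walking along $k = \lceil R\rceil$ intermediate weight vectors, each step of size at most $1$, gives $(k\epsilon', \delta k e^{(k-1)\epsilon'})$-DP.
\end{itemize}
To make the final inequality clean, the natural normalization is to treat $\mstalg$ as $\epsilon$-DP at unit scale. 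Choosing $R=1/\epsilon$ then makes the attack $(1, O(\delta/\epsilon))$-DP, since $k\epsilon \approx 1$ and $e^{(k-1)\epsilon}\le e$. I expect this bookkeeping to be the main obstacle: I must check that the conventions (error $\frac{1}{10\epsilon}n$ and $\delta\le 0.01\epsilon/m$) are exactly what the group-privacy blowup consumes.

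Concretely, the attack is $(\epsilon_0,\delta_0)$-DP with $\epsilon_0 \le 1$ (or $\le 2$ with rounding) and $\delta_0 \le e\cdot\lceil 1/\epsilon\rceil\,\delta \le \frac{0.06}{m}$, say. The map from the tree $T$ to $(I,\vec{y})$ is post-processing, so it preserves these parameters.

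Apply \cref{lem:reconst} with $d=m$. This gives
\[
\suc(\ra^{\mathsf{MST}}) \le \frac{e^{\epsilon_0}-1}{2(e^{\epsilon_0}+1)} + \frac{m\delta_0}{e^{\epsilon_0}+1}.
\]
With $\epsilon_0 = 1$ the first term is $\frac{e-1}{2(e+1)} \approx 0.231$. The second term is at most $\frac{0.06}{e+1} < 0.02$. The total is about $0.25 < 0.4$, which contradicts \cref{thm:mst-lb}.

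If the rounding of $k$ pushes $\epsilon_0$ toward $2$, the first term becomes $\tanh(1)/2 \approx 0.38$, which is still below $0.4$ but leaves little slack. In that case I would instead take $R$ slightly smaller, e.g.\ $R = 1/(2\epsilon)$. That only requires the error hypothesis $\frac{R}{10}n$ to be weaker than $\frac{1}{10\epsilon}n$, which is the case, so the contradiction goes through with room to spare.
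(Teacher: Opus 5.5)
Your argument is the paper's proof: instantiate $\ra^{\mathsf{MST}}$ with $R=1/\epsilon$, use group privacy over $\lceil R\rceil$ unit steps to get $(\epsilon_0,\delta_0)$-DP with $\epsilon_0=\epsilon\lceil 1/\epsilon\rceil<2$ and $\delta_0=O(\delta/\epsilon)$, then apply \cref{lem:reconst} with $d=m$ to bound the success below $0.4$, contradicting \cref{thm:mst-lb}. One correction: drop your fallback $R=1/(2\epsilon)$. It goes the wrong way, since \cref{thm:mst-lb} would then need error at most $\frac{R}{10}n=\frac{n}{20\epsilon}$, which is stronger than what you assumed. It is also unnecessary, because the bound equals $\frac12-\frac{1-m\delta_0}{e^{\epsilon_0}+1}$, and with $m\delta_0\le 0.06$ this is below $\frac12-\frac{0.94}{e^2+1}<0.39$ for every $\epsilon_0<2$.
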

\begin{proof}
Suppose the contrary, and instantiate $\ra^{\mathsf{MST}}$ with $\mstalg$ and $R = \frac{1}{\epsilon}$. By group privacy, $\ra^{\mathsf{MST}}$ satisfies $(\epsilon \lceil R \rceil, \delta \lceil R \rceil e^{\epsilon \lceil R \rceil})$-DP, which means it satisfies $(2, \frac{10}{\epsilon}\delta)$-DP as $\epsilon \lceil R \rceil < 2$.
By Lemma~\ref{lem:reconst}, this means it can only be a $\frac{1}{2} - \frac{1}{1+e^2} + m \delta < 0.39$-successful reconstruction attack, yet this contradicts Theorem~\ref{thm:mst-lb}.
\end{proof}
While looser than Theorem~\ref{th:lb-mst} by a $\log(\frac{m}{n})$ factor, this result establishes that the error of MST must still grow linearly under $(\epsilon, \delta)$-DP for many topologies.
We note that this proof would still work if $F$ had \emph{any} linear-sized subgraph with large enough min-cut.

%% file: content/worst-case-bounds/mst.tex
\newcommand{\Rval}{\ensuremath{\min\left(\frac{\ln n}{2 c \epsilon}, \frac{\ln(1/\delta)}{2 c \epsilon}\right)}}

\section{Private Minimum-Weight Spanning Trees}\label{ch:enc-mst}
In Section~\ref{sec:mst-wc}, we show that there is a worst-case topology $F$ on which the additive error of MST under DP is tight in $\Theta\left((n/\epsilon) \cdot \log (m/n)\right)$ (for small enough $\delta$) achieved by the Laplace mechanism.
Going beyond worst-case graphs, in \cref{sec:rc-mst}, we show that the lower bound of $\Omega(n/\epsilon)$ due to \cite{Sealfon_2016} holds on all graph topologies with minimum cut $\Omega(\log(n)$.

\subsection{Tight Lower Bounds on Worst-Case Topologies}\label{sec:mst-wc}
\begin{figure}[t]
    \centering
    \scalebox{1.2}{
        \input{fig/mst-reduction.tikz}
    }
    \caption{{\bf a)} Encoding the vector $\X = (x_1, \cdots, x_n)$ into the \emph{minimum-weight spanning tree}.
     {\bf b)} The encoding for \emph{min-weight perfect matching} for a vector of length $\lceil \alpha n\rceil$.
     We pad the left side with the gray vertices.
    In both cases, each vertex $r_j$ represents the integer $j$, and we set the weight of the edge from $l_i$ to $r_j$ to $0$, if $x_i = j$ and otherwise to $R = \mathcal{O}(\frac{1}{\epsilon}\ln n)$.
     }\label{fig:encode-as-graph}
\end{figure}

\newcommand{\deltaless}{\ensuremath{\frac{1}{n}}}

Assuming \revise{$\delta \leq (n/m)^{\Omega(1)}$}, we now close the gaps between the known upper and the lower bounds.
The idea is to encode a random vector $\X \in [m/n]^{d}$ into the MST of a dense graph $G$ so that releasing an overly accurate MST by a differentially private mechanism would contradict the reconstruction rate in \cref{lem:reidentification}. 
As a by-product, we also get a lower bound for $\epsilon$-DP, obtained by taking the limit of $\delta$ towards $0$ recovering the packing-based lower bound obtained by \citet{hladik_2024}.

\begin{theorem}[Worst-Case Lower Bound MST]\label{th:lb-mst}
Let $\epsilon >0$ and $\delta \leq (n/m)^{\Omega(1)}$ where $m>2n$.
There exists a graph topology $F=(V,E)$ and a distribution $\cD_\omega$ of weights such that for any $(\epsilon, \delta)$-DP algorithm $\cA$ that outputs an approximate MST $T$
under the $\ell_1$ neighboring relationship, if the weights $\W\sim \cD_\omega$ then the expected additive error satisfies 
\[ 
 w(T) - w(T^*) \geq \Omega\big((n/\epsilon) \cdot \min(\ln(1/\delta), \ln(m/n))\big)\,,
\] where $T^*$ is the optimal MST.
\end{theorem}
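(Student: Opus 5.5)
We construct a complete bipartite topology and turn any accurate MST algorithm into a coordinate-reconstruction mechanism for a dataset over a large domain, which Lemma~\ref{lem:reidentification} forbids. Let $k := \lfloor m/(2n) \rfloor \geq 1$ and $d := \Theta(n)$, chosen so that $d + k \le n$ and $d\cdot k \le m$. We pad with a path on the remaining vertices if needed so that the topology has exactly $n$ vertices, at most $m$ edges, and is connected. The core is the complete bipartite graph with left vertices $l_1,\dots,l_d$ and right vertices $r_1,\dots,r_k$, as in \cref{fig:encode-as-graph}a. Draw $\X$ uniformly from $[k]^d$, and set $w(l_i r_j) = 0$ if $x_i = j$ and $w(l_i r_j) = R$ otherwise, with all padding edges of weight $0$. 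This defines $\cD_\omega$.

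\textbf{Step 1: the optimal tree.} We choose $R := \min\bigl(\ln k, \ln(1/\delta)\bigr)/(c\epsilon)$ for a suitable constant $c$. Every left vertex has a zero edge to $r_{x_i}$. A zero-weight forest therefore connects all left vertices to the values actually used. Connecting the components around the unused right vertices and the distinct used values costs at most $R$ per right vertex, so $w(T^*) \le kR$. Since $k = O(m/n)$, this may exceed $n$. To handle this we instead make the right side small: we require $k \le d/2$, which holds whenever $m \le n^2/4$, and handle larger $m$ by capping $\ln(m/n)$ at $O(\ln n)$. Alternatively, we add a hub vertex joined with weight-$0$ public edges to every $r_j$. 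The hub costs nothing privately and makes $w(T^*) = 0$ exactly. We adopt the hub.

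\textbf{Step 2: decoding.} Given the output tree $T$, the decoder guesses, for each $i$, $y_i = j$ if $l_i$ has a unique tree edge of weight class $0$. Concretely, it picks any neighbor $r_j$ of $l_i$ in $T$, chosen uniformly among its tree-neighbors. Every $l_i$ has degree at least $1$ in $T$. The total weight of $T$ satisfies $w(T) \ge R\cdot\#\{i : \text{no edge } l_i r_{x_i} \in T\}$. Also, since $T$ has only $n-1$ edges, at most $O(d)$ left vertices can have large tree-degree. Because $\Ex{w(T)} \le d R/10$ by the accuracy assumption, a Markov-type count shows that a constant fraction of coordinates $i$ have $l_i$ of tree-degree $O(1)$ with $l_ir_{x_i} \in T$. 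Averaging over $i$, there is a constant $\beta$ with $\frac{1}{d}\sum_i \Pr{y_i = x_i} \ge \beta$. This makes some coordinate $i$ satisfy $\Pr{\cB(\X)_i = x_i} \ge \beta$.

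\textbf{Step 3: privacy of the decoder and contradiction.} Changing one coordinate $x_i$ alters two edge weights by $R$ each, an $\ell_1$ change of $2R$. By group privacy, $\cB$ is $(2R\epsilon, 2R e^{2R\epsilon}\delta)$-DP under the Hamming relation on $[k]^d$. Lemma~\ref{lem:reidentification} then gives
\[
\beta \le \frac{e^{2R\epsilon}}{k} + 2R e^{2R\epsilon}\delta .
\]
With $c$ large, $e^{2R\epsilon} \le \min(k, 1/\delta)^{2/c}$, so both terms are $o(1)$ when $k$ and $1/\delta$ are large. Note that $\delta \le (n/m)^{\Omega(1)}$ keeps $\ln(1/\delta) = \Omega(\ln k)$ up to constants. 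This contradicts the bound from Step 2, so the expected error must exceed $dR/10 = \Omega\bigl((n/\epsilon)\min(\ln(1/\delta), \ln(m/n))\bigr)$. Small constant $k$ is covered by Sealfon's $\Omega(n/\epsilon)$ bound.

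\textbf{Main obstacle.} The hard part is Step 2: turning low expected tree weight into per-coordinate correct guesses when $T$ may attach $l_i$ to several right vertices. The count must use that $T$ has only $n-1$ edges, so average left-degree is $O(1)$, together with the $R$-penalty for each missing correct edge. The uniform-neighbor decoder then succeeds with constant probability on a constant fraction of coordinates. Ensuring the graph has at most $m$ edges and exactly $n$ vertices while keeping $d = \Theta(n)$ and $k = \Theta(m/n)$ is bookkeeping.
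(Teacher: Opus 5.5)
Your proof is correct and uses the same mechanism as the paper. Coordinates are left vertices and values are right vertices, with a single zero edge $l_ir_{x_i}$. Data-independent zero-weight edges on the right side make $w(T^*)=0$. An $\ell_1$ change of $2R$ per coordinate feeds group privacy, and Lemma~\ref{lem:reidentification} is applied over a domain of size $\Theta(m/n)$.

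The differences are in packaging.

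\textbf{Construction.} For general $m$, the paper first proves the dense case on $K_{n,n}$. It then glues $n^2/m$ disjoint copies of the $K_{m/n,m/n}$ instance and argues block by block. Your single unbalanced $K_{d,k}$ with a hub gives $\Theta(n)$ coordinates over $[k]$ in one shot. It also avoids the somewhat informal per-block step.

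\textbf{Decoding.} Step~2, which you flag as the main obstacle, is a one-liner in the paper (Observation~\ref{obs:accuracy}). If the decoder picks a tree-neighbour $r_j$ of $l_i$ with $j\neq x_i$, then $l_ir_j\in T$ has weight $R$. These edges are distinct for distinct $i$, so $d_H(\vec{y},\x)\le w(T)/R$ for \emph{any} choice of neighbour. This gives an expected correct fraction of at least $0.9$ with a deterministic decoder and no degree counting. Your degree argument does work once stated properly. At most $(n-1)/D$ left vertices have tree-degree $\ge D$, since every tree edge has at most one left endpoint; as written, ``at most $O(d)$'' is vacuous. But it only yields a small constant $\beta$.

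\textbf{Small $k$.} That small $\beta$ is what forces you to take $k$ large and fall back on Sealfon for constant $k$. This fallback is the weakest point of the write-up. Sealfon's instance is a multigraph with a different edge count, and your $k=\lfloor m/(2n)\rfloor$ equals $1$ when $m<4n$, which carries no information. Use the charging bound instead, and choose $d$ a smaller constant fraction of $n$ so that $k\ge 2$. Then your own construction covers all $m>2n$ without any external result.
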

\noindent \revise{We will start with the case $m=n^2$, and focus on the general case later. The full encoding scheme appears in  Figure~\ref{fig:encode-as-graph}.}
Throughout the construction, we assume the topology $F \in\cG$ to be public.

\paragraph{Encoding} Let $\Aenc_R:[n]^n \rightarrow \cG_\omega$ be the encoding function that takes some dataset  $\vec{X} \in [n]^n$ and returns a weighted graph that encodes the dataset into the MST.
The parameter $R>1$ is used to control the edge weights.
We construct a new connected graph $G = (V_1 \cup V_2, E, \vec{w})$ with $2n$ vertices and $n^2+(n-1)$ edges in the following way: 
Let $V_1 = \{\ell_1, \cdots \ell_n\}$ and $V_2= \{r_1, \cdots, r_n\}$ be two vertex sets each of size $n$.
Each vertex $\ell_i \in V_1$ represents one coordinate of $\vec{x}$ and has edges to each vertex in $V_2$.
Set $w\big(\{\ell_i, r_{x_i}\}\big) := 0$ and $w\big(\{\ell_i, r_j\}\big) := R$ for $j \neq x_i$.
%together with new edges $e_i^1, \cdots, e_i^n$ from $v_i$ to all $r_j \in V_1$. 
%represents one coordinate that points to a corresponding representative in $V_2$.
%Therefore, for each $x_i$ of $\X$, add a new vertex ${v_i}$ 
% 5   Set the weights for each $j \in [n]$ to $w(e_i^j) = 0$ if $j = X_i$ and $w(e_i^j) = R$ otherwise.
We ensure connectivity by adding a path through the vertices in $V_2$ with zero weighted edges (contributing the $n-1$ extra edges).

\paragraph{Decoding} Let ${\Adec_G: \cS(G) \rightarrow [n]^n}$ be a decoder and assume that the topology $G$ was produced by the encoder (to simplify notation we assume $F$ is public as soon as the instance has been created).

Given a spanning tree $T\in \cS(G)$, we reconstruct the vector $\vec{y}\in [n]^n$ as follows: for each $i \in [n]$, choose any $r_j \in V_2$ with $\{\ell_i, r_j\} \in T$ and set $y_i := j$.
%$i \in [n]$, any $r_j \in V_2$ such that $\{\ell_i, r_j\} \in T$ and then set $y_i =j$ to the encoded value.
Note that any vertex in $V_1$ must have at least one edge in $T$ by construction, and therefore, the reconstruction for each coordinate is well-defined.
In case there are multiple, we break ties arbitrarily.

%that takes a weighted graph $G \in \cG_\omega$ with vertices $V = \{v_1, \cdots, v_d, r_1, \cdots, r_n\}$, together with some spanning tree $T \in \cS(G)$. \lukasnote{Check the following whether the indexing can be improved. }
%This function decodes the original vector $\vec{X} = (X_1, \cdots, X_d)$.

%\noindent Clearly, for each $v\in \{v_1, \cdots, v_d\}$ there is at least one edge in $\inc(v_i) \in T$ and thus, each $X_i$ of the decoded vector is defined.

Now define the attack $\Rec:[n]^n \rightarrow [n]^n$ as $\Rec_{\mstalg,R}(\x) :=
  \Adec_G\big(\mstalg(\Aenc_R(\x))\big)$ where $\mstalg$ is any (possibly randomized) MST algorithm.
By the construction above, it is clear that an optimal MST algorithm $\cA_{\operatorname{OPT}}$ recovers all coordinates perfectly, i.e. $\Rec_{\cA_{\operatorname{OPT}},R}(\vec{x}) = \vec{x}$.
This leads to the following two observations.

\begin{observation}\label[observation]{obs:accuracy}
Let $R>0$ and $G = (F, \vec{w})$ be obtained from $\Aenc_R(\x)$.
If $T \in \cS(F)$ is any and $T^*$ the optimal spanning tree of $G$, then
\begin{align*}
  d_H\PAREN{\Adec_F(T),\x} \le \frac{w(T)-w(T^*)}{R} = \dfrac{w(T)}{R},
  \end{align*}
\end{observation}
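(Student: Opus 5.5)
We prove Observation~\ref{obs:accuracy} in two steps: first we show that the optimal tree has weight zero, and then we charge every wrongly decoded coordinate to an edge of weight $R$ in $T$.

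\textbf{Step 1: $w(T^*)=0$.} The spanning tree $T_0$ consisting of the zero-weight path through $V_2$ together with the edges $\{\ell_i, r_{x_i}\}$ for $i\in[n]$ has $(n-1)+n = 2n-1$ edges. It spans all $2n$ vertices and is connected, since every $\ell_i$ hangs off the path. Hence $T_0$ is a spanning tree. All of its edges have weight $0$, so $w(T^*)=w(T_0)=0$, using that weights are nonnegative. This justifies the equality in the statement.

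\textbf{Step 2: $d_H(\Adec_F(T),\x)\le w(T)/R$.} Let $T$ be any spanning tree and let $\vec{y}=\Adec_F(T)$. Suppose $y_i\neq x_i$. By definition of the decoder, $T$ contains the edge $e_i:=\{\ell_i, r_{y_i}\}$. Since $y_i \neq x_i$, the encoding assigns $w(e_i)=R$.

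The map $i\mapsto e_i$ is injective, because distinct coordinates $i$ give edges with distinct left endpoints $\ell_i$. This holds regardless of how ties are broken in the decoder, since each $i$ selects exactly one edge at $\ell_i$.

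All remaining edges of $T$ have weight at least $0$. Therefore
\[
w(T)\ \ge\ \sum_{i:\,y_i\neq x_i} w(e_i) \;=\; R\cdot d_H(\vec{y},\x).
\]
Dividing by $R>0$ gives the claim.

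There is no real obstacle here. The only points needing care are the injectivity of the charging, which is immediate because each $\ell_i$ belongs to a unique coordinate, and the fact that the path edges in $V_2$ contribute nonnegatively to $w(T)$.
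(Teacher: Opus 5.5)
Your proof is correct and follows the same route as the paper. The paper likewise uses $w(T^*)=0$ and charges each decoding error to a distinct weight-$R$ edge $\{\ell_i, r_j\}$ with $j\neq x_i$ in $T$. You additionally make $w(T^*)=0$ explicit by exhibiting the zero-weight spanning tree (the path through $V_2$ plus the edges $\{\ell_i, r_{x_i}\}$), which the paper only asserts.
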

Indeed, $w(T^*)=0$, and whenever $\Adec_F(T)_i \neq x_i$, the tree $T$ contains an edge $\{l_i,r_j\}$ with $(j\neq x_i)$, which has weight $R$. These bad edges are distinct for different $i$, so the number of decoding errors is at most $w(T)/R$.

%Given any MST algorithm $\cA$ and vector $\vec{X} \in [n]^d$, we shortly denote $\Aencdec_R(\cA, \X) := \Adec(\cA(\Aenc_R(\X))$.
%We now establish a connection between two vectors that are close in Hamming distance to the $\ell_1$ sensitivity in  the encoded graph. 

\begin{observation}[Induced $\ell_1$]\label[observation]{lem:induced-hamming-neighborhood}
For every $\vec{X}\sim_H \vec{X'}$ and $R>0$, setting $k := 2\lceil R\rceil$, there is a chain of $\ell_1$-neighboring graphs, i.e. $\Aenc_R(\vec{X}):= G_0 \sim_1 G_{1} \sim_1 \cdots \sim_1  G_{k} =: \Aenc_R(\X')$.
\end{observation}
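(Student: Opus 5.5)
The plan is to build an explicit chain by changing the weights of the affected edges one at a time, in steps of $\ell_1$-size at most one. Suppose $\X \sim_H \X'$. If $\X = \X'$, the trivial chain with all $G_t$ equal suffices, since repeating a graph is allowed because $\norm{\vec{w}-\vec{w}}_1 = 0 \le 1$. Otherwise $\X$ and $\X'$ differ in exactly one coordinate $i$, with $x_i = a \neq b = x'_i$. By the encoding $\Aenc_R$, the weight vectors of $\Aenc_R(\X)$ and $\Aenc_R(\X')$ agree on every edge except two: the edge $\{\ell_i, r_a\}$ has weight $0$ in $G_0$ and $R$ in $G_k$, and the edge $\{\ell_i, r_b\}$ has weight $R$ in $G_0$ and $0$ in $G_k$. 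All other weights, including the zero-weight path on $V_2$, are identical, and the topology $F$ is the same for both graphs. Hence $\norm{\vec{w}-\vec{w}'}_1 = 2R$.

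Next I would interpolate. Set $s := \lceil R\rceil$ and $k = 2s$. For $t = 0,\dots,s$, let $G_t$ be the graph in which $w(\{\ell_i,r_a\}) = \min(tR/s, R)$ and all other weights are as in $G_0$; each step changes a single coordinate by $R/s \le 1$. Then for $t = s+1,\dots,2s$, keep $w(\{\ell_i,r_a\}) = R$ and decrease $w(\{\ell_i,r_b\})$ from $R$ to $0$ in $s$ steps of size $R/s \le 1$. Every consecutive pair of graphs shares $V$ and $E$ and has weights at $\ell_1$ distance at most one, so $G_t \sim_1 G_{t+1}$. By construction $G_{2s} = \Aenc_R(\X')$.

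No real obstacle arises. The only points needing care are the use of the ceiling, so that the step size $R/s$ is at most $1$ even when $R$ is not an integer, and the remark that intermediate weights in $(0,R)$ are fine because neighboring graphs may carry arbitrary real weights. This chain is what later allows group privacy to turn $(\epsilon,\delta)$-DP of $\mstalg$ into $(k\epsilon, k e^{k\epsilon}\delta)$-DP of $\Rec_{\mstalg,R}$ under the Hamming relation.
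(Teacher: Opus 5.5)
Your proof is correct and takes the same approach as the paper: walk from $\Aenc_R(\X)$ to $\Aenc_R(\X')$ by changing one of the two affected edge weights by at most one per step. Your step size $R/\lceil R\rceil \le 1$ makes explicit a detail the paper's ``changing a single weight by one'' glosses over, namely that the chain still has exactly $2\lceil R\rceil$ steps and ends exactly at $\Aenc_R(\X')$ when $R$ is not an integer.
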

Each consecutive $G_i$ is obtained by successively changing a single weight by one. By group privacy~(see e.g. \cite[Lemma 7.2.2]{Vadhan_2017}), this means there is a $(2\lceil R \rceil \epsilon, 2\lceil R \rceil e^{2\lceil R \rceil \epsilon}\delta)$-DP guarantee on $\enc_R(\x)$ under the \emph{Hamming} neighboring relationship.
We are now ready to prove \cref{th:lb-mst}.
\begin{proof}[Proof of \cref{th:lb-mst}]
% Assume $\delta \leq 1/n$ 
Suppose for contradiction, that there exists an $(\epsilon, \delta)$-DP MST algorithm $\mstalg$ with expected additive error less than $\frac{n}{100}\cdot R$.
Assume a random dataset $\X \sim \operatorname{Uni}([n]^n)$.
Note that the decoding step is simply post-processing, and hence $\Rec_{\mstalg, R}$ parameterized by $\mstalg$ and $R$ is $(2\lceil R\rceil \epsilon, 2\lceil R\rceil e^{2\lceil R\rceil \epsilon}\delta)$-DP.
Now set $R := \Rval$ for some $c>1$.
Combining with \cref{obs:accuracy} and by the utility guarantee of $\mstalg$, we would leak at least $\frac{99}{100}\cdot n$ of the input coordinates in expectation.
Let $\vec{y} := \Rec_{\mstalg, R}(\vec{x})$ and by \cref{lem:reidentification},
\begin{align*}
\Pr{\vec{y}_i = \vec{x}_i} & \leq \frac{e^{2\lceil R\rceil \epsilon}}{n} + 2\lceil R\rceil e^{2\lceil R\rceil \epsilon}\delta\\
& \leq \frac{1}{n} \exp\left(\ln(n^{1/c})\right) + \frac{\ln(n)}{\epsilon c} \exp\left(\ln n^{1/c}\right) \delta \\
& = n^{(1-c)/c} + \frac{\ln(n)}{\epsilon c} n^{1/c} \delta\,.
%& \leq n^{(1-c)/c} + \frac{\log(n)}{\epsilon c} n^{1/c} \\
\end{align*}
By our assumption on $\delta = n^{-\Omega(1)}$, the second term vanishes.
Linearity of expectation gives
\begin{align*}
\Ex{\sum_{i\in [n]}\PAREN{\mathbb{1}\left[y_i = x_i\right]}} \leq  n \cdot n^{(1-c)/c} = n^{1/c}\revise{\,.}
\end{align*}
In particular, for a sufficiently large constant $c$, we contradict the utility guarantee of $\mstalg$.
%we have $\Pr{\Rec_{\cA, R}(\X)_i =
%X_i} < 0.99$ and by linearity of expectation, we contradict the utility guarantee of $\cA$.
Note also that we require $R \leq \frac{\ln n}{2c\epsilon}$ to
get any meaningful probability.
Thus, we have shown a lower bound of $\Omega((n/\epsilon) \ln n)$ in the dense case ($m = \Theta(n^2)$), for sufficiently small $\delta$.

It remains to show that this technique can be parameterized by the number of edges $m$ to yield a bound of $\Omega((n/\epsilon) \cdot \ln(m/n))$ for any $2n \leq m \leq n^2$.
Suppose, we want a lower bound for graphs with $m$ edges and $2n$ vertices. 
Without loss of generality, we treat $m/n$ as an integer.
We can build $n^2/m$ many subgraphs on $2(m/n)$ vertices (constructed as described above for the dense case with $n$ replaced by $m/n$).
 To make the full graph connected, we add a path through one representative vertex from each subgraph, adding only $\cO(n^2/m)$ low-cost edges that would only affect the bound by a constant.
Ignoring this lower-order term, observe that the number of edges is exactly $(m/n)^2 \cdot (n^2/m) = m$ (we ignore edges on the path of length $\leq n$).

For $i \in n^2/m$, let each subgraph encode an i.i.d. random vector $\vec{x}^{(i)}\in [m/n]^{m/n}$ of length $m/n$.
Since an $(\epsilon, \delta)$-DP algorithm satisfies the same privacy guarantees on the weights of each individual subgraph, the previous lower bound applied to a single subgraph states that we cannot reconstruct a single $\vec{x}^{(k)}$ with expected error better than $\Omega(\frac{m}{n \epsilon} \cdot \ln\frac{m}{n})$.
By linearity of expectation, we get a lower bound on the expected total error of $\Omega(\frac{m}{n\epsilon}\cdot \ln\big(\frac{m}{n})\cdot \frac{n^2}{m}\big) = \Omega\big(\frac{n}{\epsilon} \cdot \ln(\frac{m}{n})\big)$.
\end{proof}

%% file: fig/mst-reduction.tikz
\begin{tikzpicture}
	\begin{pgfonlayer}{nodelayer}
		\node [style=none] (58) at (-2.75, 1.25) {};
		\node [style=none] (59) at (-2.75, 0.25) {};
		\node [style=none] (60) at (1.75, -0.75) {};
		\node [style=none] (61) at (1.75, 3.25) {};
		\node [style=BLACK] (0) at (-1.5, 3) {};
		\node [style=GRAYN] (6) at (-1.5, -0.5) {};
		\node [style=BLACK] (7) at (-1.5, 0.75) {};
		\node [style=none] (9) at (-2.25, 0.75) {\scriptsize $l_i$};
		\node [style=none] (12) at (-2.25, 3) {\scriptsize $l_1$};
		\node [style=none] (13) at (-2.25, -0.5) {\scriptsize $l_n$};
		\node [style=BLACK] (14) at (1.5, 3) {};
		\node [style=BLACK] (15) at (1.5, 2.25) {};
		\node [style=BLACK] (16) at (1.5, 1.25) {};
		\node [style=BLACK] (20) at (1.5, -0.25) {};
		\node [style=none] (21) at (1.5, 0.5) {{\scriptsize $...$}};
		\node [style=none] (22) at (2.25, -0.25) {{\scriptsize $r_n$}};
		\node [style=none] (23) at (2.25, 3) {\scriptsize $r_1$};
		\node [style=none] (25) at (1.5, 0.25) {};
		\node [style=none] (26) at (2.25, 1.25) {\scriptsize $r_j$};
		\node [style=none] (27) at (-0.5, 2.5) {};
		\node [style=none] (29) at (-0.5, -0.25) {};
		\node [style=none] (36) at (0, 2.25) {\scriptsize $R$};
		\node [style=none] (37) at (-0.25, 2.75) {};
		\node [style=none] (38) at (0, 1.25) {\scriptsize $0$};
		\node [style=none] (42) at (-0.5, 0.5) {};
		\node [style=none] (43) at (-0.75, 0.25) {};
		\node [style=none] (44) at (-0.75, -0.5) {};
		\node [style=BLACK] (45) at (-1.5, -0.5) {};
		\node [style=none] (46) at (-1, -0.5) {};
		\node [style=none] (47) at (1, -0.25) {};
		\node [style=none] (48) at (1, 0) {};
		\node [style=none] (49) at (1.25, 0.25) {};
		\node [style=none] (53) at (0.75, 2.5) {};
		\node [style=none] (55) at (1, 2.75) {};
		\node [style=none] (56) at (1.25, 1.75) {};
		\node [style=none] (57) at (1, 1.5) {};
		\node [style=none] (62) at (1.5, 0.75) {};
		\node [style=none] (63) at (7.75, 3) {};
		\node [style=none] (64) at (7.75, 2) {};
		\node [style=none] (65) at (12.25, -0.25) {};
		\node [style=none] (66) at (12.25, 3.75) {};
		\node [style=BLACK] (67) at (9, 3.5) {};
		\node [style=GRAYN] (68) at (9, 1.25) {};
		\node [style=BLACK] (69) at (9, 2.5) {};
		\node [style=none] (70) at (8.25, 2.5) {\scriptsize $l_i$};
		\node [style=none] (71) at (8.25, 3.5) {\scriptsize $l_1$};
		\node [style=none] (72) at (8.25, 1.25) {\scriptsize $l_{\lceil \alpha n \rceil}$};
		\node [style=BLACK] (73) at (12, 3.5) {};
		\node [style=BLACK] (74) at (12, 2.75) {};
		\node [style=BLACK] (75) at (12, 1.75) {};
		\node [style=BLACK] (76) at (12, 0.25) {};
		\node [style=none] (77) at (12, 1) {{\scriptsize $...$}};
		\node [style=none] (78) at (12.75, 0.25) {{\scriptsize $r_n$}};
		\node [style=none] (79) at (12.75, 3.5) {\scriptsize $r_1$};
		\node [style=none] (81) at (12.75, 1.75) {\scriptsize $r_j$};
		\node [style=none] (82) at (10, 3) {};
		\node [style=none] (83) at (9.75, 1) {};
		\node [style=none] (84) at (10.5, 3) {\scriptsize $R$};
		\node [style=none] (85) at (10.25, 3.25) {};
		\node [style=none] (86) at (10.75, 2.25) {\scriptsize $0$};
		\node [style=none] (87) at (10, 2) {};
		\node [style=none] (88) at (10, 1.5) {};
		\node [style=none] (89) at (9.5, 1.25) {};
		\node [style=BLACK] (90) at (9, 1.25) {};
		\node [style=none] (91) at (9, 1.5) {};
		\node [style=none] (92) at (11.5, 0.25) {};
		\node [style=none] (93) at (11.5, 0.5) {};
		\node [style=none] (94) at (11.75, 0.75) {};
		\node [style=none] (95) at (11.25, 3) {};
		\node [style=none] (96) at (11.5, 3.25) {};
		\node [style=none] (97) at (11.75, 2.25) {};
		\node [style=none] (98) at (11.5, 2) {};
		\node [style=GRAYN] (100) at (9, 0.75) {};
		\node [style=GRAYN] (101) at (9, -0.75) {};
		\node [style=none] (102) at (8.25, -0.75) {\scriptsize $l_n$};
		\node [style=none] (103) at (9, 0) {{\scriptsize $...$}};
		\node [style=none] (104) at (9.5, -0.25) {};
		\node [style=none] (105) at (9.5, -0.5) {};
		\node [style=none] (106) at (9.5, -0.75) {};
		\node [style=none] (107) at (9.5, 0.5) {};
		\node [style=none] (108) at (9.5, 0.75) {};
		\node [style=none] (109) at (10, 0.25) {\scriptsize $0$};
		\node [style=none] (111) at (-0.5, 4.5) {{\scriptsize \textbf{a)} \mstLong}};
		\node [style=none] (112) at (11.5, 4.5) {{\scriptsize \textbf{b)} \mwpmLong}};
		\node [style=none] (113) at (5, 4.5) {};
		\node [style=none] (114) at (5, -1.25) {};
	\end{pgfonlayer}
	\begin{pgfonlayer}{edgelayer}
		\draw [style=striped-ultragray] (58.center)
			 to [in=-180, out=0, looseness=1.50] (61.center)
			 to (60.center)
			 to [in=0, out=180, looseness=1.50] (59.center);
		\draw [style=EDGE, in=180, out=30, looseness=0.75] (7) to (15);
		\draw [style=NiceRedThick, in=-150, out=15] (7) to (16);
		\draw [style=NiceRedThick] (0) to (14);
		\draw [style=NiceRedThick] (14) to (15);
		\draw [style=NiceRedThick] (15) to (16);
		\draw [style=NiceRedThick] (25.center) to (20);
		\draw [style=EDGE] (27.center) to (0);
		\draw [style=EDGE] (6) to (29.center);
		\draw [style=NiceRedThick, in=-120, out=30] (6) to (16);
		\draw [style=EDGE] (37.center) to (0);
		\draw [style=EDGE] (43.center) to (7);
		\draw [style=EDGE] (7) to (42.center);
		\draw [style=EDGE] (45) to (44.center);
		\draw [style=EDGE] (20) to (48.center);
		\draw [style=EDGE] (47.center) to (20);
		\draw [style=EDGE] (49.center) to (20);
		\draw [style=EDGE] (53.center) to (15);
		\draw [style=EDGE] (55.center) to (15);
		\draw [style=EDGE] (56.center) to (16);
		\draw [style=EDGE] (57.center) to (16);
		\draw [style=SLIGHTLYDASHED] (59.center) to (58.center);
		\draw [style=NiceRedThick] (62.center) to (16);
		\draw [style=striped-ultragray] (63.center)
			 to [in=-180, out=0, looseness=1.50] (66.center)
			 to (65.center)
			 to [in=0, out=180, looseness=1.50] (64.center);
		\draw [style=EDGE, in=180, out=15, looseness=0.75] (69) to (74);
		\draw [style=NiceRedThick, in=-150, out=15] (69) to (75);
		\draw [style=NiceRedThick] (67) to (73);
		\draw [style=EDGE] (82.center) to (67);
		\draw [style=EDGE] (68) to (83.center);
		\draw [style=EDGE] (85.center) to (67);
		\draw [style=EDGE] (88.center) to (69);
		\draw [style=EDGE] (69) to (87.center);
		\draw [style=EDGE] (90) to (89.center);
		\draw [style=EDGE] (76) to (93.center);
		\draw [style=EDGE] (92.center) to (76);
		\draw [style=EDGE] (94.center) to (76);
		\draw [style=EDGE] (95.center) to (74);
		\draw [style=EDGE] (96.center) to (74);
		\draw [style=EDGE] (97.center) to (75);
		\draw [style=EDGE] (98.center) to (75);
		\draw [style=SLIGHTLYDASHED] (64.center) to (63.center);
		\draw [style=NiceRedThick, bend left=15, looseness=1.25] (74) to (90);
		\draw [style=NiceRedThick, bend right=60, looseness=0.75] (100) to (76);
		\draw [style=EDGE] (107.center) to (100);
		\draw [style=EDGE] (108.center) to (100);
		\draw [style=EDGE] (104.center) to (101);
		\draw [style=EDGE] (101) to (105.center);
		\draw [style=EDGE] (106.center) to (101);
		\draw [style=EDGEDASHED] (113.center) to (114.center);
	\end{pgfonlayer}
\end{tikzpicture}

%% file: content/worst-case-bounds/minimum-weight-perfect-matching.tex
\section{Minimum-Weight Perfect Matchings}\label{sec:mwpm}
As any perfect matching has size $n/2$, the upper bound in \cref{th:ub-mst} immediately carries over to \mwpmLong as well.
We now show that the resulting additive error of $\Theta\big((n/\epsilon) \cdot \ln \big(\frac{m}{n} \big)\big)$  is tight for this problem as well (for sufficiently small $\delta$).
The encoding will be slightly different for MST because all coordinates with the same value would be encoded by the same vertex, and a matching can only recover one of those.
To fix this, we prove a standard balls and bins result stating that if $x$ has length $n' = \alpha n$ for some small $0 <\alpha\leq 1$, only a $\alpha^2$ fraction of a random $\X \sim \operatorname{Uni}([n]^{\lceil\alpha n\rceil})$ has collisions with high probability. 
%The proof to the following statement uses a standard concentration inequality and is deferred to \cref{app:constant-fraction}.
\begin{restatable}[Collision-free]{lemma}{constantfraction}\label[lemma]{lem:constant-fraction}\label{lem:collision-free}
Let $(X_1,\dots,X_d) \sim  \operatorname{Uni}\big([n]^d\big)$ with $d=\lceil \alpha n\rceil$ for a constant $0<\alpha\le1$. 
Then, with probability at least $1-\exp(-\Omega(\alpha^3 n))$, the number of indices $i$ for which there exists some $j\neq i$ with $X_i=X_j$ is $\mathcal{O}(\alpha^2 n)$.
\end{restatable}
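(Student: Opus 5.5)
Let $C$ denote the number of indices $i\in[d]$ that collide with some $j\neq i$. The plan is to bound $\Ex{C}$ directly and then apply a bounded-differences concentration inequality (McDiarmid) to the function $C(X_1,\dots,X_d)$.

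\textbf{Expectation.} For a fixed $i$, a union bound over the other $d-1$ indices gives
\[
\Pr{\exists j\neq i: X_j=X_i}\le \frac{d-1}{n}\le \alpha .
\]
Hence $\Ex{C}\le d\alpha \le (\alpha n+1)\alpha = \cO(\alpha^2 n)$, where the additive $\alpha$ is absorbed once $\alpha n\ge 1$. If instead $\alpha n<1$, then $d=1$ and $C=0$, so the claim is trivial.

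\textbf{Concentration.} $C$ is a function of the independent variables $X_1,\dots,X_d$. Changing a single $X_k$ affects the collision status of index $k$ itself. It also affects at most one index in its old bin: that index loses its collision only if the bin had exactly two elements. Likewise it affects at most one index in its new bin, which gains a collision only if the bin had exactly one element. Hence $C$ changes by at most $3$, so $C$ has bounded differences $c_k=3$. McDiarmid's inequality then gives
\[
\Pr{C\ge \Ex{C}+t}\le \exp\!\left(-\frac{2t^2}{9d}\right).
\]
Take $t=\alpha^2 n$. Since $d\le \alpha n+1\le 2\alpha n$, the exponent is $\Omega(\alpha^4n^2/(\alpha n))=\Omega(\alpha^3 n)$. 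Therefore, with probability at least $1-\exp(-\Omega(\alpha^3 n))$, we have $C\le \Ex{C}+\alpha^2 n=\cO(\alpha^2 n)$, which is the claim.

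\textbf{Main obstacle.} The only delicate step is the bounded-difference constant: one must check that recoloring one ball changes the collision count by $\cO(1)$ rather than by the size of a bin. This holds because an index's collision status flips only when its bin passes between sizes $1$ and $2$. The remaining care is with rounding in $d=\lceil\alpha n\rceil$, which affects only constants.
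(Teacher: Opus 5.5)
Your proof is correct and follows the same route as the paper: a union bound giving $\mathbb{E}[C]\le d(d-1)/n=\mathcal{O}(\alpha^2 n)$, then McDiarmid's inequality for the bounded-differences function $C$. There are two small differences. The paper uses the sharper constant $2$, which is valid because if both the old-bin and new-bin partners flip then index $k$ itself stays colliding, so your $3$ is merely loose. More usefully, the paper sets $t=\mathbb{E}[Y]$, which tacitly needs a lower bound $\mathbb{E}[Y]=\Omega(\alpha^2 n)$ that it never proves, whereas your choice $t=\alpha^2 n$ avoids that gap.
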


\begin{proof}
Define $\tilde{X}_i=\mathbb{1}\left[\exists\, j\neq i: X_j=X_i\right]$ and let \(Y=\sum_{i=1}^d\tilde{X}_i\). 
By a union bound, \(\mathbb{E}[\tilde{X}_i]\le\sum_{j\neq i}\Pr{X_j=X_i}=(d-1)/n\), so that $\mathbb{E}[Y]\le d(d-1)/n = \alpha^2 n-\alpha\le\alpha^2 n$.
Since changing a single $X_i$ affects at most two of the \(\tilde{X}_i\)'s, \(Y\) is 2-Lipschitz. By McDiarmid's inequality \citep{McDiarmid_1989}, for any \(t>0\), $\Pr{Y\ge\mathbb{E}[Y]+t}\le\exp(-2t^2/(4d))$.
Setting $t=\mathbb{E}[Y]$ gives $\Pr{Y\ge2\mathbb{E}[Y]}\le\exp(-\mathbb{E}[Y]^2/(2d))=\exp(-\Omega(\alpha^3 n))$.
Thus, with probability at least \(1-\exp(-\Omega(\alpha^3 n))\), we have $Y\le2\mathbb{E}[Y]=\cO(\alpha^2 n)$.
\end{proof}

%\drafty{
%In other words, inside a small constant fraction $\alpha$ of the dataset $\X$, only a constant fraction $\alpha^2$ has collisions with high probability.\lukasnote{Maybe remove as duplicate.}
%Therefore, we can get a collision-free subset of the dataset of size roughly $\alpha n - \alpha^2 n \in \Theta(n)$ which is enough for our previous technique to work in this setting.
%}

%Each vertex $v_i$ on the right side represents the values in $[n]$ and will be mapped to the coordinates represented by the left part.
%Assuming that no two coordinates have the same value, it is clear that a perfect matching allows to reconstruct $\X$, but if they have, the previous construction does not work any more.
%Therefore, we cut the dataset $\X' = (X_1, \cdots, X_{n'})$ for some $n' =  \alpha n \in \Theta(n)$ for some $0 < \alpha < 1$ and only encode those coordinates that don't collide with any other coordinate which is roughly $\alpha n  -\alpha^2 n$, still a constant fraction of the dataset with high probability.
%being perfect sets as in front of a challenge, namely that you can not encode the same 

%Given a weighted graph $G$, the problem asks for a subset $M \subseteq E$ where each vertex $v\in E$ is adjacent to exactly one edge in $M$ while minimizing $w(M) = \sum_{m \in M}w(m)$.

\begin{restatable}[Lower bound MWPM]{theorem}{lbmatching}\label{th:lb-matching}
Let $\epsilon > 0$, $\delta \leq (n/m)^{\Omega(1)}$, and $m>2n$.
Then there exists a graph topology $F=(V,E)$ and a distribution of weights $\cD_\omega$ on weights of $E$ such that for any $(\epsilon, \delta)$-DP protocol  $\cB$ that outputs an approximate \emph{minimum-weight perfect matching} $M$ under the $\ell_1$ neighboring relationship, if the weights $\W \sim \cD_\omega$, then the expected additive error satisfies
\[ 
 w(M) - w(M^*) \geq \Omega((n/\epsilon) \cdot \min(\ln(1/\delta), \ln(m/n)))\,,
\] where $M^*$ is an optimal minimum-weight perfect matching.
\end{restatable}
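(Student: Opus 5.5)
We follow the proof of \cref{th:lb-mst} and change only the encoding, so that every optimal perfect matching still reveals most coordinates. We first treat the dense case $m=\Theta(n^2)$. Fix a small constant $0<\alpha\le 1/2$ and let $d=\lceil\alpha n\rceil$. Draw $\x\sim\operatorname{Uni}([n]^d)$. The topology is the complete bipartite graph between $V_1=\{\ell_1,\dots,\ell_n\}$ and $V_2=\{r_1,\dots,r_n\}$, as in Figure~\ref{fig:encode-as-graph}(b). For $i\le d$, set $w(\{\ell_i,r_{x_i}\})=0$ and $w(\{\ell_i,r_j\})=R$ for $j\ne x_i$. For the padding vertices $\ell_i$ with $i>d$, all edges get weight $0$. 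Since the padding edges are independent of $\x$, the instance always has a perfect matching of weight $0$: match each $\ell_i$ with $i\le d$ to $r_{x_i}$ where possible, and complete with padding vertices. This works only when the values $x_i$ are distinct. So we first condition on the collision-free event.

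By \cref{lem:collision-free}, with probability $1-\exp(-\Omega(\alpha^3 n))$ at most $C\alpha^2 n$ coordinates collide. Call this event $\cE$. On $\cE$, choose one representative index for each repeated value. Then $w(M^*)\le R\cdot C\alpha^2 n$. Indeed, a zero matching exists after assigning at most $C\alpha^2 n$ colliding coordinates to free right vertices at cost $R$ each; the counting works because $n-d\ge d$ right vertices remain free. The decoder outputs $y_i=j$ where $\{\ell_i,r_j\}\in M$; this is well defined because $M$ is perfect. As in \cref{obs:accuracy}, every wrongly decoded $i\le d$ uses a distinct edge of weight $R$, so $d_H(\vec y,\x)\le w(M)/R$. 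Suppose the algorithm has expected additive error below $\beta nR$ for a small constant $\beta$. Then, on $\cE$, the expected number of errors is at most $\beta n + C\alpha^2 n$. Off $\cE$, we add at most $d\cdot\exp(-\Omega(\alpha^3 n))=o(1)$. Choosing $\alpha$ small relative to $1$ and then $\beta\ll\alpha$, the attack correctly recovers at least $\alpha n/2$ coordinates in expectation.

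On the privacy side, \cref{lem:induced-hamming-neighborhood} applies verbatim: changing one $x_i$ changes two weights by $R$. Hence the attack is $(2\lceil R\rceil\epsilon,\,2\lceil R\rceil e^{2\lceil R\rceil\epsilon}\delta)$-DP under the Hamming relation on $[n]^d$. With $R=\min(\ln n,\ln(1/\delta))/(2c\epsilon)$, \cref{lem:reidentification} with $|\cX|=n$ bounds the expected number of correct coordinates by $d\,(n^{1/c-1}+o(1))=o(n)$, using $\delta\le n^{-\Omega(1)}$. This contradicts the $\alpha n/2$ bound for large $c$. Hence the error is $\Omega(\beta nR)=\Omega((n/\epsilon)\min(\ln n,\ln(1/\delta)))$.

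For general $m$, we take $n^2/m$ disjoint copies of the dense gadget, each on $\Theta(m/n)$ vertices with an independent dataset in $[m/n]^{\alpha m/n}$. Because the copies are disjoint components, a perfect matching of the union is exactly a union of perfect matchings of the copies; no connecting path is needed, unlike for MST. The per-copy argument gives error $\Omega((m/(n\epsilon))\ln(m/n))$, where we need $\delta\le (n/m)^{\Omega(1)}$ for the $\delta$-term to vanish. Summing over copies by linearity gives $\Omega((n/\epsilon)\ln(m/n))$. The main obstacle is the collision-handling step: we must verify that $w(M^*)=O(R\alpha^2 n)$ holds on $\cE$, and that the failure probability of \cref{lem:collision-free} remains negligible when the gadget size $m/n$ is only a constant. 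In that regime we would instead take $\alpha$ fixed and bound the collision count in expectation alone, since $\EX[Y]\le\alpha^2 n'$ suffices for the expected-error argument without any concentration.
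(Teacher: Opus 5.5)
Your proposal is correct and follows essentially the same route as the paper. Both use the padded complete-bipartite encoding of $\x\in[n]^{\lceil\alpha n\rceil}$, absorb collisions (\cref{lem:collision-free}) as an additive $O(\alpha^2 n)$ decoding loss, apply group privacy over $2\lceil R\rceil$ steps together with \cref{lem:reidentification}, and handle general $m$ with $n^2/m$ disjoint, unconnected copies. Your remark that the bound $\EX[Y]\le\alpha^2 n$ alone suffices, with no concentration needed, is a minor simplification of the paper's high-probability argument rather than a different approach.
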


\begin{proof}[Proof sketch.]
Using a similar encoding and observation as before, we can show that for some suitable constant $0< \alpha < 1$, for some an encoded random vector $\vec{x}\in \mathbb{R}^{\lceil \alpha n\rceil}$, at least $\alpha n (0.99 - \alpha)$ coordinates could be reconstructed in expectation.
This stands in contradiction with an allowed additive error of $\leq \alpha n^{-\Omega(1)}$ imposed by \cref{lem:reidentification} for $\delta \leq n^{-\Omega(1)}$. 
Finally, the same generalization technique to $m$ edges works here again.
Find the full proof in \cref{apx:proof-lb-mwpm}.
\end{proof}

%% file: content/hc.tex
\section{Hierarchical Clustering under Dasgupta's Cost Function}\label{sec:rc-hc}

In this section, we design the first reconstruction-attack based lower bound for hierarchical clustering. Observe that our approach for MST, where the dataset is encoded into every edge of the topology, does not work here. This is because, by a simple concentration argument, every cut in the graph will have weight approximately half the total size of the cut, preventing any cheap hierarchical clustering from existing. Instead, we use a sparse dataset encoding, which requires new technical ideas.

Before describing the attack, we introduce new graph notation. 
For a topology $F = (V,E)$, let $d_{\max}(F)$ denote the maximum degree  vertex in $F$.
Also, let $\phi(F)$ denote the minimum-size balanced cut in $F$; i.e. $\phi(F) = \min_{A,B \text{ balanced cut}} |E(A,B)|$ where $|E(A,B)|$ denotes the number of edges crossing the cut. For any $n$ and $d \geq 3$, there exist \emph{spectral expander graphs} where every node has degree $d$ and $\phi(F) \geq \Omega(nd)$~\citep{Vadhan2012}. These are the graphs where our bounds will be tightest.
%$\phi(F)$ is closely related to the spectra of $F$. Specifically, we say that a $d$-regular graph $F$ has spectral expansion $\lambda$ if $1-\lambda_2\big(\frac{1}{d}\textbf{A}_F\big) \geq \lambda$, where $\lambda_2(\cdot)$ denotes the second-largest eigenvalue of a matrix, and $\textbf{A}_F$ is interpreted as the adjacency matrix of $F$. Then, by Cheeger's inequality, it holds that $\phi(F) \geq \Omega(\lambda nd)$. Furthermore, for any constant $d$, there exist spectral expander graphs satisfying $\lambda \geq \Omega(1)$~ \cite{Vadhan2012}, where our bounds will be tightest.%\lukasnote{Shouldn't it rather be $\geq \Omega(n)$.}

To ensure the existence of a low-cost hierarchical clustering, we will encode a random dataset into a random \emph{sparse} subset, where each edge is sampled with probability $\leq 1/d_{\max}(F)$.
After privately computing the hierarchical clustering, we use the \emph{induced balanced-cut} (\cref{lem:hc-to-bc}) to try to reconstruct edges carrying encoded data that cross this cut; these edges will form the set $I$.
The error guarantee is sufficient to establish that $\EX[\sum_{i \in I} x_i]$ is low. However, because the encoding is now sparse, it becomes harder to argue that $\Ex{\abs{I}}$ is sufficiently large; \revise{i.e. that there are sufficiently many sampled edges crossing the cut which carry a $0$. In principle, a low-weight hierarchical clustering could also avoid many sampled edges and drive $|I|$ low.}

To circumvent this, we make the observation that the clustering algorithm cannot distinguish sampled vs. non-sampled edges among the edges of weight $0$, and thus the number of sampled edges in the cut carrying $0$ is not much lower than its expectation by another concentration argument. 
This enables us to prove a lower bound of $\Omega\big(\frac{\phi(F)}{\epsilon d_{\max}(F)}n\big)$ that grows with the \emph{minimum} balanced cut size. 
Somewhat counterintuitively, the lower bound shrinks, rather than grows, with the maximum degree, which is a byproduct of our sparse encoding. 
Nonetheless, this bound applies to many topologies, and is tight up to a logarithmic factor for sparse expanders (see discussion in Section~\ref{sec:lb-proof}).

\subsection{Reconstruction Attack Outline}

Our reconstruction attack encodes the dataset into a random subsample of edges $E' \subseteq E$, where each edge $e \in E$ is selected i.i.d. with probability $p = \frac{0.5}{d_{\max}(F)}$.
The dataset is encoded by taking an arbitrary bijection $\ell : E' \rightarrow [k]$ (a label function) where $k = |E'|$, and setting $w(e) := R\cdot x_{\ell(e)}$ for all $e \in E'$, and $w(e) := 0$ for all $e \in E \setminus E'$. 
This sparse encoding ensures that there is a low-cost hierarchical clustering.
Our reconstruction attack will not use the entire returned hierarchical clustering $T$ to reconstruct data. 
Instead, it will use only the edges in the induced balanced cut of $T$, which is defined in the following lemma from~\cite{dasgupta2016cost}. 
Their proof is constructive and forms the balanced cut with a procedure similar to a heavy-light decomposition.
We refer to the balanced cut guaranteed by \cref{lem:hc-to-bc} as the \emph{induced} balanced cut of $T$.
\begin{lemma}[{\cite[Lemma 11]{dasgupta2016cost}}]\label[lemma]{lem:hc-to-bc}
    For any hierarchical clustering $T$, there exists a balanced cut $A,B$ with $|A|, |B| \in (\frac{n}{3}, \frac{2n}{3})$ such that $\frac{w(A,B)}{\abs{A} \cdot \abs{B}} \leq \frac{27}{4n^3} \cdot \hccost{w}(T)$. 
    Moreover, there exists a procedure $\mathsf{HCtoBC}(T)$ which outputs the cut in linear time.
\end{lemma}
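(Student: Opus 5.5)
The plan is to take $A$ to be the leaf set of a single well-chosen internal node of $T$, reached by walking down the \emph{heavy path} from the root, and $B=V\setminus A$. Every edge crossing this cut then has its lowest common ancestor strictly above that node, in a cluster of size more than $2n/3$. This lets us charge $w(A,B)$ against $\hccost{w}(T)$ with a factor of roughly $n$. The argument relies on the weights being non-negative, as in Dasgupta's definition.

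First, I would assume without loss of generality that $T$ is binary, as noted after the definition of Dasgupta's cost. Write $|v|$ for the number of leaves below a node $v$. Let $v_0$ be the root. While $|v_j|>2n/3$, let $v_{j+1}$ be the child of $v_j$ with more leaves, and stop at the first $v_k$ with $|v_k|\le 2n/3$. This terminates because leaves have size $1$ (assume $n\ge 2$). It takes linear time once subtree sizes are computed bottom-up, which gives the procedure $\mathsf{HCtoBC}$. Since $v_k$ is the heavier child of $v_{k-1}$ and $|v_{k-1}|>2n/3$, we get $|v_k|\ge |v_{k-1}|/2>n/3$. So $A:=$ leaves of $v_k$ and $B:=V\setminus A$ satisfy $n/3<|A|\le 2n/3$ and $n/3\le |B|<2n/3$. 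The boundary case of equality at $2n/3$ is handled by the non-strict balanced-cut definition, or by stopping at $|v_k|<2n/3$ with trivial adjustments.

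Next comes the charging step. For any edge $\{u,v\}$ with $u\in A$ and $v\in B$, the lowest common ancestor $u\vee v$ is an ancestor of $v_k$ different from $v_k$, i.e.\ some $v_j$ with $j<k$. Hence $|T[u\vee v]|\ge |v_{k-1}|>2n/3$. Dropping the non-crossing edges, which is valid since weights are non-negative, gives
\[
\hccost{w}(T)\;\ge\;\sum_{\{u,v\}\in E\cap(A\times B)}|T[u\vee v]|\,w(\{u,v\})\;\ge\;\tfrac{2n}{3}\,w(A,B).
\]
Finally, because both sides lie in $[n/3,2n/3]$, we have $|A|\cdot|B|\ge \frac{n}{3}\cdot\frac{2n}{3}=\frac{2n^2}{9}$. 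Combining the two bounds,
\[
\frac{w(A,B)}{|A|\,|B|}\le \frac{9}{2n^2}\cdot\frac{3}{2n}\,\hccost{w}(T)=\frac{27}{4n^3}\,\hccost{w}(T).
\]

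I do not expect a serious obstacle; the main care point is the balance bound. It uses the heavier-child choice to get $|A|>n/3$, which is where binarity of $T$ is needed. With a higher-arity node one could only guarantee a $1/\text{arity}$ fraction, so the reduction to binary trees must come first.
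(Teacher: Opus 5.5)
Your proof is correct and follows the same route as the proof the paper relies on: the paper gives no argument of its own but cites Dasgupta's constructive proof and describes it as a heavy-light-decomposition-style procedure, which is exactly your heavy-path descent. The $\frac{9}{2n^2}\cdot\frac{3}{2n}=\frac{27}{4n^3}$ bookkeeping also checks out.

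One small remark concerns the boundary. Your argument yields $|A|,|B|\in[\frac n3,\frac{2n}{3}]$, which matches the paper's own definition of a balanced cut and is all that is used later. The strict interval in the statement cannot hold in general, since for $n=3$ it contains no integer. Your alternative of stopping once $|v_k|<\frac{2n}{3}$ would not repair this either, because it merely allows $|A|=\frac n3$.
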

%If we were to encode a uniformly random dataset $\x \sim \{0,1\}^m$ into $w$, then with high probability, every balanced cut $A,B$ would have weight $w(A,B) \geq \frac{\phi(F)}{4}$ resulting in ${\hccost{w}(T^*) \geq \frac{\phi(F)}{4}\cdot n}$. This already prevents a satisfying additive lower bound for spectral expander graphs, where even if every edge was assigned weight $1$, the cost of hierarchical clustering would scale with $\cO(\phi(F)n \log(n))$.

The attack proceeds by running a hierarchical clustering algorithm $\cA_\hc$ on the graph $(V,E,w)$, obtaining a low-cost hierarchical clustering $T$, and computing the balanced cut $A,B$ of $T$ (which, by Lemma~\ref{lem:hc-to-bc}, is also guaranteed to have low cost). 
Then, it uses the reconstruction set $I = \{\ell(e) : e \in E'(A,B)\}$ (i.e. those data elements which are encoded to edges that cross $A,B$), and guesses $y_i = 0$ for every $i \in I$. 
This reconstruction procedure, $\ra^{\hc}$, appears in Algorithm~\ref{alg:ra-hc}.

\begin{algorithm}[t]
\caption{Reconstruction Attack $\ra^{\hc}$ for Hierarchical Clustering}\label{alg:ra-hc}
    \begin{algorithmic}[1]
    \REQUIRE{Public graph topology $F = (V, E)$, dataset $\x \in \{0,1\}^m$, parameter $R$, HC algorithm $\cA_{\hc}$}
        \STATE{Initialize weight function $w:E \rightarrow \{0, R\}$}.
        \STATE{Subsample $E'\subseteq E$ by selecting each edge from $E$ with $p = \probp$. 
        Let $k = \abs{E'}$.}
        \STATE{Let $\ell : E' \rightarrow [k]$ be any bijection from $E'$ to $[k]$ labels.}
        \STATE Set $w(e) := R\cdot x_{\ell(e)}$ for all $e \in E'$ and $w(e) := 0$ for all $e \in E \setminus E'$.\COMMENT{Encode $\x$ into the weights}
        \STATE{Compute hierarchical clustering $T:= \cA_{\hc}\big((F, w)\big)$.}
        \STATE{Compute induced balanced cut $(A,B) := \mathsf{HCToBC}(T)$.}
        \COMMENT{See \cref{lem:hc-to-bc}}
        \STATE{Initialize reconstruction set $I = \{\ell(e) : e \in E'(A,B)\} \subseteq [n]$.}
        \STATE{\textbf{for each} $i \in I$, guess $y_{i} = 0$.}
        \RETURN{Reconstruction set $I$, reconstructed vector $\vec{y} = \{y_i : i \in I\}$.}
    \end{algorithmic}
\end{algorithm}

An important observation for our analysis is the following: The edges $e \in E'$ such that $x_{\ell(e)} = 0$ appear the same as the edges $e \in E \setminus E'$ in the weight function (both appear as $0$), and thus the returned tree $T$ gives us no further information to which of these two sets an edge of weight $0$ belongs. Formally, define $E_0' = \{ e \in E' : x_{\ell(e)} = 0\}$ and $\{E_1' = \{e \in E' : x_{\ell(e)} = 1\}$. The observation can be stated as:
\begin{observation}\label[observation]{obs:independence}
    In Algorithm~\ref{alg:ra-hc}, the tree $T$, and subsequent post-processing of it, are conditionally independent of $E_0'$ given $E_1'$.
\end{observation}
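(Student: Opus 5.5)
The plan is to write out the joint distribution of everything Algorithm~\ref{alg:ra-hc} generates and show that it factors. The randomness consists of three independent sources: the subsample $E'$, in which each edge is included independently with probability $p$; the uniform dataset $\x$, whose bits are i.i.d.\ uniform; and the internal coins of $\cA_{\hc}$. For the purposes of the statement, the label map $\ell$ can be taken as a deterministic function of $E'$, for instance via a fixed ordering of $E$. The reason this matters is that $E'$, $\ell$ and $\x$ influence $T$ only through the weight function $w$, and $w$ is determined by $E_1'$ alone: $w(e)=R$ exactly when $e\in E_1'$, and $w(e)=0$ otherwise, whether $e\in E_0'$ or $e\in E\setminus E'$. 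So $T=\cA_{\hc}((F,w))$ is a randomized function of $E_1'$ whose internal randomness is independent of $(E',\x)$. The induced balanced cut $\mathsf{HCtoBC}(T)$ and any other post-processing of $T$ then inherit the same property.

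From this I conclude the following. Conditioned on $E_1'=S$, the pair $(E_0',\text{coins})$ has a product distribution, with the coins independent of $(E',\x)$ as before. The tree $T$ is a fixed function $f_S(\text{coins})$. Hence for any events $\mathcal{U}$ on trees and $\mathcal{V}$ on edge sets,
\[
\Pr{T\in\mathcal{U},\,E_0'\in\mathcal{V}\mid E_1'=S}=\Pr{f_S(\text{coins})\in\mathcal{U}}\cdot\Pr{E_0'\in\mathcal{V}\mid E_1'=S},
\]
which is exactly the claimed conditional independence.

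It is also worth recording the explicit conditional law of $E_0'$, because the later concentration argument will need it. Each edge independently falls in $E_1'$ with probability $p/2$, in $E_0'$ with probability $p/2$, and in $E\setminus E'$ with probability $1-p$. Conditioned on $E_1'=S$, therefore, each edge of $E\setminus S$ lies in $E_0'$ independently with probability $\frac{p/2}{1-p/2}$.

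The one delicate point is $\ell$. If $\ell$ were allowed to depend on $E'$ in a way that leaks into $\x_{\ell(\cdot)}$ or into $\cA_{\hc}$, the factorization could fail. So I would state explicitly that $T$ sees only $(F,w)$, which is how the algorithm is defined, and note that since $\x$ is i.i.d.\ uniform, the multiset of bits placed on $E'$ is i.i.d.\ uniform under any bijection $\ell$ chosen independently of $\x$. That is what gives each sampled edge an independent fair bit, and hence the per-edge three-way split above.
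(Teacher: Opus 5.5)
Your proposal is correct and takes the same approach as the paper. The weight function $w$ is determined by $E_1'$ alone, since edges in $E_0'$ and in $E\setminus E'$ both receive weight $0$, so $T$, and anything computed from $T$, depends only on $E_1'$ and the independent coins of $\cA_{\hc}$. Your factorization formalizes this, and the conditional law you record for $E_0'$ is exactly what the paper derives in the proof of Lemma~\ref{lem:cut-lb}. Your caveat about $\ell$ is needed only for that conditional law, not for the conditional independence itself.
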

We will use this observation to show that there are many edges in $E_0'(A,B)$, just due to the randomness of $E'$, and thus $I$ will be sufficiently large.

\subsection{Reconstruction Attack Lower Bound}\label{sec:lb-proof}
We will now prove the following reconstruction attack.
\begin{restatable}[Lower Bounds HC]{theorem}{main}\label{thm:hc-reconstruction}
    Let $F = (V,E)$ be a public graph topology such that $d_{\max}(F) \geq 8$, and $\phi(F) \geq \max\{8n, 10000  d_{\max}(F) \log n\}$.
    Suppose $\hc$ is a hierarchical clustering algorithm with additive error $\leq R \cdot \frac{n\phi(F)}{400 d_{\max}(F)}$ for some $R > 0$. Then, $\suc(\ra^{\hc}) \geq 0.4$. Furthermore, the expected size of the reconstructed set is at least $\tfrac{\phi(F)}{8 d_{\max}(F)}$.
%\end{theorem}
\end{restatable}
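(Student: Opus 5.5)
The plan is to bound the numerator $\EX[\sum_{i\in I} x_i]$ from above and the denominator $\EX[|I|]$ from below. Since $\suc(\ra^{\hc}) = \frac12 - \EX[\sum_{i\in I}x_i]/\EX[|I|]$, it suffices to show the ratio is at most $0.1$.

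\emph{Step 1: a cheap optimal clustering exists.} Fix any balanced cut $(A^*,B^*)$ of $F$, for instance a minimum one. Consider the tree that first splits $V$ into $A^*,B^*$ and then splits each side arbitrarily. Each edge of $E$ lies in $E'$ independently with probability $p=\frac{0.5}{d_{\max}(F)}$ and, if sampled, has weight $R$ with probability $\frac12$. Every edge costs at most $nR$, so $\EX[\hccost{w}(T^*)] \le nR\cdot p\cdot\frac12\cdot |E| \le nR\cdot \frac{n}{8}$, using $|E|\le n d_{\max}/2$. Because $\phi(F)\ge 8n$, this is at most $R\cdot \frac{n\phi(F)}{64 d_{\max}}$. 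With the assumed additive error, $\EX[\hccost{w}(T)] \le c_1 R\, n\phi(F)/d_{\max}$ for a constant $c_1$ of order $1/60$.

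\emph{Step 2: the numerator is small.} By \cref{lem:hc-to-bc}, the induced cut satisfies $w(A,B)\le \frac{27}{4n^3}|A||B|\,\hccost{w}(T)\le \frac{27}{4n^3}\cdot\frac{2n^2}{9}\hccost{w}(T)=\frac{3}{2n}\hccost{w}(T)$. Since $R\sum_{i\in I}x_i = w(A,B)$, we get $\EX[\sum_{i\in I}x_i] \le \frac{3}{2}c_1 \phi(F)/d_{\max}$, which is a small constant times $\phi(F)/d_{\max}$.

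\emph{Step 3: the denominator is large.} This is the main obstacle. Condition on $E_1'$. By \cref{obs:independence}, the cut $(A,B)$ is then independent of $E_0'$. Given $E_1'$, each edge of $E\setminus E_1'$ belongs to $E_0'$ independently with probability $q = \frac{p/2}{1-p/2}\ge p/2$. The cut $(A,B)$ is balanced, so $|E(A,B)|\ge\phi(F)$. The set $E_1'$ is itself a binomial sample with rate $p/2$, so $|E_1'(A,B)|$ should be small compared with $\phi(F)$, although $(A,B)$ depends on $E_1'$. I would handle this by a union bound over all balanced cuts, using the fact that a cut of size $s$ has $|E_1'(A,B)|\le s/2$ except with probability $e^{-\Omega(ps)}$. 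The number of balanced cuts can be as large as $2^n$, so the union bound needs $ps\gtrsim n$. This holds because $s\ge\phi(F)\ge 8n$ and $p=\frac{1}{2d_{\max}}$, giving only $ps\ge 4n/d_{\max}$. That is too weak, so I would instead avoid the union bound. Given $E_1'$ and the chosen cut, $|E_0'(A,B)|\sim\Bin(|E(A,B)\setminus E_1'|, q)$. I would lower bound $|E(A,B)\setminus E_1'|$ by $|E(A,B)| - |E_1'|$ restricted to the cut, and use Step 2: the weight bound gives $R|E_1'(A,B)| = w(A,B)$, and $\EX|E_1'(A,B)|$ is at most a small constant times $\phi/d_{\max}$. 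This is negligible compared with $\phi$ because $d_{\max}\ge 8$. Then a Chernoff bound applies, valid since $q\phi \ge \phi/(4d_{\max}) \ge 2500\log n$, showing that $|E_0'(A,B)|\ge \frac{p}{4}\phi(F)=\frac{\phi(F)}{8d_{\max}}$ with probability $1-n^{-\Omega(1)}$. This gives $\EX[|I|]\ge \frac{\phi(F)}{8d_{\max}}$, up to lower-order terms that the constants absorb.

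\emph{Step 4: combine.} The ratio $\EX[\sum x_i]/\EX|I|$ equals a small constant times $8$, which is at most $0.1$ once the constants from Steps 1 and 2 are tracked (roughly $\frac{3}{2}\cdot\frac{1}{60}\cdot 8\approx 0.2$). If this is slightly too large, I would tighten Step 1 by noting that only the sampled weight-$R$ edges cost anything, so $\EX[w(E)] = Rp|E|/2\le Rn/8$, and by charging cut edges at cost $n$ and non-cut edges at cost $2n/3$. Together these give $\suc(\ra^{\hc}) \ge 0.4$.
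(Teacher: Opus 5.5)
Your Step~1 fails, and it is the crux of the theorem. From $\EX[\hccost{w}(T^*)]\le Rn^2/8$ you infer $\EX[\hccost{w}(T^*)]\le R\,\frac{n\phi(F)}{64 d_{\max}}$ ``because $\phi(F)\ge 8n$''. That step needs $\phi(F)\ge 8n\,d_{\max}$, which is impossible since $\phi(F)\le |E|\le n d_{\max}/2$. In fact the whole error budget $R\,\frac{n\phi(F)}{400 d_{\max}}$ is at most $Rn^2/800$. So no bound of the form $O(n\cdot w(E))=\Theta(Rn^2)$ on the optimum can work, and your Step~4 tightening (charging cut edges $n$ and the others $2n/3$) is still of this form. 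Fed into Step~2, such a bound only gives $\EX|E_1'(A,B)|\le\frac{27}{128}n+\dots$. This already exceeds $\EX|E_0'|=\frac p2|E|\le n/8$, so the ratio in Step~4 cannot be pushed below $0.1$.

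The missing idea is that the optimum is actually tiny, $O(Rn\log n)$. The sampling rate $p=\frac{0.5}{d_{\max}}$ is subcritical: a BFS exploration of a component of $E'$ is dominated by a branching process with $\Bin(d_{\max},p)$ offspring, of mean $\frac12$. Hence with high probability every component of $E'$ has size $O(\log n)$ and $|E'|=O(n)$. A tree that first separates these components then pays at most $R\cdot O(\log n)$ per weight-$R$ edge. This is Lemma~\ref{lem:good-cluster} (via Lemma~\ref{lem:small-comp}). It is also exactly where the hypothesis $\phi(F)\ge 10000\,d_{\max}\log n$ is needed: it absorbs the resulting $50\log n$ term so that $\EX|E_1'(A,B)|\le\frac{\phi(F)}{100 d_{\max}}$. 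Your proposal uses that hypothesis only in an unnecessary Chernoff step, which is a symptom of the gap. (Minor: $|A||B|\le n^2/4$, not $2n^2/9$, so the constant in Step~2 is $\frac{27}{16n}$.)

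Your Step~3, by contrast, works once Step~1 is repaired, and it takes a different route from the paper. Your reason for dropping the union bound is a miscalculation. The threshold is a constant fraction of $s$, while the mean of $\Bin(s,p/2)$ is at most $s/32$, so the tail is $e^{-\Omega(s)}$, not $e^{-\Omega(ps)}$. With $s\ge\phi(F)\ge 8n$ this beats the $2^n$ balanced cuts. That is the paper's Lemma~\ref{lem:bc-mult-shrink}, after which the paper lower bounds $\EX[|E_0'(A,B)|\mid E_1']$ by a minimum over all balanced cuts.

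Your fallback is nonetheless correct and simpler. By Observation~\ref{obs:independence} and linearity alone, $\EX|E_0'(A,B)|=q\,\EX\bigl[|E(A,B)|-|E_1'(A,B)|\bigr]\ge q\bigl(\phi(F)-\EX|E_1'(A,B)|\bigr)$. With $q\ge\frac{1}{4d_{\max}}$ and $\EX|E_1'(A,B)|\le\frac{\phi(F)}{100d_{\max}}$, this gives $\EX|E_0'(A,B)|>\frac{\phi(F)}{8d_{\max}}$ and a ratio of at most $0.08$. This route sidesteps Lemma~\ref{lem:bc-mult-shrink} entirely, which is the only place the paper uses $\phi(F)\ge 8n$. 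The high-probability Chernoff claim you append is unjustified, since $|E_1'(A,B)|$ is controlled only in expectation, but it is also unnecessary.
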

This yields the following immediate corollary whose proof is identical to Corollary~\ref{coro:mst-lb}:
\begin{corollary}\label[corollary]{coro:hc-lb}
    Let $F = (V,E)$ be a public graph topology satisfying the conditions of \Cref{thm:hc-reconstruction}. Then, for any $\epsilon \leq 1, \delta \leq \frac{0.01\epsilon}{m}$, there is no $(\epsilon, \delta)$-DP algorithm for hierarchical clustering with error less than $\frac{n \phi(F)}{400\epsilon d_{\max{(F)}}}$.
\end{corollary}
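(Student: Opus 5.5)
The corollary follows the template of Corollary~\ref{coro:mst-lb}. Suppose for contradiction that some $(\epsilon,\delta)$-DP algorithm $\cA_{\hc}$ has additive error less than $\frac{n\phi(F)}{400\epsilon d_{\max}(F)}$. We instantiate $\ra^{\hc}$ with $\cA_{\hc}$ and $R = \frac{1}{\epsilon}$. The error is then at most $R\cdot\frac{n\phi(F)}{400 d_{\max}(F)}$, so Theorem~\ref{thm:hc-reconstruction} gives $\suc(\ra^{\hc}) \geq 0.4$.

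Next we need the privacy guarantee of $\ra^{\hc}$ as a mechanism on the binary dataset, under the Hamming relation. Flipping one bit $x_i$ changes exactly one edge weight by $R$, or none if $i > k$; the subsample $E'$ and the labeling $\ell$ are chosen independently of $\x$. Conditioned on $E'$, neighboring datasets are therefore joined by a chain of $\lceil R\rceil$ graphs that are $\ell_1$-neighbors. Group privacy then makes $\cA_{\hc}$, together with the post-processing $\mathsf{HCtoBC}$ and the construction of $(I,\vec y)$, an $(\epsilon\lceil R\rceil,\ \lceil R\rceil e^{\epsilon\lceil R\rceil}\delta)$-DP mechanism. Averaging over the data-independent randomness of $E'$ preserves this guarantee. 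Since $\epsilon\le 1$, we have $\epsilon\lceil R\rceil \le 1+\epsilon \le 2$, so the mechanism is $(2, \frac{2e^2}{\epsilon}\delta)$-DP, using $\lceil R\rceil \le \frac{2}{\epsilon}$.

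Finally we apply Theorem~\ref{lem:reconst} with dimension $d$, where $d\le m$ since $k=|E'|\le m$; we can pad the dataset to length $m$ with unused coordinates. This gives
\[
\suc(\ra^{\hc}) \le \frac{e^2-1}{2(e^2+1)} + \frac{m\cdot 2e^2\delta/\epsilon}{e^2+1}.
\]
The first term is about $0.381$. With $\delta \le \frac{0.01\epsilon}{m}$ the second term is at most $\frac{0.02e^2}{e^2+1}\approx 0.018$, so the total is below $0.4$. This contradicts Theorem~\ref{thm:hc-reconstruction}.

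The main obstacle is the bookkeeping. One point is checking that the random subsampling does not break the DP reduction, which holds because it is independent of $\x$. The other is checking that the $\delta$ blow-up by group privacy still leaves the success bound under $0.4$. The constants are tight here, so we have to keep the exact factor $\lceil R\rceil e^{2}\le 2e^2/\epsilon$ rather than a loose $10/\epsilon$.
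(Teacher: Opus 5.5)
Your proposal is correct and follows the paper's own argument, which simply reuses the proof of Corollary~\ref{coro:mst-lb}: set $R = 1/\epsilon$, invoke Theorem~\ref{thm:hc-reconstruction}, use group privacy to get a $(2,\cdot)$-DP attack, and contradict Theorem~\ref{lem:reconst}. Your remarks that the subsampling of $E'$ is data-independent and that the dataset has length $m$ are sound, and your $\delta$-blowup $\lceil R\rceil e^{\epsilon\lceil R\rceil}\delta \le 2e^2\delta/\epsilon$ is the correct one; the paper's $\frac{10}{\epsilon}\delta$ actually fails for $\epsilon$ just below $1$, but your resulting bound $\approx 0.398 < 0.4$ still closes the argument.
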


Corollary~\ref{coro:hc-lb} is tightest when $F$ is a $d$-regular spectral expander graph where $\phi(F) \geq \Omega(nd)$. 
In this case, the lower bound of $\Omega\big(\frac{n^2}{\epsilon}\big)$ is the same as that in~\citet{deng2025pricedifferentialprivacyhierarchical} (for the complete graph only) and is tight with the upper bound in~\citet{pmlr-v202-imola23a} up to $\log(n)$ factors.
Observe that our lower bound argument can be applied to any \emph{subgraph} of $F$ by just ignoring edges and vertices, and thus it is possible to remove a small number of high-degree nodes and derive a stronger lower bound on a subgraph.

We now prove Theorem~\ref{thm:hc-reconstruction} and therefore need some supporting lemmas. 
First, we will show the optimal cost of clustering the encoded graph is just $\cO(n \cdot \log(n))$ with high probability. This comes from the fact that the components in $E'$ w.h.p. will have size $\cO(\log(n))$.

%% TODO for next version. Check these constants carefully.
%\begin{lemma}[Existence of a good Dasgupta cluster]
\begin{restatable}[Existence of a good Dasgupta cluster]{lemma}{goodcluster}
\label[lemma]{lem:good-cluster}
Suppose $F = (V,E)$ is a graph and let $\alpha \in (0,1)$ be a constant. 
Now, assign weights $w(e) := 1$ with probability $\frac{1-\alpha}{d_{\max}(F)}$ and $w(e) := 0$ otherwise for each edge $e \in E$ independently.
Then with probability at least $1-2n^{-3}$, there exists a clustering tree $T$, such that $\hccost{w}(T) \leq \frac{2-\alpha}{\alpha^2} \cdot 4 n \log(n)$.
\end{restatable}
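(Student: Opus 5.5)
The plan is to show that the subgraph $H=(V,E_1)$ of weight-$1$ edges, $E_1=\{e: w(e)=1\}$, is a subcritical random graph. With high probability all its connected components have size $\cO(\log n)$. We then build a tree that splits components apart at the top levels and pays only inside small components.

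\textbf{Step 1: component sizes.} Fix a vertex $v$ and explore its component in $H$ by breadth-first search. Every vertex has at most $d_{\max}(F)$ incident edges, and each is kept independently with probability $q=\frac{1-\alpha}{d_{\max}(F)}$. So the exploration is stochastically dominated by a Galton--Watson process whose offspring law is $\mathrm{Bin}(d_{\max}(F),q)$, with mean $1-\alpha<1$. Standard tail bounds for the total progeny of a subcritical branching process give
\[
\Pr{|C(v)|\ge s}\le \exp\!\big(-c\,\alpha^2 s/(2-\alpha)\big)
\]
for a suitable constant $c$. One way to get this is Chernoff applied to the event that $s$ explored vertices have at least $s-1$ total offspring. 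This event is a sum of $s\,d_{\max}$ Bernoullis with mean $(1-\alpha)s$ exceeding its mean by a factor $1/(1-\alpha)$, which yields exponent $\frac{\alpha^2}{2-\alpha}\,s$ up to constants. Taking $s=\frac{2-\alpha}{\alpha^2}\cdot 4\log n$ (up to the constant in the exponent) makes this at most $n^{-4}$. A union bound over the $n$ choices of $v$ then shows that, with probability at least $1-2n^{-3}$, every component has size at most $s$.

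\textbf{Step 2: the tree.} Let $C_1,\dots,C_r$ be the components of $H$. First build an arbitrary binary tree on each $C_j$. Then combine these $r$ subtrees by an arbitrary binary tree at the top. The top-level merges separate no weight-$1$ edge, since every such edge lies inside some $C_j$. For any edge $\{u,v\}\in E_1$ inside $C_j$, we have $|T[u\vee v]|\le |C_j|\le s$. Hence
\[
\hccost{w}(T)\le s\cdot |E_1| .
\]

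\textbf{Step 3: counting weight-$1$ edges.} The bound $s\,|E_1|$ alone is too weak, since $|E_1|$ is about $(1-\alpha)n/2$ and $s=\Theta(\log n)$, which gives only $\cO(n\log n)$ with a worse constant. To reach the stated constant, I would instead bound $\sum_j |C_j|\,|E_1(C_j)|$. A component that is a tree has $|C_j|-1$ edges, so this sum is at most $\sum_j |C_j|^2$, up to excess edges. Since $\sum_j |C_j|=n$ and each $|C_j|\le s$, we get $\sum_j|C_j|^2\le s\,n$. Excess edges within components of size at most $s$ can be absorbed either by the same branching-process bound applied to the edge count, or by a factor-$2$ slack in the constant.

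\textbf{Main obstacle.} I expect the difficulty to be Step 1 with the exact constant $\frac{2-\alpha}{\alpha^2}$. The Chernoff exponent must be tuned carefully, and the union bound has to leave failure probability $2n^{-3}$. The edge-versus-vertex accounting in Step 3 matters only for constants.
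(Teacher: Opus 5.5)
Your Steps~1 and~2 follow the paper's argument. The paper also dominates the BFS exploration by a process with $\mathrm{Bin}(d_{\max}(F),p)$ increments. It applies Chernoff to $\mathrm{Bin}(t\,d_{\max}(F),p)$ against $t$, takes $t=\frac{2-\alpha}{\alpha^2}\cdot 4\ln n$, union-bounds over the $n$ start vertices, and bounds the cost by $\max_j|C_j|\cdot|E_1|$. The Step~1 constant you flag as the main obstacle is exactly $1$: with $\mu=(1-\alpha)s$ and $\gamma=\alpha/(1-\alpha)$, the bound $\exp(-\gamma^2\mu/(2+\gamma))$ equals $\exp(-\alpha^2 s/(2-\alpha))$.

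The gap is in how you finish. You discard $s\,|E_1|$ as too weak, but that is a miscalculation. Since $|E|\le n\,d_{\max}(F)/2$, we have $\mathbb{E}|E_1|\le (1-\alpha)n/2$, so $s\,|E_1|$ is typically about $s(1-\alpha)n/2$, which is \emph{below} the target $sn$, not above it. The only missing ingredient is a concentration bound for $|E_1|$. A Chernoff bound gives $|E_1|\le n$ with probability $1-e^{-\Omega(n)}$. A union bound with Step~1 (failure $n\cdot n^{-4}=n^{-3}$) then gives total failure probability at most $n^{-3}+e^{-\Omega(n)}\le 2n^{-3}$ for large $n$. This is exactly how the paper concludes.

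Your Step~3 does not supply this ingredient. Write $|E_1(C_j)|=|C_j|-1+\mathrm{exc}_j$ and use $|C_j|\le s$. Then $\sum_j|C_j|\,|E_1(C_j)|\le s(n-r)+s\sum_j\mathrm{exc}_j=s(n-r)+s(|E_1|-n+r)=s\,|E_1|$, the same quantity as in Step~2. So ``absorbing the excess edges'' still needs a global bound on $|E_1|$, equivalently total excess at most $r$. The appeal to a branching-process bound on edge counts is never carried out. There is also no factor-$2$ slack to spend: the target is exactly $sn$, and your tree-edge term alone is bounded only by $s(n-r)$. Replace Step~3 with the one-line Chernoff bound on $|E_1|$ and the proof goes through.
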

\ifdefined\PODS The proof appears in the full version of the paper~\cite{todo:arxiv-version}. \else
The proof appears in Appendix~\ref{app:good-cluster}. \fi The number of mistakes made in the set $I$ is given by $\sum_{i \in I} x_i = |E_1'(A,B)|$. We use the previous bound on the optimal clustering cost to convert the additive error of $\hc$ into an absolute bound on $|E_1'(A,B)|$.

\begin{lemma}\label[lemma]{lem:hc-cut}
    Suppose Algorithm~\ref{alg:ra-hc} is instantiated with an algorithm $\cA_{\hc}$ with expected additive error 
    at most $C_F Rn$ for a constant $C_F$ (that may depend on $n, F$).
    Then, the balanced cut $(A,B)$ returned by $\mathsf{HCToBC}(T)$ has expected size $\EX[\abs{E_1'(A,B)}] \leq 2 C_F + 50 \log(n)$. 
\end{lemma}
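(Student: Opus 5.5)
The plan is to combine three ingredients: \cref{lem:good-cluster} to bound the optimal cost, the error guarantee of $\cA_{\hc}$ to bound the cost of the returned tree $T$, and \cref{lem:hc-to-bc} to turn that cost into a bound on the weight of the induced balanced cut.

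\textbf{Optimal cost.} The weights produced by Algorithm~\ref{alg:ra-hc} take the value $R$ on each edge independently with probability $p\cdot\frac12=\frac{1}{4d_{\max}(F)}$. This is because each edge is sampled with probability $\frac{0.5}{d_{\max}(F)}$ and then carries an independent uniform bit. Dividing all weights by $R$, we are in the setting of \cref{lem:good-cluster} with $1-\alpha = 1/4$, i.e.\ $\alpha = 3/4$. Let $\cE$ be the good event of that lemma. On $\cE$ we have
\[
\hccost{w}(T^*) \leq R\cdot\tfrac{2-3/4}{(3/4)^2}\cdot 4n\log n = R\cdot\tfrac{80}{9}\, n\log n ,
\]
and $\Pr{\neg\cE}\le 2n^{-3}$.

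\textbf{Cost of the returned tree.} The additive error guarantee gives
\[
\EX[\hccost{w}(T)] \le \EX[\hccost{w}(T^*)] + C_F R n .
\]

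\textbf{From tree cost to cut size.} By \cref{lem:hc-to-bc}, the induced cut satisfies
\[
w(A,B)\le \tfrac{27}{4n^3}\,|A||B|\,\hccost{w}(T).
\]
Since $|A||B|\le n^2/4$, this gives $w(A,B)\le \tfrac{27}{16n}\hccost{w}(T)$. Every edge of $E_1'(A,B)$ has weight $R$ and every other edge has weight $0$, so $w(A,B)=R\,|E_1'(A,B)|$. Hence
\[
|E_1'(A,B)| \le \frac{27}{16 nR}\,\hccost{w}(T).
\]

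\textbf{Taking expectations.} The error term contributes $\frac{27}{16}C_F\le 2C_F$. The optimum term splits according to $\cE$:
\begin{itemize}
\item On $\cE$, it contributes at most $\frac{27}{16}\cdot\frac{80}{9}\log n = 15\log n$.
\item On $\neg\cE$, I would not go through the cost at all. Instead I use the trivial bound $|E_1'(A,B)|\le |E|\le n^2$, which contributes at most $2n^{-3}\cdot n^2=o(1)$.
\end{itemize}
To do this cleanly, write
\[
\EX\big[|E_1'(A,B)|\big] \le \EX\big[|E_1'(A,B)|\,\mathbb 1[\cE]\big] + n^2\Pr{\neg\cE}.
\]
For the first term, bound $\hccost{w}(T)\,\mathbb 1[\cE] \le \hccost{w}(T^*) + (\hccost{w}(T)-\hccost{w}(T^*))$. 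The additive error is nonnegative, so its expectation restricted to $\cE$ is at most $C_F R n$. Summing the contributions gives $2C_F + 15\log n + o(1)$, which is at most $2C_F+50\log n$.

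\textbf{Main obstacle.} The delicate step is the bookkeeping of the failure event. One has to make sure the additive-error guarantee is only used in expectation (so it cannot be conditioned on $\cE$), which is handled by the nonnegativity of the error as above. One also has to check that the bad event cannot blow up the bound, which holds since $|E_1'|\le m\le n^2$ deterministically. Everything else is constant chasing, and the constants $80/9$ and $27/16$ leave ample slack below $50$.
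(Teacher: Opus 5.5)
Your proposal is correct and follows essentially the same route as the paper: \cref{lem:good-cluster} bounds the optimum, the additive-error guarantee bounds the cost of $T$, and \cref{lem:hc-to-bc} with $|A||B|\le n^2/4$ and $w(A,B)=R\,|E_1'(A,B)|$ turns this into a bound on $|E_1'(A,B)|$. The differences are only bookkeeping: the paper handles the failure event by bounding $\EX[\hccost{w}(T^*)]$ with the worst-case cost $n^3R$, so the error guarantee is used only through the tower property and never restricted to $\cE$; it also applies \cref{lem:good-cluster} with the looser $\alpha=1/2$, whereas your $\alpha=3/4$ matches the true probability $p/2$ that an edge receives weight $R$.
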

\begin{proof}
    By the additive error guarantee of $\hc$, we have for fixed weights $w$,
    $\Ex{\hccost{w}(T)|w} \leq \hccost{w}(T^*) + C_F R n$.
    Recall that $R > 0$ is the weight, that we put on the subsampled edges that encode an one of the dataset.
    Note that in \cref{alg:ra-hc}, we use $\alpha = 0.5$ and then $\frac{2-\alpha}{\alpha^2} = 6$.
    By Lemma~\ref{lem:good-cluster}, with probability at least $1-2n^{-3}$, we have that $w$ has an optimal clustering cost of $(6 \cdot 4) \cdot  R n \log n$. 
    In the worst case, the cost of clustering a complete graph is $n^3 R$.
    Thus, we can use conditional expectation to show $\Ex{\hccost{w}(T)} \leq 24 n \log(n) + 2 n^{-3} n^3R + C_F R n \leq 25 R n \log(n) + C_F R n$. 
    Furthermore, by \cref{lem:hc-to-bc}, the balanced cut $A,B$ satisfies
    \begin{align*}
        w(A,B) \leq \frac{27}{4n^3} \lvert A \rvert \lvert B \rvert \hccost{w}(T) \leq \frac{27}{16 n} \hccost{w}(T)\,,
    \end{align*}
    because $\abs{A}\abs{B}$ maximizes in the interval $(\frac{n}{3}, \frac{2n}{3})$ exactly if both have size exactly  $\frac{n}{2}$.
    The last step is to observe that $w(A,B) = R \cdot  \abs{E_1'(A,B)}$ due to the way the data is embedded.
    Hence, we get the following in expectation,
    \begin{align*}
        R\cdot \Ex{\abs{E_1'(A, B)}} &= \Ex{w(A,B)} \leq \frac{27}{16 n} \Ex{\hccost{w}(T)} \leq \frac{27}{16 n} \left( C_F R n + 25 Rn\log n \right)\\
        & \leq R (2C_F + 50 \log n).
    \end{align*}
    Dividing by $R$ yields the result.
\end{proof}
Next, we need to show that $|I|$ is sufficiently large, as this will show that $I$ is a successful reconstruction set. As $\abs{I} = \lvert E_0'(A,B)\rvert + \lvert E_1'(A,B)\rvert$, and we have already upper bounded $|E_1'(A,B)|$, we will lower bound $\lvert E_0'(A,B)\rvert$. Precisely, we lower bound $\Ex{\lvert E_0'(A,B)\rvert \middle\vert E_1'}$ since conditioning on $E_1'$ allows us to consider $A,B$ as a constant per Observation~\ref{obs:independence}. Under this conditioning, the edges in $E_0'$ are distributed multinomially, allowing us to do a straightforward Chernoff bound.
Find the proof for the \cref{lem:cut-lb,lem:bc-mult-shrink} in \Cref{app:cut-proofs}.

\begin{restatable}{lemma}{cutlb}\label[lemma]{lem:cut-lb}
    Suppose $\x$ and $E'$ are sampled as they are in Algorithm~\ref{alg:ra-hc}. Then,
    for any balanced cut $A_0,B_0$, we have that $\Ex{\lvert E_0'(A_0,B_0)\rvert \middle\vert E_1'} \geq \frac{p}{2}\lvert \tilde E(A_0,B_0)\rvert$ where $\tilde E = E \setminus E_1'$.
\end{restatable}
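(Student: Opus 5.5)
The plan is to compute the conditional distribution of $E_0'$ given $E_1'$ edge by edge, and then sum over the edges crossing the fixed cut $(A_0,B_0)$. The sampling in Algorithm~\ref{alg:ra-hc} is i.i.d. per edge. Each $e \in E$ is placed in $E'$ with probability $p$. If it is placed, its label carries a uniform bit, since $\x$ is uniform and $\ell$ is a bijection onto $[k]$, so conditionally on $E'$ the bits $x_{\ell(e)}$ are i.i.d. uniform. Hence every edge independently falls into exactly one of three classes: $E_1'$ with probability $p/2$, $E_0'$ with probability $p/2$, and $E\setminus E'$ with probability $1-p$.

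First I will make this rigorous by noting that the joint law of $(\mathbb{1}[e\in E_0'],\mathbb{1}[e\in E_1'])_{e\in E}$ is a product over edges of these three-outcome distributions. Conditioning on $E_1'$ therefore fixes which edges are in $E_1'$. The remaining edges $\tilde E = E\setminus E_1'$ are then, independently, in $E_0'$ with probability
\[
\frac{p/2}{1-p/2}\;\ge\;\frac{p}{2}.
\]
Second, for the fixed cut $(A_0,B_0)$, I will use linearity of expectation:
\[
\Ex{\lvert E_0'(A_0,B_0)\rvert \,\middle\vert\, E_1'} = \sum_{e\in \tilde E(A_0,B_0)} \Pr{e\in E_0' \,\middle\vert\, E_1'} = \frac{p/2}{1-p/2}\,\lvert \tilde E(A_0,B_0)\rvert \ge \frac{p}{2}\lvert \tilde E(A_0,B_0)\rvert .
\]
Edges of $E_1'$ contribute nothing, since $E_0'\cap E_1'=\emptyset$. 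Balancedness of the cut plays no role here; only the fact that $(A_0,B_0)$ is fixed and not random matters.

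The main point of care is the first step: being precise that the labeling $\ell$ and the dataset $\x$ do not create dependence across edges. The count $k=|E'|$ is random, and $\x$ has length depending on $k$ (or is a prefix of a length-$m$ uniform vector). In either case, conditional on $E'$ the bits are i.i.d. fair, so the per-edge trichotomy is a product measure. I would state this explicitly before conditioning. There is no concentration step here; that is only needed later, when Observation~\ref{obs:independence} is combined with this lemma to replace the fixed cut by the cut $(A,B)$ that depends on $E_1'$.
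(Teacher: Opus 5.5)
Your proposal is correct and matches the paper's proof: both use the independent per-edge trichotomy with probabilities $(1-p, \tfrac{p}{2}, \tfrac{p}{2})$, condition on $E_1'$ to get conditional probability $\frac{p/2}{1-p/2} \ge \frac{p}{2}$ for each remaining edge, and apply linearity over the fixed cut. Your remark that $\x$ is independent of $E'$ and $\ell$ is a bijection, so the trichotomy is a genuine product measure, is a sensible detail that the paper states only as ``by construction.''
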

The prior bound of ${|\tilde E(A_0,B_0)|}$ is not completely independent of $E_1'$; the following result shows it is almost always bigger than $\frac{2}{3}|E(A_0,B_0)|$, giving us a simple bound to work with.

\begin{lemma}\label[lemma]{lem:bc-mult-shrink}
    Suppose $d_{\max} \geq 8$ and $\phi(F) \geq 8 n$.
    With probability $1-2^{-0.2n}$, for all balanced cuts $A_0,B_0$, we have $\abs{E_1'(A_0,B_0)} \leq \frac{1}{3} \abs{E(A_0,B_0)}$.
\end{lemma}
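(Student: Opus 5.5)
We need to prove: with probability $1-2^{-0.2n}$, every balanced cut $A_0,B_0$ satisfies $|E_1'(A_0,B_0)| \le \frac13|E(A_0,B_0)|$. Here each edge of $E$ lies in $E_1'$ independently with probability $q = p/2 = \frac{1}{4d_{\max}}$: it must be sampled into $E'$ (probability $p$) and its data bit must be $1$ (probability $1/2$, independent). The plan is a Chernoff bound for each fixed balanced cut, followed by a union bound over all at most $2^n$ balanced cuts.

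\textbf{Fixed cut.} Fix a balanced cut $(A_0,B_0)$ and let $c = |E(A_0,B_0)| \ge \phi(F) \ge 8n$. Then $|E_1'(A_0,B_0)| \sim \Bin(c, q)$ with mean $\mu = \frac{c}{4d_{\max}} \le \frac{c}{32}$, using $d_{\max}\ge 8$. The target threshold is $c/3$, which is at least $\frac{32}{3}\mu > 10\mu$. I would use the standard large-deviation form of the Chernoff bound, $\Pr{X \ge t} \le 2^{-t}$ for $t \ge 6\mu$. This gives
\[
\Pr{|E_1'(A_0,B_0)| > c/3} \le 2^{-c/3} \le 2^{-8n/3}.
\]
Alternatively, the binomial tail $\binom{c}{c/3}q^{c/3} \le (3eq)^{c/3} \le (3e/32)^{c/3}$ gives the same conclusion, since $3e/32 < 1/2$.

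\textbf{Union bound.} There are at most $2^n$ cuts, so the failure probability is at most $2^n \cdot 2^{-8n/3} = 2^{-5n/3} \le 2^{-0.2n}$.

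\textbf{Main obstacle.} The only delicate point is checking that the threshold $c/3$ sits far enough above the mean for the per-cut bound to be exponential in $c \ge 8n$ with a large enough constant to absorb the $2^n$ factor from the union bound. The hypothesis $\phi(F)\ge 8n$ supplies exactly this: it lets us bound the tail as $2^{-\Omega(c)} \le 2^{-\Omega(n)}$ uniformly over all balanced cuts, rather than needing to count cuts by size as in Karger's argument. I would double-check two things: that the Chernoff variant used applies to a threshold ratio of about $10$, which it does since $10\mu \ge 6\mu$; and that the independence of sampling and data bits makes the edge indicators for membership in $E_1'$ i.i.d.\ Bernoulli$(q)$.
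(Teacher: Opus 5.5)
Your proposal is correct and takes the same approach as the paper: a per-cut binomial tail bound for $|E_1'(A_0,B_0)|\sim\mathrm{Bin}(|E(A_0,B_0)|,p/2)$ using $|E(A_0,B_0)|\ge\phi(F)\ge 8n$, followed by a union bound over at most $2^n$ balanced cuts. The only difference is the Chernoff variant. The paper uses the Bernstein-type form with $t=m/4$ to get $e^{-m/8}\le e^{-n}$ per cut and hence $2^n e^{-n}\le 2^{-0.2n}$, while your $2^{-t}$ form (valid for $t\ge 6\mu$) gives the slightly stronger total bound $2^{-5n/3}$.
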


We are now ready to put it all together and prove our main reconstruction lower bound for HC.
\main*
\begin{proof}
Observe that, based on the construction of the set $I$ and $y$, we have $\sum_{i \in I}\mathbb{1}[x_i \neq y_i] = E_1'(A,B)$ and $|I| = E_1'(A,B) + E_0'(A,B)$. We will prove that $9 \EX[E_1'(A,B)] \leq \EX[E_0'(A,B)]$, as this implies $\Ex{\sum_{i \in I}\mathbb{1}[x_i \neq y_i]} \leq 0.1 \Ex{\abs{I}}$, meaning the attack is $0.4$ successful.
Because $\hc$ has additive error $\frac{\phi(F)R}{400 d_{\max}(F)}n$, by \cref{lem:hc-cut}, we have $\EX[|E_1'(A,B)|] \leq \frac{\phi(F)}{200 d_{\max}(F)} + 50 \log(n)$.
By assumption, we know that $50 \log(n) \leq \frac{\phi(F)}{200 d_{\max}(F)}$, and so $\EX[|E_1'(A,B)|] \leq \frac{\phi(F)}{100 d_{\max}(F)}$.
By the law of total expectation, we have $\EX[\abs{E_0'(A,B)}] = \EX[\EX[\abs{E_0'(A,B)} \vert E_1']]$. 
We can bound the inner quantity,
    \begin{align*}
        \EX[\abs{E_0'(A,B)} \vert E_1'] &= \EX_{A,B}\left[\EX\left[\lvert E_0'(A,B)\rvert \middle \vert \,A,B,E_1' \right]\right] & \; \\
        &\geq \min_{A_0,B_0 \text{ balanced cut}}\EX\left[\lvert E_0'(A_0,B_0)\rvert \middle \vert A=A_0,B=B_0, E_1'\right] & \; \\
        &= \min_{A_0,B_0 \text{ balanced cut}}\EX\left[\lvert E_0'(A_0,B_0)\rvert \middle \vert E_1'\right] & \text{(by Observation~\ref{obs:independence})} \\
        &\geq \frac{p}{2}\min_{A_0,B_0 \text{ balanced cut}} \lvert (E \setminus E_1')(A_0,B_0)\rvert. & \text{(by Lemma~\ref{lem:cut-lb})}.
        %&= \frac{p}{2}\left(\phi(G) - \frac{mp}{2}\right).
    \end{align*}
    Now, we will expand the whole expectation. In the following, let $\mathcal{E}$ be the event in Lemma~\ref{lem:bc-mult-shrink}; i.e. that for all balanced cuts $A_0,B_0$, we have $|E_1'(A_0,B_0)| \leq \frac{1}{3} |E(A_0,B_0)|$.
    \begin{align*}
        \EX[\abs{E_0'(A,B)}] &\geq \EX\left[\frac{p}{2}\min_{A_0,B_0 \text{ balanced cut}} \lvert (E \setminus E_1')(A_0,B_0)\rvert\right] \\
        &\geq \Pr{\mathcal{E}}\EX\left[\frac{p}{2}\min_{A_0,B_0 \text{ balanced cut}} \lvert (E \setminus E_1')(A_0,B_0)\rvert\middle \vert \,\mathcal{E}\right] \\
        &\geq (1-2^{-0.2n})\frac{p}{2}\EX\left[\min_{A_0,B_0 \text{ balanced cut}} \frac{2}{3}\lvert E(A_0,B_0)\rvert\middle \vert \,\mathcal{E}\right] & \text{(by Lemma~\ref{lem:bc-mult-shrink})}\\
        &> \frac{p}{4} \phi(G) = \dfrac{\phi(F)}{8d_{\max}}.
    \end{align*}
   Putting it all together, recalling that $\EX[|E_1'(A,B)|] \leq \frac{\phi(F)}{100\, d_{\max}(F)}$ and $\EX[|E_0'(A,B)|] > \frac{\phi(F)}{8\, d_{\max}(F)}$, we obtain $9 \EX[\abs{E_1'(A,B)}] \leq \frac{9\,\phi(F)}{100\, d_{\max}(F)} < \frac{\phi(F)}{8\, d_{\max}(F)} < \EX[\abs{E_0'(A,B)}]$. \qedhere
\end{proof}

%% file: content/algorithms.tex
\section{Private Algorithms with Tight Error}
We now detail algorithms with superior error on MST and MWPM that appear in ongoing work~\citep{aamand-personal-communication} demonstrating that the lower bounds of Sections~\ref{sec:mst-wc} and~\ref{sec:mwpm} are tight.
The algorithm is a simple input privatization, adding Laplace noise to each of the edge weights independently. 
The novelty lies in the analysis: instead of bounding the maximum noise added to individual edge weights as in prior work~\citep{Sealfon_2016,hladik_2024,pagh_2025}, it  uses a concentration bound on the total noise of any single spanning tree and then applies a union bound.
We can phrase it even more general: 
any graph problem that outputs $k$ edges can achieve additive error $\cO\big(\frac{k}{\epsilon} \cdot \ln\PAREN{\frac{m}{k}}\big)$, because one can bound $\binom{m}{k} \leq (\frac{em}{k})^k$.
In the cases of MSTs and MWPMs where $k= \Theta(n)$, this gives an upper bound with a logarithmic dependence of $\ln\big(\frac{m}{n}\big)$.
\ifdefined\PODS A proof of this theorem appears in the full version~\cite{}. \else The proof appears in \Cref{app:ub-mst}. \fi

\begin{restatable}[{\cite{aamand-personal-communication}}]{theorem}{ubmst}\label{th:ub-mst}
    Let $G = (V, E, \vec{w})$ be a graph and let $\cS(G)$ be a family of solutions such that every $S \in \cS(G)$ satisfies $S \subseteq E$ and $|S| = k$.
    Consider the problem  of minimizing  $w(S)$ among all $S \in \cS(G)$.
    Then there exists an $\epsilon$-DP algorithm $\cA$ that outputs a feasible solution $S \in \cS(G)$, such that with probability at least $1-n^{-\Omega(1)}$,
    \[
        w(S) - w(S^*) \leq \cO\PAREN{\frac{k}{\epsilon} \cdot \ln\PAREN{\frac{m}{k}}},
    \]
    where $S^* = \arg \min\limits_{\tilde S \in \cS(G)} w(\tilde S)$ is any optimal solution.
%    There exists an $(\epsilon, \delta)$-DP algorithm $\cA$ that, on input $G = (V, E)$ with $m/n \geq e$, and weights $w : E \to \R_{\geq 0}$, outputs a subset of edges $S \subseteq E$ of exactly $0 < k < m$ edges, such that, with probability at least $1- n^{-\Omega(1)}$,
\end{restatable}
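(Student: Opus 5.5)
The plan is input perturbation followed by a union bound over solutions. The algorithm $\cA$ draws independent noise $Z_e \sim \lap(1/\epsilon)$ for every $e \in E$ and sets $\tilde w_e := w_e + Z_e$. It then outputs $S := \arg\min_{\tilde S \in \cS(G)} \tilde w(\tilde S)$, computed non-privately. Releasing $\tilde{\vec{w}}$ is $\epsilon$-DP under the $\ell_1$ neighboring relation, because the Laplace mechanism with scale $1/\epsilon$ is $\epsilon$-DP for a query of $\ell_1$ sensitivity $1$, and the identity map on $\vec{w}$ has sensitivity $\norm{\vec{w}-\vec{w}'}_1 \leq 1$. Computing $S$ is post-processing, so the output is also $\epsilon$-DP. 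Since $S$ is chosen from $\cS(G)$, it is feasible.

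For utility, write $Z(S') := \sum_{e \in S'} Z_e$. Optimality of $S$ for $\tilde w$ gives $w(S) + Z(S) \leq w(S^*) + Z(S^*)$, and therefore
\[ w(S) - w(S^*) \leq Z(S^*) - Z(S) \leq 2\max_{S' \subseteq E,\, |S'| = k} \abs{Z(S')}. \]
The maximum ranges over all $k$-subsets of $E$, not just over feasible solutions, so it does not depend on the structure of $\cS(G)$. It therefore suffices to show that, with probability $1-n^{-\Omega(1)}$, every $k$-subset $S'$ satisfies $\abs{Z(S')} \leq C\frac{k}{\epsilon}\ln\frac{m}{k}$. I would not bound each $Z_e$ separately, since that loses a factor $\ln m$ per edge.

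Fix $S'$. The variable $\epsilon Z(S')$ is a sum of $k$ i.i.d. standard Laplace variables. Each one has moment generating function $\EX[e^{\lambda Z}] = 1/(1-\lambda^2)$ for $\abs{\lambda}<1$. Taking $\lambda = 1/2$ and applying Markov's inequality gives
\[ \Pr{\epsilon\abs{Z(S')} \geq t} \leq 2\,(4/3)^k e^{-t/2}, \]
a subexponential tail of the form $\exp(O(k) - t/2)$. The number of $k$-subsets is $\binom{m}{k} \leq (em/k)^k = \exp\big(k\ln(em/k)\big)$. Set $t := 2k\ln(em/k) + O(k) + 2s$. A union bound then shows that all $k$-subsets satisfy $\epsilon\abs{Z(S')} < t$ except with probability at most $2e^{-s}$. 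Choosing $s = \Theta(\ln n)$ gives failure probability $n^{-\Omega(1)}$, and the resulting error is
\[ w(S) - w(S^*) \leq \frac{2t}{\epsilon} = \cO\PAREN{\frac{k\ln(m/k) + k + \ln n}{\epsilon}}. \]

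The main obstacle is absorbing the additive $k + \ln n$ into $k\ln(m/k)$. The $k$ term is absorbed once $m/k$ is at least a constant larger than $e$; this holds in the intended applications ($k=\Theta(n)$, $m>2n$), and otherwise one reads $\ln(m/k)$ as $\max(1,\ln(m/k))$. The $\ln n$ term is absorbed whenever $k \geq \ln n$, which holds for MST ($k=n-1$) and MWPM ($k=n/2$), the cases the theorem is used for. For very small $k$, the $\ln n$ term would have to be folded into the $n^{-\Omega(1)}$ probability trade-off instead. I would state these side conditions explicitly and keep track of the constants there.
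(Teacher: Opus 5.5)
Your proposal is correct and follows the paper's proof essentially step for step: Laplace input perturbation plus post-processing, the inequality $w(S)-w(S^*)\le 2\max_{S'}|Z(S')|$, a subexponential tail bound for each fixed $k$-set, and a union bound over at most $(em/k)^k$ sets. The one difference is that you derive the tail directly from the Laplace moment generating function at $\lambda=1/2$, where the paper cites Bernstein's inequality for sub-exponential variables. Your explicit tracking of the additive $k+\ln n$ terms is more careful than the paper's sketch, which implicitly needs $m/k$ bounded away from $1$. Your worry about small $k$ is also unnecessary: for all $k\le m/e$ one has $k\ln(m/k)\ge \ln m>\ln n$, using the paper's standing assumption $m>n$.
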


%% file: content/conclusion.tex
\section{Conclusion and Future Work}
%We have shown lower bounds for MST, MWPM, and HC. 
We first showed that for MST and MWPM there exist worst-case topologies where the error is tight with the upper bounds. 
We also proved lower bounds for MST and HC on large classes of topologies which are close to the upper bounds, off by just a $\log(n)$ factor. 

For MWPM, it remains open whether lower bounds can be derived for a large class of topologies, and what specific property that class should satisfy. 
For HC, closing the gap between the best lower bound of $\Omega\big(\frac{n^2}{\epsilon}\big)$ and upper bound of $\cO\big(\frac{n^2 \log(n)}{\epsilon}\big)$ is an interesting question, as well as deriving a polynomial-time algorithm that achieves this additive error (as opposed to using the exponential mechanism).
Finally, we believe that both our lower bound techniques for obtaining tight bounds and bounds on general topologies are likely to carry over to other graph problems as well.

%% file: content/appendix-full.tex
\section{Additional Details}\label{app:details}

\begin{figure}[t]
    \centering
     \resizebox{0.5\columnwidth}{!}{
          \input{fig/clustering-example.tikz}}
    \caption{\textit{An example of a hierarchical clustering $T$ with $\hccost{w}(T) = 12$ for a graph $G = (V, E)$ assuming unit weights.
    The edge $e$ has cost $2$ as the lowest common ancestor $\operatorname{lca}(v_3, v_4)$ has exactly two leaves in its subtree of $T$.
    Furthermore, cutting the tree at $\operatorname{lca}(v_3, v_4)$ would give a balanced cut.
    }}
\label{fig:dsexamples}
\end{figure}

In Figure~\ref{fig:dsexamples}, we illustrate the Dasgupta cost function, $\hccost{w}(T)$, for a given hierarchical clustering tree $T$.

\section{Omitted Proofs}

\subsection{Proof of Lemma~\ref{lem:reidentification}}

\reident*
\begin{proof}
Let $n \in \mathbb{N}$.
We index the elements in $\cX$ by $\{1, \cdots, n\}$.
For some arbitrarily chosen $i\in [d]$, assume a random vector $\X = \Xleqi  \oplus X_i \oplus \Xgeqi \sim \operatorname{Uni}\PAREN{[n]^d}$ where $\oplus$ denotes vector concatenation, and we use the subscript $\Xleqi$ and $\Xgeqi$ to split the vector at index $i$. 
Now we can bound the probability that the $i$'th coordinate of the output of $\cB(\X)$ leaks its corresponding input $x_i$:

\begin{align}
\Pr{\cB(\X)_i = x_i} &= 
\dfrac{1}{{|\cX|}^{d-1}}\sum_{\Xleqi\in [n]^{i-1}} 
\sum_{\Xgeqi \in [n]^{d-i}}\left(\Pru{X_i \sim \operatorname{Uni}\PAREN{[n]}}{\cB(\Xleqi \oplus x_i \oplus \Xgeqi)_i = x_i}\right)\nonumber\\
& \leq {|\cX|}^{1-d}\sum_{\Xleqi\in [n]^{i-1}} \label{line:enumerate}
\sum_{\Xgeqi \in [n]^{d-i}}\left(e^\epsilon \Pru{x_i \sim \operatorname{Uni}\PAREN{[n]}}{\cB(\Xleqi \oplus (1) \oplus \Xgeqi)_i= x_i} + \delta\right)\\
& \leq {|\cX|}^{1-d}\sum_{\Xleqi\in [n]^{i-1}}  \sum_{\Xgeqi \in [n]^{d-i}}\left(e^\epsilon \dfrac{1}{n} + \delta\right)\label{line:fix}\\
&\leq \frac{e^\epsilon}{|\cX|}+ \delta\,.\nonumber
\end{align}
In line~\ref{line:enumerate}, we use the fact that each $X_i$ is i.i.d. and in step \ref{line:fix}, we use the privacy guarantees of the mechanism $B$ and flip to a neighboring dataset where we explicitly set $X_i = 1$ under the hamming adjacency relation.
\end{proof}

\subsection{Proof of Theorem~\ref{lem:reconst}}
First, we start with a supporting lemma.
\begin{lemma}\label{lem:approx-dp}
    Suppose $P,Q$ are distributions on a discrete set $\cX$ satisfying 
    \begin{align*}
    \Pru{X \sim P}{X \in S} \leq e^{\epsilon} \Pru{Y \sim Q}{Y \in S} + \delta \text{~for all~} S \subseteq \mathcal{X}\,.
    \end{align*}

    Then, there exists a function $\delta(x) : \cX \rightarrow [0, \infty)$ satisfying 
    \begin{enumerate}
        %\item $P(x) \leq e^\epsilon Q(x) + \delta(x)$\,, and 
        \item $\Pru{X \sim P}{X = x} \leq e^\epsilon \Pru{X \sim Q}{X = x} + \delta(x)$\,, and \label{condone}
        \item $\sum_{x \in \cX} \delta(x) \leq \delta$\,.\label{condtwo}
    \end{enumerate}
\end{lemma}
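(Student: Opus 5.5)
We define $\delta(x)$ directly as the pointwise excess of $P$ over the scaled $Q$:
\[
\delta(x) := \max\bigl(0,\ \Pru{X\sim P}{X=x} - e^{\epsilon}\Pru{X\sim Q}{X=x}\bigr).
\]
This is nonnegative by construction, and condition~\ref{condone} holds trivially for every $x$. If the excess is negative, $\delta(x)=0$ and the inequality holds. Otherwise $\delta(x)$ exactly closes the gap.

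The only substantive step is condition~\ref{condtwo}. For it we choose a single test set, the ``bad'' set
\[
S^* := \{x \in \cX : \Pru{X\sim P}{X=x} > e^\epsilon \Pru{X\sim Q}{X=x}\}.
\]
Since $\delta(x)=0$ outside $S^*$, and $\cX$ is discrete, countable additivity gives
\[
\sum_{x\in\cX}\delta(x) = \sum_{x\in S^*}\delta(x) = \Pru{X\sim P}{X\in S^*} - e^\epsilon \Pru{Y\sim Q}{Y\in S^*}.
\]
Applying the hypothesis with $S=S^*$ bounds the right-hand side by $\delta$, which is condition~\ref{condtwo}.

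No real obstacle is expected. The one point needing care is that the sum over a possibly countably infinite $\cX$ splits as above. This is justified because both $P$ and $Q$ are probability measures on a discrete set, so all the series converge absolutely and we may rearrange and subtract termwise. Measurability of $S^*$ is automatic in the discrete setting.
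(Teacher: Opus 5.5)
Your proposal is correct and follows the paper's proof essentially line for line: the paper also sets $\delta(x)=\max\{P(x)-e^\epsilon Q(x),0\}$ and applies the hypothesis to the set $N$ where $P(x)-e^\epsilon Q(x)\ge 0$. Using a strict inequality in your $S^*$ changes nothing, since the boundary points contribute zero, and your remark on absolute convergence for countably infinite domains is a small justification the paper leaves implicit.
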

\begin{proof}
    We denote $P(x) = \Pru{X \sim P}{X = x}$ (and similar for $Q$).

    We take the function $\delta(x) = \max\{P(x) - e^\epsilon Q(x), 0\}$. 
    Condition~\ref{condone} is satisfied because $P(x) = e^\epsilon Q(x) + (P(x) - e^\epsilon Q(x)) \leq e^\epsilon Q(x) + \delta(x)$. 
    
    For condition~\ref{condtwo}, let $N \subseteq \mathcal{X}$ denote the set where $P(x) - e^\epsilon Q(x) \geq 0$. On $x \in N$, we have $\delta(x) = P(x) - e^\epsilon Q(x)$, and on $x \in \mathcal{X} \setminus N$, we have $\delta(x) = 0$.
    Thus,
    \begin{align*}
        \sum_{x \in \cX} \delta(x) 
        =  \sum_{x \in N} \delta(x) = \sum_{x \in N} P(x) - e^\epsilon Q(x) 
        = \Pru{X \sim P}{X \in N} - e^\epsilon \Pru{Y \sim Q}{Y \in N} \leq \delta\,.
    \end{align*}
\end{proof}

\reconst*
\begin{proof}\label{proof:subsetrec} It suffices to show that
    \begin{align*}
    \Ex{\sum_{i \in I} \mathbb{1}[x_i \neq y_i]} \geq \frac{1}{1+e^\epsilon}\Ex{\abs{I}} - \frac{1}{1+e^\epsilon}d \delta\,.
    \end{align*}
    We use conditional expectation to write
    \begin{align*}
        \Ex{\sum_{i \in I}\mathbb{1}[x_i \neq y_i] }
        = \Ex{\Ex{\sum_{i \in I}|y_i-x_i|\middle \vert I}} = \Ex{\sum_{i \in I}\Ex{\abs{y_i-x_i}\vert I}}\,.
    \end{align*}
    We will form a bound $\Ex{\abs{y_i - x_i} \vert I}$ for each $i$. 
    To do this, we expand it as follows by conditional probability,

    \begin{align*}
        \Ex{|y_i-x_i| \vert I } &= \Pr{x_i = 1, y_i = 0 \vert  I} + \Pr{x_i = 0, y_i = 1 \vert I} \\ 
        &= \frac{\Pr{x_i = 1, y_i = 0, I} + \Pr{x_i=0, y_i=1, I}}{\Pr{I}}.
    \end{align*}
    In the following, a sum over the variable $x$ implicitly ranges over $\{0,1\}^n$, and a sum over $y$ implicitly ranges over $\{0,1\}^I$. We can write 
    \begin{align*}
        \Pr{x_i=a, y_i=b, I} = \sum_{\substack{x:x_i=a \\ y:y_i=b}} 2^{-n} \Pr{\mathcal{A}(x) = (I,y)} = 2^{-n} \sum_{x:x_i=a} \Pr{\mathcal{A}(x) \in (I, \mathcal{Y}_{i,b})},
    \end{align*}
    where $\mathcal{Y}_{i,b} = \{y \in \{0,1\}^I : y_i = b\}$.
    For a vector $x \in \{0,1\}^n$, let $x^{(i \rightarrow 0)}$ and $x^{(i \rightarrow 1)}$ denote $x$ with its $i$th bit set to $0$ (resp. 1).
    Because each $\mathcal{A}(x^{(i\rightarrow 0)})$ and $\mathcal{A}(x^{(i \rightarrow 1)})$ satisfy $(\epsilon, \delta)$-DP, by Lemma~\ref{lem:approx-dp}, there exists a non-negative function $\delta_{x\uparrow i}(I,b)$ such that 
    \[
    \Pr{\mathcal{A}(x^{(i \rightarrow 0)}) \in (I,\mathcal{Y}_{i,b})} \leq e^\epsilon\Pr{\mathcal{A}(x^{(i \rightarrow 1)}) \in (I,\mathcal{Y}_{i,b})} + \delta_{x\uparrow i}(I,b),
    \]
    and $\sum_{I \subseteq [n]} \sum_{b \in \{0,1\}} \delta_{x\uparrow i}(I,b) \leq \delta$. 
    Similarly, there exists a non-negative function $\delta_{x \downarrow i}(I,b)$ such that 
    \[
    \Pr{\mathcal{A}(x^{(i \rightarrow 1)}) \in (I,\mathcal{Y}_{i,b})} \leq e^\epsilon \Pr{\mathcal{A}(x^{(i \rightarrow 0)}) \in (I,\mathcal{Y}_{i,b})} + \delta_{x \downarrow i}(I,b)
    \]
    and $\sum_{I \subseteq [n]} \sum_{b \in \{0,1\}} \delta_{x\downarrow i}(I,b) \leq \delta$.
    By summing the first inequality over $x \in \{0,1\}^n$ with $x_i = 0$ (and the second over $x$ with $x_i = 1$), we obtain
    \begin{align*}
        \Pr{x_i=0, y_i=0,I} &= 2^{-n}\sum_{x:x_i = 0} \Pr{\mathcal{A}(x) \in (I, \mathcal{Y}_{i,0})} \\
        &\leq 2^{-n}\sum_{x : x_i=1} \PAREN{e^\epsilon\Pr{\mathcal{A}(x) \in (I, \mathcal{Y}_{i,0})} + \delta_{x \uparrow i}(I,0)} \\
        &= e^\epsilon \Pr{x_i=1, y_i=0, I} + \delta_{i,1}^\uparrow(I,0),
    \end{align*}
    where $\delta_{i,a}^\uparrow(I,b)$ is defined to be $2^{-n} \sum_{x:x_i=a} \delta_{x \uparrow i}(I,b)$. Rearranging, we have
    \[
        \Pr{x_i=1, y_i=0, I} \geq e^{-\epsilon} \Pr{x_i=0, y_i=0, I} - e^{-\epsilon} \delta_{i,1}^\uparrow(I,0).
    \]
    Similarly, we can write
    \[
        \Pr{x_i = 0, y_i = 1, I} \geq e^{-\epsilon}\Pr{x_i=1, y_i=1, I} - e^{-\epsilon} \delta_{i,0}^\downarrow (I,1),
    \]
    where $\delta_{i,a}^\downarrow(I,b) = 2^{-n}\sum_{x:x_i=a}\delta_{x \downarrow i}(I,b)$. This means that
    \begin{align*}
        &\Pr{x_i = 1, y_i = 0, I} + \Pr{x_i=0, y_i=1, I} \\ 
        &= \frac{1}{1+e^\epsilon} (\Pr{x_i = 1, y_i = 0, I} + \Pr{x_i=0, y_i=1, I}) \\
        &\qquad+ \frac{e^\epsilon}{1+e^\epsilon} (\Pr{x_i = 1, y_i = 0, I} + \Pr{x_i=0, y_i=1, I}) \\
        &\geq \frac{1}{1+e^\epsilon} (\Pr{x_i = 1, y_i = 0, I} + \Pr{x_i=0, y_i=1, I}) \\
        &\qquad+ \frac{e^\epsilon}{1+e^\epsilon} (e^{-\epsilon} \Pr{x_i=0, y_i=0, I} - e^{-\epsilon} \delta_{i,1}^\uparrow(I,0) + e^{-\epsilon}\Pr{x_i=1, y_i=1, I} - e^{-\epsilon} \delta_{i,0}^\downarrow (I,1)) \\
        &= \frac{1}{1+e^\epsilon} (\Pr{x_i = 1, y_i = 0, I} + \Pr{x_i=0, y_i=1, I}) \\
        &\qquad+ \frac{1}{1+e^\epsilon} ( \Pr{x_i=0, y_i=0, I} - \delta_{i,1}^\uparrow(I,0) + \Pr{x_i=1, y_i=1, I} - \delta_{i,0}^\downarrow (I,1)) \\
        &= \frac{1}{1+e^\epsilon} \Pr{I} - \frac{1}{1+e^\epsilon} (\delta_{i,1}^\uparrow(I,0) + \delta_{i,0}^\downarrow (I,1)),
    \end{align*}
    where the last step follows from the law of total probability. Thus,
    \begin{align*}
        \Ex{|y_i-x_i|\middle \vert I } &= \frac{\Pr{x_i = 1, y_i = 0, I} + \Pr{x_i=0, y_i=1, I}}{\Pr{I}} \\
        &\geq \frac{1}{1+e^\epsilon} - \frac{\delta_{i,1}^\uparrow(I,0) + \delta_{i,0}^\downarrow (I,1)}{(1+e^\epsilon)\Pr{I}}.
    \end{align*}
    
    Finally, we bound the entire expectation as
    \begin{align}
        &\EX\left[\sum_{i\in I} \EX[|y_i-x_i| \vert I] \right] \geq \EX\left[\sum_{i \in I} \left(\frac{1}{1+e^\epsilon} - \frac{\delta_{i,1}^\uparrow(I,0) + \delta_{i,0}^\downarrow (I,1)}{(1+e^\epsilon)\Pr{I}}\right)\right] \nonumber \\
        &\qquad = \sum_{I \subseteq [n]}\Pr{I}\sum_{i \in I} \left(\frac{1}{1+e^\epsilon} - \frac{\delta_{i,1}^\uparrow(I,0) + \delta_{i,0}^\downarrow (I,1)}{(1+e^\epsilon)\Pr{I}}\right) \nonumber \\
        &\qquad= \frac{1}{1+e^\epsilon} \sum_{I \subseteq [n]} \Pr{I} \cdot |I| - \frac{1}{1+e^\epsilon} \sum_{I \subseteq [n]}\Pr{I} \sum_{i \in I} \frac{\delta_{i,1}^\uparrow(I,0) + \delta_{i,0}^\downarrow (I,1)}{\Pr{I}} \nonumber \\
        &\qquad = \frac{1}{1+e^\epsilon} \EX[|I|] - \frac{1}{1+e^\epsilon} \sum_{I \subseteq [n]}\sum_{i \in I} \delta_{i,1}^\uparrow(I,0) + \delta_{i,0}^\downarrow (I,1)\,. \label{eq:rcp3}
    \end{align}

    The second term simplifies to
    \begin{align*}
        &\sum_{I \subseteq [n]}\sum_{i \in I} \delta_{i,1}^\uparrow(I,0) + \delta_{i,0}^\downarrow (I,1) \\
        &\quad= \sum_{I \subseteq [n]}\sum_{i \in I} 2^{-n}\left(\sum_{x : x_i=1}\delta_{x \uparrow i}(I,0) + \sum_{x:x_i=0} \delta_{x\,\downarrow\, i}(I,1)\right) \\
        &\quad= \sum_{i=1}^n \sum_{I \subseteq [n]} \mathbb{1}[i \in I] \cdot 2^{-n}\left(\sum_{x : x_i=1}\delta_{x \uparrow i}(I,0) + \sum_{x:x_i=0} \delta_{x\,\downarrow\, i}(I,1)\right) \\
        &\quad\leq \sum_{i=1}^n \sum_{I \subseteq [n]} 2^{-n}\left(\sum_{x : x_i=1}\delta_{x \uparrow i}(I,0) + \sum_{x:x_i=0} \delta_{x\,\downarrow\, i}(I,1)\right) \\
        &\quad= \sum_{i=1}^n 2^{-n}\left(\sum_{x : x_i=1}\sum_{I \subseteq [n]}\delta_{x \uparrow i}(I,0) + \sum_{x:x_i=0} \sum_{I \subseteq [n]} \delta_{x\,\downarrow\, i}(I,1)\right) \\
        &\quad\leq \sum_{i=1}^n 2^{-n}\left(\sum_{x : x_i=1}\delta + \sum_{x:x_i=0} \delta \right) = \sum_{i=1}^n \delta = n\delta.
        \end{align*}
    Plugging back into \eqref{eq:rcp3}, we obtain
    \begin{align*}
        \EX\left[\sum_{i\in I} \EX[|y_i-x_i| \vert I] \right] &\geq \frac{\EX|I|}{1+e^\epsilon} - \frac{1}{1+e^\epsilon}n \delta,
    \end{align*}
    giving the answer.
\end{proof}

\begin{comment}
\paragraph{Probability Theory.} 
We recall the multinomial distribution and a standard conditioning identity that we use later.
\begin{definition}[Multinomial Distribution]
A random vector $(X_1,\dots,X_k)$ has multinomial distribution 
$\mnom(n,\mathbf{p})$
with parameters $n$ and probability vector $\vec{p}$ if the joint probability satisfies 
${\mathbb{P}(X_1=x_1,\dots,X_k=x_k) =  \frac{n!}{x_1!\cdots x_k!}\prod_{i=1}^k p_i^{x_i}}$ for $\sum_{i=1}^k x_i = n$ and
$0$ otherwise
\end{definition}
\begin{restatable}{lemma}{multinom}\label{lem:cond-mnom}
Let $(X_1,\cdots,X_k)\sim \mnom(n,\mathbf{p})$, where $\vec{p}$ is a probability vector. 
Then for any $j\in[k]$ and $m \in \{0, \cdots, n\}$,
\[
(X_1,\cdots,X_{j-1},X_{j+1},\cdots,X_k)\,\big|\, (X_j=m)
\sim
\mnom\left(
n-m,\,\frac{(p_1,\cdots, p_{j-1}, p_{j+1}, \cdots, p_k)}{1-p_j}
\right).
\]
\end{restatable}

\multinom*
\begin{proof}\label{proof:multinom}
By definition of conditional probability and the fact that $\sum_{i \neq j}x_i = n - m$, we have,
\begin{align*}
\Pr{ X_i = x_i, \text{~for all~} i \neq j \mid X_j = m} 
&= \dfrac{n!/m!\PAREN{\prod_{i\neq j}\frac{ p_i^{x_i}}{x_i!}}p_j^m}{{n \choose m}(1-p_j)^{n-m}p_j^{m}} 
= \dfrac{(n-m)!\PAREN{\prod_{i\neq j}\frac{p_i^{x_i}}{x_i!}}}{(1-p_j)^{n-m}} \\
&=\dfrac{(n-m)!\prod_{i \neq j}\PAREN{\frac{p_i}{1-p_j}}^{x_i}}{\prod_{i \neq j}x_i!}\,.
\end{align*}
\end{proof}
\end{comment}

\subsection{Proof of \Cref{lem:mst-add-err}}\label{app:mst-add-err}

    Let $E_0 \subseteq E$ denote the embedded edges where $x_{\ell(e)} = 0$.
    By the embedding procedure, we know $E_0$ is a Bernoulli subsample of $E$ with probability $\frac{1}{2}$. 
    It suffices to show that all cuts in the subgraph induced by $E_0$ are non-empty, as this will mean it is possible to construct a spanning tree entirely in $E_0$. 
    In other words, we show that each cut in the graph has at least one zero-edge sampled with high probability, which, by a standard argument, implies that there must exist a zero-cost spanning tree.
    Now let $\cE$ denote the \emph{complement} event; i.e., that there exists a cut $A,B$ such that $E_0(A,B) = 0$. 
    We will upper bound $\Pr{\cE}$ by a union bound over all cuts. 
    Let $\lambda$ denote the minimum cut size in $(V,E)$, and let $N_i$ denote the number of cuts $A,B$ such that $E(A,B) = i$. The probability that a cut with value $i$ has no edges sampled is given by $2^{-i}$. By the union bound, we have
    \begin{align*}
        \Pr{\mathcal{E}} \leq \sum_{i=\lambda}^{|E|} 2^{-i} N_i\,.
    \end{align*}

    To bound this, we use~\cite[Theorem 6.2]{karger1993global} which shows that $N_i \leq n^{\lceil 2 i / \lambda \rceil} \leq n \cdot n^{2i / \lambda}$. Plugging this in, we have by a geometric sum
    \begin{align*}
        \sum_{i=\lambda}^{|E|} 2^{-i} N_i \leq \sum_{i=\lambda}^{\infty} 2^{-i} n \cdot n^{2i/\lambda} \leq n \sum_{i=\lambda}^{\infty} \left(\frac{n^{2/\lambda}}{2}\right)^{i} &=  n\frac{\left(\frac{n^{2/\lambda}}{2}\right)^{\lambda }}{1 - \left(\frac{n^{2/\lambda}}{2}\right)} =\frac{n^{3}}{\PAREN{1 - \frac{1}{2}\left(n^{2/\lambda}\right)}\cdot 2^{\lambda}}.
    \end{align*}
    Using $\lambda \geq 5 \log(n)$, it holds that $n^3 2^{-\lambda} < \frac{1}{n^2}$, and $\frac{1}{2} \cdot n^{2/\lambda} \leq \frac{2^{2/5}}{2} < \frac{2}{3}$, so the denominator $(1-\frac{1}{2}n^{2/\lambda}) \geq \frac{1}{3}$.
    Thus, the probability of failure is at most $\frac{3}{n^2}$. \qed

\subsection{Proof of Lemma~\ref{lem:constant-fraction}}\label{app:constant-fraction}

\constantfraction*
\begin{proof}
Define $\tilde{X}_i=\mathbb{1}\left[\exists\, j\neq i: X_j=X_i\right]$ and let \(Y=\sum_{i=1}^d\tilde{X}_i\). 
By a union bound, \(\mathbb{E}[\tilde{X}_i]\le\sum_{j\neq i}\Pr{X_j=X_i}=(d-1)/n\), so that $\mathbb{E}[Y]\le d(d-1)/n = \alpha^2 n-\alpha\le\alpha^2 n$.
Since changing a single $X_i$ affects at most two of the \(\tilde{X}_i\)'s, \(Y\) is 2-Lipschitz. By McDiarmid's inequality \citep{McDiarmid_1989}, for any \(t>0\), $\Pr{Y\ge\mathbb{E}[Y]+t}\le\exp(-2t^2/(4d))$.
Setting $t=\mathbb{E}[Y]$ gives $\Pr{Y\ge2\mathbb{E}[Y]}\le\exp(-\mathbb{E}[Y]^2/(2d))=\exp(-\Omega(\alpha^3 n))$.
Thus, with probability at least \(1-\exp(-\Omega(\alpha^3 n))\), we have $Y\le2\mathbb{E}[Y]=\cO(\alpha^2 n)$.
\end{proof}

\subsection{Proof of Theorem~\ref{th:lb-matching}}\label{apx:proof-lb-mwpm}

\lbmatching*
Fix some constant $0 < \alpha \leq 1$ and let $R$ again be a parameter that controls the weights.
Now let $\Aenc_R:[n]^d\rightarrow \cG_\omega$ be constructed as follows.
Let $\vec{x}\in [n]^d$ be some vector of length  $d = \alpha n$ and $0<\alpha\leq1$.
Then the encoder creates a complete bipartite graph $G = (V_1\cup V_2, E, \vec{w})$ with bipartitions $V_1=\{\ell_1, \cdots, \ell_n\}$ and $V_2 = \{r_1, \cdots, r_n\}$.
Each of the first $d$ vertices  $\ell_i \in V_1$ represents a single coordinate of $\vec{x}$. 
For each $i \in [d]$, we set $w(\{\ell_i, r_{x_j}\}) := 0$ and
$w(\{\ell_i, r_{j}\}) := R$  for $j \neq x_i$.
%$r_j\in V_2$ we set $w(\{\ell_i, r_{x_i}\}) := 0$ to $R$ otherwise.
We treat the remaining $(1-\alpha)n$ vertices in $V_1$ as dummy nodes that simply get zero-weight edges to all vertices in $V_2$.
% \paragraph{Encoding} Let $\Aenc_R:[n]^n \rightarrow \cG_\omega$ be the encoding function that takes some dataset  $\vec{X} \in [n]^n$ and encodes them into the MST of a weighted graph $\cG_\omega$.
% The parameter $R>1$ is used to control the edge weights.
% We construct a new connected graph $G = (V_1 \cup V_2, E, \vec{w})$ with $2n$ vertices and $n^2+(n-1)$ edges in the following way: 
Furthermore, let $\Adec_G:\cP(G)\rightarrow [n]^d$ and $\Rec_G:[n]^d \rightarrow [n]^d$ be analogously defined as in \cref{ch:enc-mst}.
%Note that the decoding will always give us some permutation over $[n]$.
Find this construction also in \cref{fig:encode-as-graph}.

Using this encoding, the optimal solution is determined by the number of collisions $C$ in $\vec{x}$ and is not necessarily $0$ as before.
By \cref{lem:collision-free}, we can make $C$ arbitrarily small with probability exponentially small in $n$. 
The key observation is that there are still $\Theta(n)$ many coordinates that do not collide.
Every non-colliding coordinate can be recovered from a correct edge in some perfect matching $M$. 
Hence, we get this observation. 

\begin{observation}\label{obs:accuracy-match}
Let $G = (F, \vec{w})$ be obtained from $\Aenc_R(\x)$ and let $C = \left|\{i \in [d]| \exists j \neq i, x_i = x_j\}\right|$ be the number of colliding coordinates.
If $M \in \cP(F)$ is any perfect matching, then
\begin{align*}
  d_H(\Adec_F(M),\x) \le \frac{w(M)-w(M^*)}{R} + C,
  \end{align*}
where $M^*$ is an optimal MWPM of $G$.
\end{observation}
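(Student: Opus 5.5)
The plan is to charge every decoding error either to a weight-$R$ edge of $M$ or to a colliding coordinate, and then to compare $w(M)$ with $w(M^*)$. The first step is to pin down $w(M^*)$. Let $U\subseteq[d]$ be the set of non-colliding coordinates, so $|U| = d-C$. The values $\{x_i : i\in U\}$ are pairwise distinct, and they are also distinct from the values taken by colliding coordinates. We build a perfect matching as follows. Match each $\ell_i$ with $i\in U$ to $r_{x_i}$ at weight $0$. For each value $v$ attained by some colliding coordinates, match one of those coordinates to $r_v$ at weight $0$. Match the remaining colliding coordinates arbitrarily to unused right vertices, which costs at most $R$ each. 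Finally, match the dummy vertices $\ell_{d+1},\dots,\ell_n$ to the leftover right vertices at weight $0$; this is possible because every dummy is adjacent to all of $V_2$ with weight $0$. The result is a perfect matching of weight at most $RC$, so $w(M^*)\le RC$. A matching lower bound $w(M^*)\ge 0$ is trivial.

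The second step is to bound the errors of an arbitrary perfect matching $M$. For each $i\in[d]$, $\Adec_F(M)_i = j$ where $\{\ell_i,r_j\}\in M$, and this $j$ is unique because $M$ is a perfect matching. If $\Adec_F(M)_i\neq x_i$, then the edge $\{\ell_i,r_j\}$ has weight $R$. These edges are distinct for distinct $i$, so
\[
d_H(\Adec_F(M),\x)\le \frac{w(M)}{R}.
\]

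The third step combines the two bounds. This is where care is needed, since the naive combination gives
\[
\frac{w(M)}{R}\le \frac{w(M)-w(M^*)}{R} + \frac{w(M^*)}{R} \le \frac{w(M)-w(M^*)}{R} + C,
\]
using $w(M^*)\le RC$ from the first step. This is exactly the claimed inequality. The main (mild) obstacle is therefore the explicit construction of a perfect matching of cost at most $RC$. That construction needs the dummy vertices to absorb the unmatched right vertices (which holds because $d\le n$ and dummies are fully connected at weight $0$). It also needs the observation that colliding coordinates not assigned to their own value can be matched to some remaining right vertex, which holds because the bipartite graph is complete. If one prefers a sharper count, only $C$ minus the number of distinct collided values pay $R$, but the bound $C$ suffices.
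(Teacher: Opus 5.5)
Your proposal is correct and takes essentially the same route as the paper. You charge each decoding error to a distinct weight-$R$ edge of $M$, as in the MST observation, and you bound $w(M^*)\le RC$ by exhibiting a perfect matching in which only colliding coordinates pay $R$. The paper only gestures at this (``every non-colliding coordinate can be recovered from a correct edge in some perfect matching''), so your explicit construction, with the dummy vertices absorbing the leftover right vertices, fills in the detail it leaves implicit.
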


We now have all the ingredients ready to prove the theorem.
    The proof is similar to \cref{th:lb-mst}.
%    , but this time we only focus on a linear-sized collision free subset of the coordinates that by \cref{lem:collision-free} exists with high probability.
    First, draw $\X \sim \operatorname{Uni}([n]^{\alpha n})$ and fix a small constant $0<\alpha\leq 1$. 
    WLOG, assume that $\alpha n$ is an integer.
    Then $C \leq \alpha^2n$ with probability $1-\exp(-\Omega(\alpha^3 n))$ by \cref{lem:constant-fraction}.

    Suppose for contradiction that there exists an $(\epsilon, \delta)$-DP MWPM algorithm $\cB$ with expected additive error $\Ex{w(M) - w(M^*)} < 0.01\, \alpha n R$.
    Then by \cref{obs:accuracy-match}, $\Ex{d_H(\Adec_G(M), \X)} \leq 0.01\, \alpha n + \alpha^2 n$.
    So $\cB$ correctly recovers at least $\alpha n - 0.01\, \alpha n - \alpha^2 n = \alpha n (0.99 - \alpha)$ coordinates in expectation.
%    Then, by observation \cref{obs:accuracy-match}, we would leak more than $\alpha n (0.99 - \alpha)$ of the input coordinates in expectation. \lukasnote{Check this again.}
    
    Observe that for some fixed $R>1$, \cref{lem:induced-hamming-neighborhood} holds again, and the induced $\ell_1$-distance of two encoded graphs differs by at most $2\lceil R \rceil$ if the encoded vectors are Hamming neighbors.
    Hence, by \cref{lem:reidentification}, 
    \begin{align*}
    \Pr{\Rec_{\cA, R}(\X)_i = x_i} & \leq \frac{e^{2\lceil R\rceil\epsilon}}{n} + 2\lceil R\rceil e^{2\lceil R\rceil\epsilon}\delta\\
    & \leq \frac{1}{n} \exp\left(\ln(n^{1/c})\right) + \frac{\ln(n)}{\epsilon c} \exp\left(\ln n^{1/c}\right) \delta \\
    & = n^{(1-c)/c} + \frac{\ln(n)}{\epsilon c} n^{1/c} \delta\,.
    \end{align*}
Assuming $\delta = n^{-\Omega(1)}$, the second term is $o(1/n)$ and vanishes.
Using linearity of expectation over all $i \in [\alpha n]$,
\begin{align*}
\Ex{\sum_{i \in [\alpha n]}\mathbb{1}\left[\Rec_{\cA, R}(\vec{x})_i = x_i\right]} \leq   \PAREN{n^{(1-c)/c}\cdot \alpha n} = \alpha \cdot n^{1/c}\,,
\end{align*}
For sufficiently large $c>1$, this is sublinear in $n$, contradicting the assumed utility guarantee of $\Theta(\alpha n)$.
Finally, to obtain a bound that also depends on the number of edges, we use the same decomposition technique as in the proof of \cref{th:lb-mst} with simplification that we don't need to connect the subgraphs.

\subsection{Proof of \Cref{lem:good-cluster}}\label{app:good-cluster}

First, we will show that the connected components in the sampled edges $E'$ have size $\cO(\log n)$. 
\begin{restatable}[Largest components]{lemma}{expanderOpt}\label[lemma]{lem:small-comp}
    Suppose $F = (V, E)$ is a graph and let $H$ be the subgraph obtained by keeping each edge independently with probability $p \leq \frac{1-\alpha}{d_{\max}(F)}$ for some $\alpha \in (0,1)$. Then the maximal size of a connected component in $H$ is $\frac{2-\alpha}{\alpha^2} 4\log n$ with probability at least $1 - n^{-3}$.
\end{restatable}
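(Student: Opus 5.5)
We bound component sizes by comparing the exploration of a component in $H$ with a subcritical branching process, and then take a union bound over all starting vertices. Fix a vertex $v$ and run a breadth-first exploration of its component in $H$. When we process a vertex $u$, we reveal the still-unrevealed edges of $F$ incident to $u$. There are at most $d_{\max}(F)$ such edges, and each is kept independently with probability $p$. The number of newly discovered vertices is therefore stochastically dominated by $\Bin(d_{\max}(F), p)$, whose mean is $\mu := p\, d_{\max}(F) \le 1-\alpha$. Because each edge is revealed at most once, these offspring counts are independent across processed vertices. So $|C(v)|$ is dominated by the total progeny of a Galton--Watson process with offspring law $\Bin(d_{\max}(F),p)$, and the event $|C(v)| \ge k$ forces the first $k$ processed vertices to have discovered at least $k-1$ new vertices between them.

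Concretely, let $S_k = \sum_{j=1}^{k} Z_j$ with $Z_j$ i.i.d.\ $\Bin(d_{\max}(F), p)$. Then
\[
\Pr{|C(v)| \ge k} \le \Pr{S_k \ge k-1}.
\]
Here $S_k$ is dominated by a single binomial $\Bin(k\, d_{\max}(F), p)$ with mean at most $(1-\alpha)k$. I will apply the multiplicative Chernoff bound with deviation parameter $\eta$ defined by $(1+\eta)(1-\alpha) = 1$, so $\eta = \frac{\alpha}{1-\alpha}$, and use the form
\[
\Pr{X \ge (1+\eta)\mu} \le \exp\!\left(-\frac{\eta^2 \mu}{2+\eta}\right).
\]
Substituting gives $\frac{\eta^2\mu}{2+\eta} = \frac{\alpha^2 k}{2-\alpha}$ after simplifying $\eta^2(1-\alpha)/(2+\eta)$. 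This is exactly where the constant $\frac{2-\alpha}{\alpha^2}$ in the statement comes from. Two small adjustments remain. The threshold is $k-1$ rather than $k$, which I absorb by enlarging $k$ by one or by using the monotonicity of the bound in $\mu$ (a smaller $\mu$ only helps). The logarithm in the statement may be base $2$, which only improves the constant.

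Setting $k = \frac{2-\alpha}{\alpha^2}\, 4 \log n$ gives $\Pr{|C(v)| \ge k} \le \exp(-4\log n) \le n^{-4}$. A union bound over the $n$ starting vertices then shows that the maximum component size is below $k$ with probability at least $1-n^{-3}$.

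The main obstacle is making the domination rigorous. The increments in the exploration are not identically distributed, since vertices have varying numbers of unrevealed edges, and they depend on the history of the exploration. I would handle this with a coupling: pad each processed vertex with dummy edges so that it always has exactly $d_{\max}(F)$ independent $\mathrm{Bernoulli}(p)$ trials. This makes the real progeny at most $S_k$ pathwise, and the Chernoff computation above then applies directly.
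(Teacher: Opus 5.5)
Your proposal is correct and follows essentially the same route as the paper: a BFS exploration whose per-step offspring are dominated, via a padding coupling, by i.i.d.\ $\Bin(d_{\max}(F),p)$ variables; the Chernoff bound $\exp(-\gamma^2\mu/(2+\gamma))$ applied to $\Bin(k\,d_{\max}(F),p)$ with mean at most $(1-\alpha)k$, which produces the constant $\frac{2-\alpha}{\alpha^2}$; and a union bound over the $n$ start vertices. The one detail to tidy is the off-by-one, which neither ``enlarging $k$ by one'' in your $\Pr{S_k\ge k-1}$ formulation nor monotonicity in $\mu$ quite fixes; instead note that $|C(v)|\ge k+1$ forces the first $k$ processed vertices to discover at least $k$ new vertices, so $\Pr{|C(v)|\ge k+1}\le\Pr{\Bin(k\,d_{\max}(F),p)\ge k}\le n^{-4}$ exactly (the paper's identity $\Pr{|C_v|\ge t}=\Pr{X_t\ge 1}$ glosses over the same shift).
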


We first introduce the following definitions.
\begin{definition}[Stochastic domination]
A random variable $X$ is \emph{stochastically dominated} by a random variable $Y$, if $\Pr{X \geq a} \leq \Pr{Y \geq a}$ for all $a$.
A coupling of $X$ and $Y$ is a random vector $(\tilde X, \tilde Y)$ such that the marginal distributions coincide with $X$ and $Y$ respectively.
\end{definition}

The following theorem due to \citet{Strassen1965} shows that stochastic dominance is equal to the existence of a monotone coupling.
%\url{https://people.math.wisc.edu/~roch/mdp/roch-mdp-chap4.pdf?utm_source=chatgpt.com}
\begin{lemma}[\cite{Strassen1965}]\label{lem:coupling}
A random variable $X$ stochastically dominates a random variable $Y$, if and only if there exists a (monotone) coupling $(\tilde X, \tilde Y)$ of $X$ and $Y$ such that $\Pr{\tilde X \geq \tilde Y} = 1$.
\end{lemma}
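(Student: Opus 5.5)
We prove the two directions separately. The backward direction is immediate from the coupling. The forward direction uses the standard \emph{quantile coupling}, in which both variables are built from a single uniform random variable.

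\emph{Backward direction.} Suppose $(\tilde X, \tilde Y)$ is a coupling with $\Pr{\tilde X \geq \tilde Y} = 1$. Fix any threshold $a$. Up to a null set, the event $\{\tilde Y \geq a\}$ is contained in $\{\tilde X \geq a\}$. Since the marginals of the coupling agree with those of $X$ and $Y$, we get $\Pr{Y \geq a} = \Pr{\tilde Y \geq a} \leq \Pr{\tilde X \geq a} = \Pr{X \geq a}$, which is exactly stochastic dominance.

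\emph{Forward direction.} Let $F_X(t) = \Pr{X \leq t}$ and $F_Y(t) = \Pr{Y \leq t}$ be the distribution functions.
\begin{enumerate}
    \item First we turn the dominance hypothesis, which is stated with non-strict tail events, into an inequality between distribution functions. Write $\{X > a\} = \bigcup_{k} \{X \geq a + 1/k\}$. By continuity of measure, $\Pr{X > a} = \lim_k \Pr{X \geq a+1/k} \geq \lim_k \Pr{Y \geq a + 1/k} = \Pr{Y > a}$. Hence $F_X(a) \leq F_Y(a)$ for every $a$.
    \item Next we define the generalized inverses $F_X^{-1}(u) = \inf\{t : F_X(t) \geq u\}$ and $F_Y^{-1}(u) = \inf\{t : F_Y(t) \geq u\}$ for $u \in (0,1)$. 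Draw a single $U \sim \operatorname{Uni}(0,1)$ and set $\tilde X = F_X^{-1}(U)$ and $\tilde Y = F_Y^{-1}(U)$.
    \item To check the marginals, we use the standard equivalence $F^{-1}(u) \leq t \iff u \leq F(t)$. This follows from right-continuity and monotonicity of $F$, because the infimum in the definition of $F^{-1}$ is attained. It gives $\Pr{\tilde X \leq t} = \Pr{U \leq F_X(t)} = F_X(t)$, so $\tilde X$ has the law of $X$. The same argument shows $\tilde Y$ has the law of $Y$.
    \item For monotonicity, step 1 implies $\{t : F_X(t) \geq u\} \subseteq \{t : F_Y(t) \geq u\}$. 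Taking infima gives $F_X^{-1}(u) \geq F_Y^{-1}(u)$ for every $u \in (0,1)$. Therefore $\tilde X \geq \tilde Y$ holds surely, and in particular $\Pr{\tilde X \geq \tilde Y} = 1$.
\end{enumerate}

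We expect the main obstacle to be purely technical. One part is the passage from the ``$\geq a$'' form of dominance used in the definition to the CDF inequality $F_X \leq F_Y$, which is handled by the limiting argument in step 1. The other part is verifying the generalized-inverse equivalence in step 3, which is where right-continuity of distribution functions is needed.
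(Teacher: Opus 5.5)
Your proof is correct. The limiting argument in step~1 properly converts the paper's ``$\geq a$'' form of dominance into $F_X \le F_Y$. The quantile equivalence in step~3 holds because, for $u \in (0,1)$, the set $\{t : F(t) \ge u\}$ is nonempty and bounded below, so its infimum is a real number, and by right-continuity that infimum belongs to the set. Step~4 follows directly from the set inclusion.

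The paper gives no argument for this statement and simply cites \citet{Strassen1965}. Strassen's theorem is far more general: it gives couplings supported on a closed order relation for laws on ordered Polish spaces, and its general proof goes through functional-analytic duality rather than an explicit construction. Your quantile coupling works only for real-valued variables. It is, however, explicit and elementary, and that is all the paper needs, since the variables in Lemma~\ref{lem:small-comp} are integer-valued.

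In fact, the paper only ever uses the easy direction. In the proof of Lemma~\ref{lem:small-comp} it builds the coupling $\tilde X_t \le \tilde Y_t$ by hand and deduces $\Pr{X_t \geq 1} \le \Pr{Y_t \geq 1}$. That is exactly your two-line backward argument, so the forward direction is never invoked there.
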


Now, we are ready to prove Lemma~\ref{lem:small-comp}.
\begin{proof}
We prove this result assuming that the graph is $d$-regular, so $d_{\max}(F) = d$. The proof for a general graph is similar.
Let $\alpha \in (0,1)$, set $p:= \frac{1-\alpha}{d}$ and let $C_v$ denote the connected component of a fixed vertex $v$ in $H$ and WLOG assume that our edge subsampling is exploring $C_v$ by BFS starting in $v$.
If $0 \leq m_t \leq d$ is the number of neighbors of the currently explored vertex at step $t \geq 1$ that have not already been explored, then observe that the size of the frontier conditioned on the past $\cF$, can be described as
\[
X_0 = 1\,, \qquad X_t = X_{t-1} -1 + Z_t \qquad \text{where }\,  Z_t\mid \cF_{t} \sim \Bin(m_t, p)\,,
\]
because we keep each neighbor that is not already included with probability $p$ independently.

Intuitively, cycles can only reduce the number of newly discovered vertices, so the BFS exploration is dominated by the process that simply ignores them.
Therefore, define the process
\begin{align}
    Y_0 = 1\,, \qquad Y_t = Y_{t-1} -1 + \tilde Z_t \qquad \text{where } \tilde Z_t \sim \Bin(d, p)\,,
\end{align}
We claim that for each $t \geq 1$, the random variable $Y_t$ conditional on the past, stochastically dominates $X_t$.
Therefore, we will explicitly give a coupling $(\tilde X, \tilde Y)$ of $X$ and $Y$ and apply \cref{lem:coupling}.
Set $\tilde X = X$ and draw $U_t\mid \cF_{t}\sim \Bin(d-m_t, p)$ independently of $Z_t$.
Now, let $\tilde Y_t = \tilde Y_{t-1} -1 + Z_t + U_t$ and consider the coupling $(\tilde X, \tilde Y)$.
Clearly, the marginal distributions match because $Z_t + U_t \sim \Bin(d, p)$ by the convolution of binomials.
We show by induction that for each $t\geq 0$, $\tilde X_t \leq \tilde Y_t$. 
The base cases $X_0, Y_0=1$ are trivial.
Now, observe that for $Z_t \sim \Bin(m_t, p)$, we have
\begin{align*}
   \tilde X_t = \tilde X_{t-1} - 1 + Z_t  &\leq \tilde Y_{t-1} - 1 + Z_t \\
   &\leq  \tilde Y_{t-1} - 1 + Z_t + U_t = \tilde Y_t\,,
\end{align*}
where the last line follows from the non-negativity of the binomial distribution.

%Fix any $t$ and condition on the past $t-1$ steps $\cF_{t-1}$ where $X_{t-1}$ and $m_{t-1}$ are fixed.
%We will show that for each $a$, we have $\Pr{X_t \leq a \mid \cF_{t-1}} \geq \Pr{Y_t \leq a \mid \cF_{t-1}}$ by induction.
%The base case is immediate.
%Since $m_t \leq d$. we have for all possible $a$, 
%\begin{align}
%     \Pr{X_t \leq a \mid \cF_{t-1}} &= \Pr{X_{t-1} + \Bin(m_t, p)- 1 \leq a\mid \cF_{t-1}}\nonumber \\
%     &\geq \Pr{X_{t-1} + \Bin(d, p) - 1 \leq a \mid \cF_{t-1}}\nonumber\\
%     &= \sum_{b = 0}^{d}\Pr{X_{t-1}\leq a - b + 1\mid \cF_{t-1}}\cdot \Pr{\Bin(d, p) = b} \label{line:ltp}\\
%     &\geq \sum_{b = 0}^{d}\Pr{Y_{t-1}\leq a - b + 1 \mid \cF_{t-1}}\cdot \Pr{\Bin(d, p) = b} \label{line:hyp} \\
%     &=\Pr{Y_t \leq a \mid \cF_{t-1}}\,,
% \end{align}
%where line \ref{line:ltp} follows by independence and the law of total probability, line \ref{line:hyp} by the induction hypothesis.\lukasnote{Maybe, one can also just skip the Law of Total probability part. I just want to make siure that we can seperate the randomnes because of independence.}

Hence, $Y_t$ gives a valid upper bound on how large the component grows, i.e.
\begin{align*}
   \Pr{|C_v| \geq t} &= \Pr{X_t \geq 1}
    \leq \Pr{Y_{t} \geq 1}\,.
\end{align*}

%Now it is left to analyse this alternative process $Y$.
Because $Y_t$ is a sum of $t$ binomials, by a standard convolution result, we get
\begin{align*}
Y_t  =  \sum_{k=1}^{t} \tilde Z_k - t \,,
\end{align*}
which is distributed as $Y_t \sim \Bin(td, p) - t$.
Now, let $\mu = t(1-\alpha)$.
By a standard Chernoff bound, ${\Pr{\Bin(td, p) \geq (1+\gamma)\mu} \leq \exp(-\frac{\gamma^2\mu}{2+\gamma})}$ for all $\gamma >0$.
Rearranging the term, yields
\begin{align}
\Pr{\Bin(td, p) \geq t } &\leq \exp\left(-\frac{(t-\mu)^2}{t + \mu}\right)
= \exp\left(-t\cdot \frac{\alpha^2}{2-\alpha}\right)\,.
\end{align}
To make this decrease with $\frac{1}{n^3}$, we choose $t = \frac{2-\alpha}{\alpha^2} \cdot 4 \ln(n)$.
Then we get by a union bound
\begin{align*}
   \Pr{\exists v: |C_v| \geq t} &\leq n \exp\left(- 4 \ln(n)\right) \\
   &= n^{-3}\,.
\end{align*}
Hence, with probability at least $1- n^{-3}$, no component has size larger than $\frac{2-\alpha}{\alpha^2}\cdot  4 \log n$.
\end{proof}

Now, we are ready to prove \Cref{lem:good-cluster}.
\goodcluster*
\begin{proof}
Let $H = (V, E_H)$ be the random subgraph consisting of the sampled edges, i.e. $E_H = \{e \in E \mid w(e) = 1\}$ and note that only edges in $E_H$ contribute to the Dasgupta cost in $G$.
This is equivalent to analyzing the connected components $C_1, \cdots, C_k\subseteq V$ of $H$ separately because the overall  Dasgupta cost decomposes to a sum of the cost within each $C_i$.

Now, observe that any $C_i$ can only contribute an additive term of at most $\lvert C_i \rvert \cdot m_i$ where $m_i$ is the number of edges in component $C_i$.
The worst case can be achieved if each of the $m_i$ edges has to cross the balanced cut, thus adding cost of at most $\lvert C_i \rvert$ each.

Therefore, we have for the optimal Dasgupta clustering tree $\cT$, 
\begin{align*}
    \hccost{w}(\cT) &\leq \sum_{i \in [k]} \lvert C_i \rvert \cdot m_i  \leq  \max_{i \in [k]} \lvert C_i\rvert\cdot  \sum_{i \in [k]} m_i = \max_{i \in [k]} \lvert C_i \rvert \cdot \lvert E_H \rvert\,.
\end{align*}

By \cref{lem:small-comp}, with probability at least $1-n^{-3}$, each $\lvert C_i \rvert \leq \frac{2-\alpha}{\alpha^2} \cdot 4 \log n$.
Furthermore, $\lvert E_H \rvert \sim \Bin(\frac{nd}{2}, \frac{1-\alpha}{d})$ and by a Chernoff bound with probability at least $1-\exp(-\Omega(n))$, we have $\lvert E_H \rvert \leq n$.
A union bound over both high probability events yields the desired result.
\end{proof}

\subsection{Proof of Lemmas~\ref{lem:cut-lb} and~\ref{lem:bc-mult-shrink}}\label{app:cut-proofs}
\begin{proof} (Of Lemma~\ref{lem:cut-lb})
    By construction, each edge $e \in E$ is independently assigned to $E \setminus E'$, $E_0'$ and $E_1'$ with probabilities $(1-p, \frac{p}{2}, \frac{p}{2})$. Thus, conditioned on $E_1'$, each remaining edge in $e \in E \setminus E_1'$ is independently assigned to $E \setminus E'$ or $E_0'$ with probability $\big(\frac{1-p}{1-p/2}, \frac{p/2}{1-p/2}\big)$. 
    Thus,
    \begin{align*}
        \Ex{E_0'(A_0,B_0) \mid E_1'} &= \frac{p/2}{1-p/2}\lvert (E\setminus E_1')(A_0,B_0)\rvert \\
        &\geq \frac{p}{2}\lvert (E\setminus E_1')(A_0,B_0)\rvert\,. \qedhere
    \end{align*}
\end{proof}

\begin{proof}(Of Lemma~\ref{lem:bc-mult-shrink})
    $E_1'(A_0, B_0)$ is a binomial distribution on $E(A_0,B_0)$ with probability $\frac{p}{2}$. Thus, it suffices to compute a tail bound $\Pr{X \geq \frac{m}{3}}$, where $m = E(A_0,B_0)$ and $X \sim \Bin(m,\frac{p}{2})$. By Chernoff's bound, we know $\Pr{X \geq \frac{mp}{2} + t} \leq \exp(-\frac{t^2}{mp + 2t/3})$ for all $t 
    \geq 0$. 
    By assumption we know $p = \probp \leq \frac{1}{8}$. 
    Taking $t = \frac{m}{4}$ the above bound is $\Pr{X \geq \frac{m}{16} + \frac{m}{4}} \leq \exp\big(-\frac{m^2/16}{m/4 + m/6}\big) < \exp(-\frac{m}{8})$. Because $m \geq \phi(F) \geq 8n$, we have $\exp(-\frac{m}{8}) \leq \exp(-n)$, the failure probability over all $2^{n}$ balanced cuts is at most $2^n e^{-n} \leq 2^{-0.2n}$.
\end{proof}

\subsection{Proof of Theorem~\ref{th:ub-mst}}\label{app:ub-mst}
\ubmst*
    The following proof is due to~\cite{aamand-personal-communication}. 
    Consider the algorithm that adds noise $Z_e \sim \lap(1/\epsilon)$ to each edge independently.
    Denote these weights with $\vec{w}'$.
    This mechanism is $\epsilon$-DP under the $\ell_1$ neighboring relationship by a standard result \citep{Dwork2006}.
    Now let $T\subseteq \cS(G)$ be an optimal solution over the noisy weights $\vec{w}'$ and $T^*$ be an optimal over the original weights $\vec{w}$.
    Let $Z_T = \sum_{t \in T} Z_t$ be the total noise of any $T$ and let $M = \max_{T \in \cS(G)} |Z_T|$.
    Then,
    \begin{align*}
        w(T') \leq w'(T') + M  \leq w'(T^*) + M \leq w(T^*) + 2 \cdot M\,,
    \end{align*}
    by the same inequalities as used in \cite{Sealfon_2016}.
    Note that the sub-exponential norm of each of the i.i.d.~Laplace random variables $Z_e$ is $\|Z_e\|_{\psi_1} \leq \frac{2}{\epsilon}$.
    By a standard Bernstein inequality \citep{Vershynin2018}, we get for each $T$ for any $t >0$ and an absolute constant $c >0$,
    \begin{align*}
        \Pr{|Z_T| \geq t} \leq  2\exp\left(-c\,\min\left(\tfrac{\epsilon^2 t^2}{n},\; \epsilon t\right)\right),
    \end{align*}
    Now setting $t = c \cdot (k/\epsilon) \ln(m/k)$ for a sufficiently large constant $c$, the tail above becomes dominated by the linear term $\exp\big(-\Omega(n \ln(m/k))\big)$.
    Clearly,  $|\cS(G)|  \leq {n \choose k} \leq \PAREN{\frac{em}{k}}^k.$
%    \frac{1}{n}\PAREN{\frac{2m}{n-1}}^{n-1} \leq \PAREN{\frac{2m}{n-1}}^{n-1}$.
    A union bound over $\cS(G)$ yields,
    \begin{align*}
        \Pr{\max_{T \in \cS(G)} |Z_T| \geq C \dfrac{k}{\epsilon}\ln(m/k)} \leq n^{-\Omega(1)}\,,
    \end{align*}
    Combined with the chain of inequalities above, this gives $w(T') - w(T^*) \leq \cO((k/\epsilon)\ln(m/k))$ with probability at least $1 - n^{-\Omega(1)}$.

%% file: fig/clustering-example.tikz
\begin{tikzpicture}
	\begin{pgfonlayer}{nodelayer}
		\node [style=BLACKLARGE] (0) at (-3.75, 2) {};
		\node [style=BLACKLARGE] (1) at (-1.25, 1.5) {};
		\node [style=BLACKLARGE] (2) at (-3, 0) {};
		\node [style=BLACKLARGE] (3) at (-5.25, 0.25) {};
		\node [style=BLACKLARGE] (4) at (2, 0) {};
		\node [style=BLACKLARGE] (5) at (4, 0) {};
		\node [style=BLACKLARGE] (8) at (3, 1.5) {};
		\node [style=BLACKLARGE] (12) at (4.5, 2.75) {};
		\node [style=LIGHTRED] (13) at (5, 0) {};
		\node [style=LIGHTRED] (14) at (7, 0) {};
		\node [style=LIGHTRED] (15) at (6, 1.5) {};
		\node [style=none] (17) at (1.75, -1) {$v_1$};
		\node [style=none] (20) at (3.75, -1) {$v_2$};
		\node [style=none] (21) at (5, -1) {$v_3$};
		\node [style=none] (22) at (7, -1) {$v_4$};
		\node [style=none] (23) at (-5.75, 1) {$v_1$};
		\node [style=none] (24) at (-3.75, 2.75) {$v_2$};
		\node [style=none] (25) at (-2.25, -0.25) {$v_3$};
		\node [style=none] (26) at (-1.25, 2.25) {$v_4$};
		\node [style=none] (29) at (-2.5, 1) {$e$};
		\node [style=none] (30) at (0, 2.75) {};
		\node [style=none] (31) at (0, 0) {};
		\node [style=none] (32) at (7.5, 2.4) {$\operatorname{lca}(v_3, v_4)$};
		\node [style=none] (33) at (2.5, 3.25) {$\mathcal{T}$};
		\node [style=none] (34) at (-5.5, 3.25) {$G$};
	\end{pgfonlayer}
	\begin{pgfonlayer}{edgelayer}
		\draw [style=EDGE] (3) to (0);
		\draw [style=EDGE] (0) to (2);
		\draw [style=FILLLIGHTREDB] (2) to (1);
		\draw [style=EDGE] (1) to (0);
		\draw [style=EDGE] (8) to (4);
		\draw [style=EDGE] (8) to (5);
		\draw [style=EDGE] (8) to (12);
		\draw [style=FILLLIGHTREDB] (15) to (13);
		\draw [style=FILLLIGHTREDB] (15) to (14);
		\draw [style=EDGE] (15) to (12);
		\draw [style=EDGEDASHED] (30.center) to (31.center);
	\end{pgfonlayer}
\end{tikzpicture}